\documentclass[11pt]{article}
\usepackage[a4paper,margin=1in]{geometry}
\usepackage{setspace}
\usepackage[T1]{fontenc}
\usepackage[utf8]{inputenc}
\usepackage{lmodern}
\usepackage{microtype}
\usepackage{amsmath,amssymb,amsthm,amsfonts,array,mathrsfs,ifthen,mathtools}
\usepackage[round,authoryear]{natbib}
\usepackage{dsfont}
\usepackage{enumitem}
\usepackage{comment}
\usepackage{url}
\usepackage{bm}
\usepackage{graphicx}
\usepackage{xcolor}
\usepackage{pgfplots}
\pgfplotsset{compat=1.18}
\usepackage[colorinlistoftodos,textsize=tiny]{todonotes}
\usepackage{hyperref}
\definecolor{warmdarkred}{RGB}{120, 20, 30}
\definecolor{myblue}{RGB}{0,30,180}
\hypersetup{
    colorlinks=true,
    linkcolor=warmdarkred,
    urlcolor=warmdarkred,
    citecolor=warmdarkred
}
\theoremstyle{plain}
\newtheorem{theorem}{Theorem}
\newtheorem{proposition}{Proposition}
\newtheorem{lemma}{Lemma}
\newtheorem{corollary}{Corollary}
\newtheorem{assumption}{Assumption}

\newtheorem{example}{Example}

\newcommand{\E}{\mathbb E}

\newcommand{\ubar}{\underline}
\newcommand{\vbar}{\overline}

\newcommand{\D}{\Delta}

\DeclareMathOperator*{\argmax}{arg\,max}
\DeclareMathOperator*{\argmin}{arg\,min}

\title{Shadow-score auctions for execution incentives
}
\author{Federico Vaccari\\
        {\small University of Bergamo}\\
        {\small \texttt{vaccari.econ@gmail.com}}}
\date{}

\begin{document}
\maketitle

\thispagestyle{empty}

\begin{abstract}
This paper studies optimal auctions in which the allocation creates a moral hazard problem for a third-party executor. The executor chooses effort before the winner is known. In regular independent-private-values environments, the optimal mechanism is a shadow-score auction. Each bidder's virtual value is adjusted by the shadow value of relaxing the executor's incentive constraint using that bidder's allocation state. Unlike standard scoring auctions, the score is derived from a non-bidder's moral hazard constraint rather than from a preference for quality. Reserve formats miss this ranking channel, as they can adjust whether the object is sold, but cannot favor allocation states that are more useful for motivating execution effort. The paper also studies cases in which this scoring representation breaks down and the optimal mechanism becomes a constrained shadow allocation.
\end{abstract}

\bigskip

\noindent\textbf{Keywords:} Auctions, mechanism design, moral hazard, execution incentives, scoring auctions. \medskip

\noindent\textbf{JEL codes:} D44, D82, D86, L51.


\newpage

\thispagestyle{empty}

\begingroup
\hypersetup{linkcolor=black}
\tableofcontents
\endgroup

\newpage


\section{Introduction}

Standard auction design frequently focuses on allocation and payments. Once the winner is selected and payments are made, the mechanism has largely done its job. Many economically important auctions do not have this property. Assigning the object creates a task that must still be carried out. A license must be deployed, a concession monitored, an asset approved or restructured, or a platform participant screened and supervised. In these settings, the auction does more than select a winner, as it also chooses the execution problem that must be implemented.

This paper studies optimal auction design when assigning the object creates an execution task that requires non-contractible effort by a third party. I use \emph{execution} broadly, to include the implementation, monitoring, enforcement, approval, or realization of the allocation once the auction outcome is known. The executor may be a regulator, a monitoring agency, an enforcement unit, a platform compliance team, or an administrative authority. Because effort itself cannot be contracted on, incentives must be provided through bonuses paid when execution succeeds. Both the effectiveness and the expected cost of these bonuses depend on who wins the auction. The mechanism does not merely choose whom to allocate to. It also determines the environment in which the execution contract must operate.

The central question is whether this problem leaves the standard optimal auction intact. The answer is no in general. If the executor's incentive depends only on whether the object is sold, the execution constraint can be absorbed into a reserve-price adjustment. If the incentive depends on the identity of the winner, the allocation rule changes. The optimal mechanism is a shadow-score auction, where bidders are ranked by virtual values adjusted for the shadow value of relaxing the executor's incentive constraint.

The model keeps the bidder side close to the canonical independent private values (IPV) environment. There is a single indivisible object, and risk-neutral bidders with independent private values. The seller commits to an auction and to a reward schedule for the execution agent. Before values are realized and before the winner is known, the executor chooses a non-contractible effort level. Effort increases the probability of a verifiable success, but its effect and the pledgeable reward available upon success may differ among potential winners. The seller cares about auction revenue and execution surplus, net of expected execution rewards and other winner-specific costs.

The timing and nature of execution capacity are central. The baseline model captures settings in which the executor must prepare before the winner is known. A regulator may need to allocate monitoring resources, an agency may need to reserve enforcement capacity, a platform may need to prepare compliance infrastructure, or an administrative authority may need to assemble a review team. The executor makes a single general preparation effort that will be used regardless of which allocation is eventually realized. The winner still matters because the same execution capacity may be more productive, more pledgeable, or less costly under some winners than under others.

The main result characterizes the optimal auction mechanism. The analysis rewrites the seller's objective in terms of the allocation rule and promised execution rewards. The multiplier on the executor's incentive constraint then generates a bidder-specific shadow adjustment to virtual values. The optimal auction ranks bidders by their adjusted virtual values and allocates the object to the bidder with the highest nonnegative shadow score. Under regularity, the resulting allocation is monotone in each bidder's value and is implementable in dominant strategies.

This characterization clarifies the limits of familiar auction formats. In a symmetric value environment, a common-reserve auction works when execution does not create bidder-specific advantages in the allocation score. Common execution terms are one way this can happen, but they are not the only way. When execution adjustments differ across bidders, the optimal auction generally requires bidder-specific scores. Reserve prices can decide whether the object is sold. Shadow scores can also affect who among the eligible bidders wins. This ranking channel is the distinctive feature of the model, as part of the score is generated endogenously by the executor's incentive constraint.

The ranking channel is important for policy auctions and procurement. In highway procurement, for example, cost-plus-time, or $A+B$, bidding combines a contractor's price bid with its promised completion time, valued at a road-user cost per day \citep{FHWA_AB_Bidding}. More generally, best-value procurement evaluates not only price but also the bidder's ability to perform the contract successfully, including past performance and execution-related criteria \citep{FAR_15305,FAR_421501}. These practices reflect a basic economic concern. The highest monetary bid need not be the bid that creates the best execution problem. The model provides a mechanism-design foundation for such adjustments when execution requires incentives for a third-party executor.

The analysis delivers four main implications. First, the score is partly endogenous. Some bidder-specific adjustments, such as direct execution payoffs, are ordinary payoff shifters. The new feature is that the auction also creates a score adjustment through the executor's incentive constraint. This shadow adjustment depends on how useful each bidder's allocation state is for motivating execution effort, and must be determined jointly with the execution reward schedule.

Second, the auction and the execution contract must be designed jointly. Execution rewards are useful only in the allocation states in which the corresponding bidder wins, while the value of those rewards feeds back into the allocation score. Third, reserve formats generally miss the ranking channel. Common reserves change the sale threshold, and bidder-specific reserves change eligibility, but shadow scores can change pairwise rankings among eligible bidders.

Fourth, the public-primitives benchmark identifies when the optimal allocation has a reduced-form scoring representation. The later results show why that representation can fail. Private reliability limits what can be elicited. Certification determines what can be scored. Execution externalities can destroy monotonicity. Set-dependent execution replaces individual scores with set-level shadow values.

Scoring auctions are well understood in procurement and multidimensional auction design. This paper derives bidder-specific score adjustments from a moral hazard constraint faced by a non-bidder. Conditional on the optimal multiplier and public winner-specific execution primitives, the allocation is observationally equivalent to a scoring auction with bidder-specific payoff shifters. Those shifters are endogenous shadow values generated by the executor's incentive constraint and by the feasible success-contingent reward schedule. The equivalence becomes incomplete once the execution information or technology becomes richer. Private reliability, certification, report-dependent execution values, and set-dependent execution all change what can be represented by individual quality scores. The analysis provides both a contracting foundation for scoring rules and a boundary for the standard scoring interpretation.

The analysis proceeds as follows. Section~\ref{sec:model} introduces the environment. Section~\ref{sec:analysis} characterizes the optimal shadow-score auction when execution primitives are public and winner-specific, and compares ex ante execution capacity with no effort and post-win effort. Section~\ref{sec:format_restrictions} studies reserve-based formats and quantifies the loss from common reserves in a two-bidder example. Section~\ref{sec:beyond_scoring} studies environments in which the reduced-form scoring representation is no longer sufficient. Finally, Section~\ref{sec:conclusion} concludes.


\subsection{Related literature}

This paper relates first to the theory of optimal auctions. \citet{Myerson1981} characterizes revenue-maximizing mechanisms in single-object private-value environments through virtual values, and \citet{RileySamuelson1981} gives a closely related reserve price characterization. I keep the bidder side of the model within this canonical single-dimensional environment. The main departure is that the allocation also creates a moral hazard problem for a third-party executor. This adds an execution adjustment to the usual virtual-value ranking. Part of this adjustment may simply reflect direct winner-specific execution payoffs. The new feature is that the executor's incentive constraint itself generates another part. When this shadow component differs across bidders, the auction may need to adjust bidder rankings to make execution effort incentive compatible.\footnote{The standard Myerson allocation is recovered as a special case when execution does not affect either bidder rankings or the sale threshold.}

The paper also contributes to the literature on scoring and multidimensional auctions. \citet{Che1993} studies design competition through multidimensional auctions, \citet{Branco1997} studies multidimensional procurement mechanisms, and \citet{AskerCantillon2008,AskerCantillon2010} analyze scoring auctions and procurement when price and quality matter. In that literature, the score aggregates contractible bid attributes according to the buyer's preferences or an optimally chosen procurement rule. Here, the non-price component of the score has a different source, as it is the shadow value of relaxing the executor's incentive constraint rather than a quality weight.  With public execution parameters, the optimal allocation is observationally equivalent to a scoring auction with endogenous bidder-specific payoff shifters. Section~\ref{sec:beyond_scoring} shows where this reduced-form scoring interpretation breaks down.\footnote{A related use of bidder-specific handicaps appears in \citet{EsoSzentes2007}, who show that an optimal information-disclosure mechanism can be implemented by a handicap auction in some applications. In the present paper, the handicap is not chosen as part of an information-disclosure mechanism. It is the shadow value of relaxing the executor's incentive constraint.}

The paper is also related to auctions combined with incentive contracts. \citet{McAfeeMcMillan1986} study bidding for contracts in the presence of moral hazard, and \citet{LaffontTirole1987} show how auction theory and incentive theory can be combined when the selected firm later exerts cost-reduction effort. \citet{LaffontTirole1993} develop the broader theory of incentives in procurement and regulation. More recently, \citet{ChakrabortyKhalilLawarree2021} study competitive procurement with ex post moral hazard, and show that optimal procurement may require limiting competition or using inefficient allocation rules. In these papers, the incentive problem is tied to the winning contractor's post-award behavior. In contrast, I study an executor whose effort is chosen before the winner is known, and who is not one of the bidders.

At the contracting level, the paper builds on moral hazard models with limited liability. \citet{Holmstrom1979} studies moral hazard under imperfect observability, and \citet{Innes1990} shows how limited liability shapes optimal incentive contracts. In the present model, the incentive contract is embedded in an auction-design problem. Limited liability restricts the executor's bonus, and the value of relaxing that restriction depends on the winning allocation state. Thus, different winners generate different marginal returns to success-contingent rewards, which feeds back into the auction allocation rule.

The analysis is also related to auctions with externalities. \citet{JehielMoldovanuStacchetti1996,JehielMoldovanuStacchetti1999} show that winner identity and payoff externalities can substantially change auction design. In the present model, winner identity matters through an incentive channel rather than through direct payoff externalities among bidders. Assigning the object to one bidder rather than another changes the productivity, pledgeability, or expected cost of the executor's reward. The externality is one of execution, as the auction outcome changes the cost of inducing effort on the outside of the bidder side of the market.

The paper is more distantly related to auctions with ex ante investments or endogenous valuations, such as \citet{ArozamenaCantillon2004} and \citet{GershkovMoldovanuStrackZhang2021}. There, agents take costly actions that affect their own costs or values before the auction. Here, the costly action is chosen by a non-bidder executor, and it affects the allocation's implementability.

Finally, the private reliability results distinguish the paper from standard scoring models in another way. If reliability is public or verifiable, it can enter the shadow score directly. If reliability is private and payoff-irrelevant to the bidder, however, it cannot generally be elicited by a direct mechanism without verification, costly signaling, or some other source of payoff relevance. The designer cannot simply append an unverifiable execution characteristic to a one-dimensional auction score.


\section{Environment}
\label{sec:model}

There is a seller with one indivisible object, and a set $N=\{1,\ldots,n\}$ of bidders with unit demand. Bidder $i$ has private value $v_i\in[\ubar v_i,\vbar v_i]$, independently drawn from distribution $F_i$ with density $f_i>0$ on its support. Bidders are risk neutral and have quasi-linear utility. The seller has no value for retaining the object. A direct mechanism consists of allocation rules $x_i(v)\in[0,1]$ and bidder payments $p_i(v)$, where $v=(v_1,\ldots,v_n)$ and $\sum_{i=1}^n x_i(v)\leq 1$ for every $v$. If bidder $i$ receives the object and pays $p_i$, its payoff is $v_i-p_i$. 

After the allocation, the assigned object must be \emph{executed}. Execution means the realization of the allocation, which can either succeed or fail. A separate agent, the executor, chooses effort $e\in\{0,1\}$ before the realization of bidder values and before the winner is known.\footnote{Appendix~\ref{subsec:continuous_effort} extends the binary effort baseline to a continuous effort choice.} Effort costs $\psi>0$, and is not observable. If bidder $i$ receives the object, the probability of execution success is
\[
\sigma_i^0+e\D_i,
\]
where $\D_i\geq0$ and $0\leq \sigma_i^0\leq 1$. Moreover, for every $i\in N$, 
\[
\sigma_i\coloneqq \sigma_i^0+\D_i\in[0,1]
\]
The parameter $\D_i$ is the marginal effect of execution effort when bidder $i$ is the winner. It captures how useful the execution agent's effort is under that allocation. The parameter $\sigma_i$ is the success probability under effort.

The execution agent is risk neutral and has outside option normalized to zero. Limited liability requires all transfers to the agent to be nonnegative, so the seller can reward success but cannot impose penalties after failure. Success is verifiable, so success-contingent bonuses are contractible, but effort itself is not.

The seller can promise the execution agent a success-contingent bonus after the auction outcome. If bidder $i$ is assigned the object at report profile $v$, let $b_i(v)\geq0$ denote the bonus paid to the execution agent if execution succeeds. The bonus is relevant only on histories in which bidder $i$ receives the object. Limited liability and budget constraints impose $0\leq b_i(v)\leq B_i$ whenever $x_i(v)>0$, where $B_i\geq0$ is the maximum pledgeable success-contingent bonus after assigning the object to bidder $i$.

It is convenient to work with promised bonus expenditures, defined as
\[
q_i(v)\coloneqq x_i(v)b_i(v).
\]
It follows that the feasible bonus-expenditure variables satisfy $0\leq q_i(v)\leq B_i x_i(v)$ for every $i$ and $v$. Whenever $x_i(v)>0$, the corresponding bonus is $b_i(v)=q_i(v)/x_i(v)$. When $x_i(v)=0$, the bonus is payoff-irrelevant and $q_i(v)=0$. This formulation makes the relaxed problem linear in $(x,q)$. The restriction to success-contingent bonuses is without loss under the contracting environment considered here.\footnote{Suppose that after assigning the object to bidder $i$, the agent can be paid $t_i^S$ after success and $t_i^F$ after failure. The incentive constraint depends only on the payment spread, i.e., on $\Delta_i(t_i^S-t_i^F)\geq \psi$. Since payments are subject to limited liability, $t_i^S,t_i^F\geq0$, the cost-minimizing way to create any given incentive spread sets $t_i^F=0$. Fixed transfers that do not vary with success do not affect effort incentives, and are dominated when the agent's outside option is normalized to zero. As a result, the contract can be represented by a success-contingent bonus $b_i(v)$ without loss.}

The execution agent exerts effort if its expected incremental bonus from effort is at least $\psi$. An allocation-bonus pair $(x,q)$ induces effort if
\begin{equation*}
\E\left[\sum_{i=1}^n \D_i q_i(v)\right]\geq \psi,
\end{equation*}
where the expectation is taken over bidder values.

If bidder $i$ receives the object and effort is exerted, the seller obtains an execution payoff $g_i\in\mathbb{R}$. This term captures the expected value of successful execution under effort net of execution costs other than the bonus. Since success occurs with probability $\sigma_i$ under effort, the expected cost of an effective success-contingent bonus is $\sigma_i q_i(v)$. Conditional on inducing effort, the seller's expected payoff is
\begin{equation*}
\E\left[ \sum_{i=1}^n p_i(v) + \sum_{i=1}^n x_i(v)g_i - \sum_{i=1}^n \sigma_i q_i(v) \right].
\end{equation*}
The seller may choose not to induce effort. The main characterization below is conditional on inducing effort. The seller induces effort if the resulting payoff exceeds the value of the best no-effort mechanism, which is described after the main theorem.

\begin{assumption}
\label{ass:regularity}
For every bidder $i$, the virtual value
\[
\varphi_i(v_i) \coloneqq v_i-\frac{1-F_i(v_i)}{f_i(v_i)}
\]
is strictly increasing.
\end{assumption}

The timing is as follows. First, the seller commits to an auction mechanism and to a success-contingent reward schedule for the execution agent. Second, before bidder values are realized and before the winner is known, the execution agent chooses a noncontractible effort level. This effort is interpreted as an ex ante execution-capacity investment.\footnote{For example, preparing monitoring resources, allocating enforcement capacity, assembling a review team, coordinating complementary agencies, or building compliance infrastructure.} Third, bidder values are realized, the auction is run, and the object is assigned. Finally, execution succeeds or fails.

Two assumptions are important. First, the effort is chosen before the winner is known. Second, effort is pooled. That is, the executor chooses a single overall preparation effort rather than separate efforts for different bidders or execution paths. The effort may be more productive under some winners than under others, but the executor chooses only a single capacity level. This is why the moral hazard constraint is aggregate. If effort were chosen after the winners were known, then the constraint would be winner-by-winner. If ex ante preparation could be directed separately toward different bidders, the problem would involve several distinct effort choices rather than the single pooled effort studied here.



\section{Observable execution and shadow scores}\label{sec:analysis}


\subsection{Fixed execution rewards}
\label{sec:fixed}

Before studying the joint design of bonuses and auctions, it is useful to analyze the simpler case in which the execution reward is fixed. Suppose that the executor receives an exogenous success bonus $B>0$ whenever the object is sold and execution succeeds. The effort constraint becomes
\begin{equation}
B\,\E\left[\sum_{i=1}^n \D_i x_i(v)\right]\geq\psi.
\label{eq:fixedIC}
\end{equation}
The bonus cost under effort is $B\sum_i \sigma_i x_i(v)$. 

\begin{lemma}
\label{prop:fixed}
Suppose that the execution reward is fixed at $B$. For any optimal multiplier $\lambda\geq0$ on the effort constraint, the optimal allocation assigns the object to a bidder with the highest nonnegative score $S_i$, where
\begin{equation}
S_i(v_i\mid \lambda)=\varphi_i(v_i)+g_i-\sigma_i B+\lambda B\D_i.
\label{eq:fixedscore}
\end{equation}
Under Assumption~\ref{ass:regularity}, this allocation is implementable by a dominant-strategy scoring auction with critical value payments.
\end{lemma}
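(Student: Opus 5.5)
I will follow the standard Myerson route, with the effort constraint handled by a Lagrangian. First I will fix the bidder side. Under dominant-strategy (or Bayesian) incentive compatibility and individual rationality, with the lowest type's rent set to zero, the envelope argument gives expected revenue
\[
\E\Big[\sum_i p_i(v)\Big]=\E\Big[\sum_i \varphi_i(v_i)x_i(v)\Big],
\]
together with the requirement that each interim allocation $\E_{v_{-i}}[x_i(v)]$ be nondecreasing in $v_i$. With the fixed reward $B$, the bonus is not a choice variable. The seller's payoff under effort is then
\[
\E\Big[\sum_i \big(\varphi_i(v_i)+g_i-\sigma_iB\big)x_i(v)\Big],
\]
and the only remaining constraint besides feasibility and monotonicity is \eqref{eq:fixedIC}. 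Both the objective and \eqref{eq:fixedIC} are linear in $x$.

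Next I will relax monotonicity and form the Lagrangian with multiplier $\lambda\ge0$ on \eqref{eq:fixedIC}:
\[
\mathcal L(x,\lambda)=\E\Big[\sum_i S_i(v_i\mid\lambda)\,x_i(v)\Big]-\lambda\psi .
\]
The feasible set $\{x:\ x_i\ge0,\ \sum_i x_i\le1\}$ is convex, the objective is linear, and the constraint is linear. Strong duality therefore holds; the simplest route is to invoke linear programming duality or a Lagrangian theorem for linear constraints, assuming a Slater-type point exists when effort is inducible. It follows that any optimal $x$ maximizes $\mathcal L(\cdot,\lambda)$ for any optimal multiplier $\lambda$, with complementary slackness holding. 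The Lagrangian separates pointwise in $v$: at each profile, we maximize $\sum_i S_i x_i$ subject to $\sum_i x_i\le1$. Hence, except on ties (a null set, given $f_i>0$ and strictly increasing $\varphi_i$), the object goes to a bidder with the largest $S_i$ when that score is nonnegative, and it is withheld when all scores are negative. This is the claimed allocation.

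Finally, I will verify the relaxed monotonicity constraint and implement the allocation. The term $g_i-\sigma_iB+\lambda B\D_i$ is a constant for each bidder, so under Assumption~\ref{ass:regularity} each $S_i(\cdot\mid\lambda)$ is strictly increasing in $v_i$. It follows that $x_i(v)$ is nondecreasing in $v_i$ for every $v_{-i}$ (ex post monotonicity), which is stronger than the interim condition. Define the critical value
\[
c_i(v_{-i})=\inf\{v_i:\ S_i(v_i\mid\lambda)\ge\max(0,\max_{j\ne i}S_j(v_j\mid\lambda))\},
\]
and charge the winner $c_i$. This threshold payment makes truthful bidding dominant and gives the lowest type zero rent. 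Therefore revenue equals the virtual-surplus expression used above, and the relaxed optimum is attained.

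The main obstacle is the multiplier step. I need to justify that an optimal $\lambda$ exists and that every optimal $x$ maximizes the Lagrangian at that $\lambda$. The delicate cases are when \eqref{eq:fixedIC} holds only at the boundary, or when the selection among ties must be chosen so that \eqref{eq:fixedIC} binds exactly. I will handle this by noting that the statement only claims the assignment is to \emph{a} bidder with the highest nonnegative score. Randomization on tie sets, which may matter only when $\lambda$ makes some scores coincide on positive-measure sets, can be used to meet complementary slackness without breaking this characterization.
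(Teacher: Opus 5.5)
Your proposal is correct and follows the paper's own argument. The revenue identity reduces the problem to virtual surplus net of the fixed bonus cost. A multiplier on \eqref{eq:fixedIC} yields a Lagrangian that is maximized pointwise by the highest nonnegative score, and Assumption~\ref{ass:regularity} gives ex post monotonicity and implementation by critical values. You are simply more explicit than the paper in three places: you relax monotonicity and then verify it, and you spell out the duality step (every primal optimum maximizes the Lagrangian at an optimal multiplier), which the paper compresses into ``complementary slackness gives the optimal multiplier''. Note also that under Assumption~\ref{ass:regularity} score ties are always null sets, so your tie-randomization concern never actually arises.
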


Lemma~\ref{prop:fixed} identifies the basic force. Suppose that $\Delta_i=\Delta$ for every bidder $i$, and suppose $g_i-\sigma_iB= h$ for every bidder $i$. Then, the execution constraint depends only on the probability of sale, and the execution payoff term is common across bidders. In a symmetric regular environment, the optimal mechanism remains a second-price auction with a reserve, although the reserve may change. If $g_i-\sigma_iB+\lambda B\Delta_i$ differs across bidders, the allocation compares virtual values plus bidder-specific score shifts. Heterogeneity in $g_i-\sigma_iB$ affects rankings directly. Heterogeneity in $\Delta_i$ affects rankings through the incentive term only when the effort constraint has positive shadow value.


\subsection{Endogenous execution contracts}
\label{sec:endogenous}

We now allow the seller to choose the execution contract jointly with the auction. Working with promised bonus expenditure $q_i(v)=x_i(v)b_i(v)$, the effort-inducing problem is linear in the allocation and bonus-expenditure variables. The execution constraint is now
\[
\E\left[\sum_{i=1}^n \D_i q_i(v)\right]\geq \psi,
\]
and feasible bonus expenditures satisfy $0\leq q_i(v)\leq B_i x_i(v)$.

For $\lambda\geq0$, define the \emph{execution value} of assigning the object to bidder $i$ by
\begin{equation}
\chi_i(\lambda) \coloneqq \max_{b\in[0,B_i]}(\lambda\,\D_i-\sigma_i)b = B_i\max\{\lambda\D_i-\sigma_i,0\}.
\label{eq:chi}
\end{equation}
A success bonus for bidder $i$ has positive shadow value only when its incentive benefit, $\lambda\D_i$, exceeds its expected cost, $\sigma_i$. Define the \emph{shadow score}
\begin{equation*}
S_i(v_i\mid \lambda) \coloneqq \varphi_i(v_i)+g_i+\chi_i(\lambda).
\end{equation*}
For any allocation-bonus pair $(x,q)$, write
\[
I(x,q)\coloneqq \E\left[\sum_{i=1}^n \D_i q_i(v)\right]
\]
for the execution incentive generated by the mechanism, and
\[
\Pi(x,q) \coloneqq \E\left[ \sum_{i=1}^n x_i(v)\left(\varphi_i(v_i)+g_i\right) - \sum_{i=1}^n \sigma_i q_i(v) \right]
\]
for the seller's expected virtual payoff, net of execution bonuses.

Moreover, I impose the following strict feasibility condition for inducing effort,
\begin{equation}
\exists\,(\tilde x,\tilde q) \;\text{ such that }\; \sum_{i=1}^n \tilde x_i(v)\leq 1,\;\; 0\leq \tilde q_i(v)\leq B_i\tilde x_i(v), \;\text{ and }\; I(\tilde x,\tilde q)>\psi.
\label{eq:slater}
\end{equation}
This condition rules out the boundary case in which effort is feasible only exactly at the executor's incentive constraint.

The analysis below works with relaxed, possibly randomized mechanisms. Allocation and bonus-expenditure rules are assumed to be measurable, and the relaxed optimization problems are assumed to attain their values. These are standard regularity conditions in mechanism-design problems with bounded primitives and compact type spaces. With these conventions in place, the multiplier argument below identifies the shadow value of the executor's incentive constraint.

Before characterizing the optimal effort-inducing mechanism, it is useful to separate the feasibility of effort from the optimization problem itself. Because the executor's incremental reward from effort is bounded by the largest incentive that can be generated in any allocation state, the model has a simple feasibility threshold. Lemma~\ref{lem:effort_feasibility} identifies this threshold, and shows that the strict-feasibility condition used below is exactly the requirement that the effort cost lie below it.


\begin{lemma}
\label{lem:effort_feasibility}
Let $\bar I\coloneqq \max_{i\in N}\D_i B_i$. Inducing effort is feasible if and only if $\psi\leq \bar I$. It is strictly feasible, in the sense of \eqref{eq:slater}, if and only if $\psi<\bar I$.
\end{lemma}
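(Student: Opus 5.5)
The plan is to show that $\bar I$ is the supremum, and in fact the maximum, of $I(x,q)$ over all feasible allocation--bonus pairs. Both equivalences then follow directly. "Feasible" means that some $(x,q)$ with $\sum_i x_i(v)\le 1$ and $0\le q_i(v)\le B_i x_i(v)$ satisfies $I(x,q)\ge\psi$. "Strictly feasible" is the same requirement with strict inequality, as in \eqref{eq:slater}.

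\emph{Upper bound.} Fix any feasible $(x,q)$. Pointwise in $v$, the constraints $q_i(v)\le B_i x_i(v)$, $\D_i\ge 0$ and $\sum_i x_i(v)\le 1$ give
\[
\sum_{i=1}^n \D_i q_i(v)\le \sum_{i=1}^n \D_i B_i x_i(v)\le \bar I\sum_{i=1}^n x_i(v)\le \bar I .
\]
Taking expectations over bidder values yields $I(x,q)\le\bar I$. Consequently, if $\psi>\bar I$, no pair induces effort. Likewise, if $\psi\ge\bar I$, no pair achieves $I>\psi$.

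\emph{Attainment.} Let $i^\ast\in\argmax_{i}\D_iB_i$. Consider the constant rule that assigns the object to $i^\ast$ at every report profile, $x_{i^\ast}(v)=1$, and sets the maximal bonus $q_{i^\ast}(v)=B_{i^\ast}$, with all other coordinates equal to zero. This rule is measurable and satisfies every constraint, and $I=\D_{i^\ast}B_{i^\ast}=\bar I$. So if $\psi\le\bar I$, effort is feasible. If $\psi<\bar I$, the same pair satisfies the strict inequality in \eqref{eq:slater}.

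Combining the two steps gives both "if and only if" statements. There is no real obstacle here. The only point needing care is that the feasibility notion refers to relaxed mechanisms. The witness above is itself deterministic, and it is trivially monotone in values, since the allocation is constant, so it does not depend on any implementability requirement. The upper bound holds for arbitrary randomized rules, because it is derived pointwise from the constraints.
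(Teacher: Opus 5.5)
Your proposal is correct and matches the paper's proof: it uses the same pointwise bound $\sum_i \D_i q_i(v)\le \bar I\sum_i x_i(v)\le\bar I$ and the same witness that always assigns the object to a maximizer of $\D_iB_i$ with the maximal bonus. You are slightly more explicit than the paper in noting that the upper bound also rules out strict feasibility when $\psi\ge\bar I$, which the paper leaves implicit.
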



The boundary case $\psi=\bar I$ is feasible but non-robust. Indeed, effort can be induced only by using allocation states that attain the maximal incentive capacity $\bar I$. I impose strict feasibility in the characterization below not because effort is impossible at the boundary, but because the supporting multiplier argument and the shadow-score representation are cleanest when the executor's incentive constraint has slack feasible directions.

The next step is to explain why a shadow value can summarize the execution constraint. The seller chooses allocations and bonus expenditures state by state, but the executor's incentive constraint is aggregate, as it depends on the incentives generated across all possible allocation states. Under strict feasibility, this aggregate constraint has a supporting multiplier. That multiplier is the shadow value that will enter the allocation rule. The next lemma formalizes this point.

\begin{lemma}
\label{lem:multiplier}
Suppose that~\eqref{eq:slater} holds and that the relaxed effort-inducing problem attains its value. Then, there exists a multiplier $\lambda^\ast \ge 0$ supporting an optimal effort-inducing allocation-bonus pair. Alternatively, an optimal allocation-bonus pair maximizes the Lagrangian associated with $\lambda^\ast$, and complementary slackness holds.
\end{lemma}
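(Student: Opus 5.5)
The plan is to treat the relaxed effort-inducing problem as a convex program, indeed a linear one. We maximize $\Pi(x,q)$ over the set
\[
\mathcal C \coloneqq \Bigl\{(x,q):\ x_i(v)\ge 0,\ \textstyle\sum_i x_i(v)\le 1,\ 0\le q_i(v)\le B_i x_i(v)\Bigr\},
\]
subject to the single scalar constraint $I(x,q)\ge\psi$. Here $x$ and $q$ are measurable, bounded functions on the compact type space. Both $\Pi$ and $I$ are linear in $(x,q)$, and $\mathcal C$ is convex because it is defined state by state through linear inequalities. So this is a linear objective on a convex set with one linear inequality constraint. The natural tool is the Lagrangian duality theorem for convex programs with finitely many inequality constraints (Luenberger-style, requiring a Slater point). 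Using that theorem avoids any need for duality in the infinite-dimensional space of allocation rules: only the single aggregate constraint is dualized, and the state-wise constraints stay inside $\mathcal C$.

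The first step is the value-function argument. Let $V^\ast$ be the attained optimal value. Consider the set
\[
\mathcal A\coloneqq\{(r,s)\in\mathbb R^2:\ \exists (x,q)\in\mathcal C,\ r\le \Pi(x,q),\ s\le I(x,q)-\psi\}.
\]
This set is convex, since it is the image of a convex set under a linear map, enlarged by the negative orthant. Let $\mathcal B\coloneqq\{(r,s): r>V^\ast,\ s>0\}$. This set is convex, open, and disjoint from $\mathcal A$ by the definition of $V^\ast$. Because we are in $\mathbb R^2$, the finite-dimensional separating hyperplane theorem applies directly and gives $(\mu_0,\mu)\neq 0$ with $\mu_0 r+\mu s\le \mu_0 V^\ast$ on $\mathcal A$. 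Since $\mathcal A$ is unbounded below in both coordinates, both weights are nonnegative: $\mu_0,\mu\ge0$.

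The key step is ruling out $\mu_0=0$, and this is where \eqref{eq:slater} enters. If $\mu_0=0$, then $\mu>0$ and $\mu(I(x,q)-\psi)\le0$ for all $(x,q)\in\mathcal C$. This is contradicted by the strictly feasible pair $(\tilde x,\tilde q)$, for which $I(\tilde x,\tilde q)>\psi$; Lemma~\ref{lem:effort_feasibility} guarantees such a pair exists when $\psi<\bar I$. Hence we may normalize $\mu_0=1$ and set $\lambda^\ast\coloneqq\mu\ge0$. This gives
\[
\Pi(x,q)+\lambda^\ast\bigl(I(x,q)-\psi\bigr)\le V^\ast \quad\text{for all }(x,q)\in\mathcal C.
\]
I expect the only delicate point to be this exclusion of a degenerate multiplier, together with checking that boundedness of $\Pi$ makes $V^\ast$ finite. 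Both are routine here because all primitives are bounded.

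Finally, let $(x^\ast,q^\ast)$ be an optimal pair, which exists by assumption. Evaluating the displayed inequality at this pair gives $V^\ast+\lambda^\ast(I(x^\ast,q^\ast)-\psi)\le V^\ast$. Feasibility gives $I(x^\ast,q^\ast)-\psi\ge0$. Together these yield complementary slackness, $\lambda^\ast(I(x^\ast,q^\ast)-\psi)=0$. The Lagrangian therefore attains its supremum $V^\ast$ over $\mathcal C$ at $(x^\ast,q^\ast)$, so the optimal pair maximizes the Lagrangian associated with $\lambda^\ast$. This is the claimed supporting multiplier, and it holds for every optimal pair, not just one.
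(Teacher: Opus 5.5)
Your proof is correct and is essentially the paper's argument. The paper shows that the value function $V(z)$ of the relaxed problem is concave and nonincreasing. It takes a supergradient at $\psi$, which is interior because of \eqref{eq:slater}, and sets it equal to $-\lambda^\ast$. It then evaluates the resulting inequality at $z=I(x,q)$. Your set $\mathcal A$ is the hypograph of that value function, and your separating hyperplane in $\mathbb R^2$ is the same supporting line. In both versions only the scalar aggregate constraint is dualized, strict feasibility rules out a vertical support, and complementary slackness follows by evaluating at an optimal pair.
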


The preceding lemmata separate the two ingredients needed for the main characterization. Effort is feasible exactly when the execution requirement is below the maximal incentive capacity, and, under strict feasibility, the executor's incentive constraint admits a supporting multiplier. Theorem~\ref{thm:execution_score} uses this multiplier to characterize the optimal effort-inducing mechanism. For a fixed shadow value, the seller chooses both the winner and the success-contingent bonus state by state. Bonus expenditures are used only when their incentive value exceeds their expected cost, and the maximized net value enters bidder $i$'s score through $\chi_i(\lambda)$. Under regularity, the resulting shadow-score allocation is monotone and is implemented by critical value payments.

\begin{theorem}
\label{thm:execution_score}
Suppose that Assumption~\ref{ass:regularity} and the strict feasibility condition~\eqref{eq:slater} hold. For each $\lambda\geq0$, define $\chi_i(\lambda)$ as in \eqref{eq:chi}. Then, the following statements hold.

\begin{enumerate}[label=(\roman*)]

\item \textbf{Optimal score allocation.}
There exists an optimal effort-inducing mechanism and a multiplier $\lambda^\ast\geq0$ such that the object is assigned only to bidders with maximal nonnegative shadow score. That is,
\[
x_i^\ast(v)>0 \implies S_i(v_i\mid \lambda^\ast) = \max\left\{0,\max_{j\in N}S_j(v_j\mid \lambda^\ast)\right\}.
\]
For every profile $v$ and bidder $i$, the associated bonus expenditure
satisfies
\[
q_i^\ast(v) \in \argmax_{0\leq q\leq B_i x_i^\ast(v)} (\lambda^\ast\D_i-\sigma_i)q.
\]

\item \textbf{Implementation by a shadow-score auction.}
The optimal allocation is dominant-strategy implementable by a shadow-score auction with critical value payments. Fix an exogenous priority order over bidders, used only to break ties among bidders with the same maximal nonnegative shadow score. For each bidder $i$, define
\begin{equation}\label{eq:m}
M_i(v_{-i}\mid \lambda^\ast) \coloneqq \max\left\{ 0,\max_{j\neq i}S_j(v_j\mid \lambda^\ast) \right\}.
\end{equation}
Bidder $i$ wins if and only if $S_i(v_i\mid \lambda^\ast) \geq M_i(v_{-i}\mid \lambda^\ast)$. If more than one bidder is eligible because of a score tie, the priority rule
selects the winner. When bidder $i$ wins, its payment is the critical value
\[
p_i(v_{-i}) = \inf\left\{ z\in[\underline v_i,\overline v_i] \mid S_i(z \mid \lambda^\ast) \geq M_i(v_{-i}\mid \lambda^\ast) \right\},
\]
where the critical value is kept within bidder $i$'s value support.
\end{enumerate}
\end{theorem}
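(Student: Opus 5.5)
The plan is the standard Myerson reduction followed by a Lagrangian argument that rests on Lemma~\ref{lem:multiplier}.

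\textbf{Step 1: rewrite the objective.} By the envelope/revenue-equivalence argument under Bayesian (and a fortiori dominant-strategy) incentive compatibility and individual rationality, with the lowest types receiving zero rent, expected payments equal $\E[\sum_i x_i(v)\varphi_i(v_i)]$. The seller's effort-inducing problem in the relaxed form, which drops bidder monotonicity, is therefore
\[
\max_{(x,q)} \Pi(x,q) \quad\text{subject to}\quad I(x,q)\geq\psi,\quad \sum_i x_i(v)\leq 1,\quad 0\leq q_i(v)\leq B_i x_i(v).
\]
This problem is linear in $(x,q)$.

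\textbf{Step 2: take the multiplier and maximize pointwise.} By strict feasibility~\eqref{eq:slater} and Lemma~\ref{lem:multiplier}, there is $\lambda^\ast\geq0$ such that a relaxed optimum $(x^\ast,q^\ast)$ maximizes
\[
L(x,q;\lambda^\ast)=\E\Bigl[\sum_i x_i(v)\bigl(\varphi_i(v_i)+g_i\bigr)+\sum_i(\lambda^\ast\D_i-\sigma_i)q_i(v)\Bigr]-\lambda^\ast\psi,
\]
with complementary slackness. The Lagrangian is an expectation of a state-by-state expression and the constraints are state by state. Hence, up to null sets, maximizing it is equivalent to maximizing the integrand pointwise in $v$.

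For fixed $x(v)$, the inner maximization over $q_i(v)\in[0,B_ix_i(v)]$ yields the stated $\argmax$ condition for $q_i^\ast$. Its value is $x_i(v)\chi_i(\lambda^\ast)$, since $\max_{0\le q\le B_i x}(\lambda\D_i-\sigma_i)q = x\,\chi_i(\lambda)$. The outer problem is then $\max\sum_i x_i(v)S_i(v_i\mid\lambda^\ast)$ over the simplex $\sum_i x_i\leq1$. Any maximizer places positive weight only on indices attaining $\max\{0,\max_j S_j\}$, which gives part (i).

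\textbf{Step 3: select a deterministic optimum (the main obstacle).} The delicate point is that the Lagrangian maximizers are not unique. An arbitrary deterministic tie-breaking selection (the priority rule), combined with an extreme choice of $q$ when $\lambda^\ast\D_i=\sigma_i$, need not satisfy $I\geq\psi$ or complementary slackness. My first approach is to note that ties among the $S_j$ occur on a null set when the $\varphi_i$ are strictly increasing and the values have densities. Any maximizer therefore agrees almost surely with the priority-rule allocation $x^\ast$. The only remaining freedom is in $q_i$ for indices with $\lambda^\ast\D_i=\sigma_i$, where the objective is flat. I keep the $q$ from the optimum supplied by Lemma~\ref{lem:multiplier}, or adjust it within $[0,B_ix_i^\ast]$ so that $I(x^\ast,q^\ast)=\psi$ whenever $\lambda^\ast>0$. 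This is possible because the incentive level varies continuously in these coordinates at zero cost to the Lagrangian. A pair that is feasible, maximizes the Lagrangian, and satisfies complementary slackness is optimal by weak duality. That yields an optimal mechanism of the stated form.

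\textbf{Step 4: implementation, part (ii).} Under Assumption~\ref{ass:regularity}, each $S_i(\cdot\mid\lambda^\ast)$ is strictly increasing in $v_i$, since $g_i$ and $\chi_i(\lambda^\ast)$ are constants. Moreover, $M_i$ depends only on $v_{-i}$. So bidder $i$ wins exactly when $v_i$ is above a threshold, with ties resolved by the fixed priority. The allocation is therefore monotone in own value, and it is a step function.

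The standard critical-value argument then gives dominant-strategy incentive compatibility and individual rationality: charge the winner the infimum winning value, truncated to the support, and charge losers zero. The induced payments have the Myerson form with zero rent for the lowest type. The dominant-strategy mechanism therefore attains the relaxed value. The bonus schedule $b_i(v)=q_i^\ast(v)/x_i^\ast(v)$ then implements $q^\ast$.
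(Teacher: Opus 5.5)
Your proposal is correct and follows the paper's proof step by step. It uses the revenue identity to reduce to the linear relaxed program, takes the supporting multiplier from Lemma~\ref{lem:multiplier}, maximizes the Lagrangian pointwise to get $\chi_i(\lambda^\ast)$ and the score rule, and uses strict monotonicity of $S_i(\cdot\mid\lambda^\ast)$ for critical-value implementation. Your Step 3 spells out what the paper compresses into its remark that score ties are null events under atomless distributions: the priority-rule allocation agrees almost surely with the Lagrangian optimum, and the flat bonus coordinates can be chosen to preserve complementary slackness.
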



The theorem identifies the optimal allocation and the associated bonus expenditures. The next corollary rewrites the shadow term in a way that makes its economic content more transparent. A bidder's allocation state is useful for providing execution incentives only if the shadow value of the incentive it generates exceeds the expected cost of the success-contingent reward. The ratio $\sigma_i/\Delta_i$ plays the role of an execution-incentive cost index, as it is the expected bonus cost per unit of incentive generated by assigning the object to bidder $i$.


\begin{corollary}
\label{cor:execution_cost_index}
For every bidder with $\D_i>0$, define $\rho_i\coloneqq \sigma_i/\D_i$. For bidders with $\D_i=0$, set $\rho_i=+\infty$. Then, the shadow term can be written as
\[
\chi_i(\lambda) = B_i\D_i\max\{\lambda-\rho_i,0\}.
\]
Thus, bidder $i$'s allocation state receives a positive success-contingent bonus only if $\lambda^\ast\geq \rho_i$. When the inequality is strict, the optimal bonus is maximal. That is, $b_i^\ast(v)=B_i$ whenever $x_i^\ast(v)>0$. When $\lambda^\ast<\rho_i$, bidder $i$'s allocation state is not used to provide execution incentives, and $b_i^\ast(v)=0$.
\end{corollary}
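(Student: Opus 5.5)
The corollary is essentially algebra on the definition \eqref{eq:chi} combined with the bonus characterization in Theorem~\ref{thm:execution_score}(i), so I would prove it in two short steps.

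\textbf{Rewriting the shadow term.} Start from $\chi_i(\lambda)=B_i\max\{\lambda\D_i-\sigma_i,0\}$.

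\emph{Case $\D_i>0$.} Factor out $\D_i>0$, which can be moved inside the maximum since it is positive: $\max\{\lambda\D_i-\sigma_i,0\}=\D_i\max\{\lambda-\sigma_i/\D_i,0\}=\D_i\max\{\lambda-\rho_i,0\}$.

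\emph{Case $\D_i=0$.} Here $\chi_i(\lambda)=B_i\max\{-\sigma_i,0\}=0$ because $\sigma_i\geq0$. The right-hand side is also zero: $B_i\D_i=0$ and $\max\{\lambda-\infty,0\}=0$, with the convention $0\cdot 0=0$. It is cleanest to state explicitly that for $\D_i=0$ the formula is read as $0$.

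\textbf{Optimal bonus.} By Theorem~\ref{thm:execution_score}(i), $q_i^\ast(v)$ maximizes $(\lambda^\ast\D_i-\sigma_i)q$ over $[0,B_ix_i^\ast(v)]$. This is a linear program in one variable, so its solution depends only on the sign of the coefficient. When $\D_i>0$, that coefficient has the same sign as $\lambda^\ast-\rho_i$.

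\emph{Coefficient positive} ($\lambda^\ast>\rho_i$). The unique maximizer is $q_i^\ast=B_ix_i^\ast(v)$. Whenever $x_i^\ast(v)>0$, this gives $b_i^\ast=q_i^\ast/x_i^\ast=B_i$.

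\emph{Coefficient negative} ($\lambda^\ast<\rho_i$). The unique maximizer is $q_i^\ast=0$, hence $b_i^\ast=0$. This case includes every $\D_i=0$ bidder, since then $\rho_i=\infty$ and the coefficient $-\sigma_i$ is nonpositive. If additionally $\sigma_i=0$, the coefficient is zero and I would note that choosing $b_i^\ast=0$ is without loss. Either way the allocation state provides no incentive, because $\D_iq_i=0$.

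\emph{Knife-edge} ($\lambda^\ast=\rho_i$). Any $q$ in the interval is optimal, which is why the statement says a positive bonus occurs only if $\lambda^\ast\geq\rho_i$. This is the necessity direction, obtained from the contrapositive of the negative-coefficient case.

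\textbf{Main obstacle.} The only delicate points are bookkeeping: the $\D_i=0$ convention with $\rho_i=+\infty$, and the indeterminacy at $\lambda^\ast=\rho_i$. I would state the zero-coefficient tie-breaking explicitly rather than claim uniqueness there.
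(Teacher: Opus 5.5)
Your proposal is correct and follows the paper's proof. You factor $\Delta_i>0$ out of $\lambda\Delta_i-\sigma_i$ to get $\Delta_i(\lambda-\rho_i)$, then read the bonus off the sign of the coefficient in $\max_{b\in[0,B_i]}(\lambda^\ast\Delta_i-\sigma_i)b$, with $\Delta_i=0$ giving a nonpositive coefficient; your explicit handling of the zero-coefficient cases ($\lambda^\ast=\rho_i$, and $\Delta_i=\sigma_i=0$, where $b_i^\ast=0$ is a without-loss selection rather than forced) is slightly more careful than the paper's one-line remark.
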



Corollary~\ref{cor:execution_cost_index} gives a simple economic interpretation of the shadow score. The ratio $\rho_i=\sigma_i/\Delta_i$ is the expected bonus cost of creating one unit of execution incentive through bidder $i$'s allocation state. The mechanism first uses allocation states with low $\rho_i$, because they are cheaper ways to motivate the executor. As the shadow value $\lambda^\ast$ rises, states with higher $\rho_i$ become worth using. The score premium is larger for bidders whose allocation states combine high pledgeable rewards, high marginal productivity of effort, and low expected bonus cost. This gives the execution contract a merit-order interpretation.

The theorem nests the standard \cite{Myerson1981} logic. If the executor's incentive constraint is slack, then $\lambda^\ast=0$ and $\chi_i(\lambda^\ast)=0$ for every bidder $i$. In that case, the moral hazard constraint does not affect bidder rankings. If, in addition, the direct execution payoff $g_i$ is common across bidders, the allocation reduces to the usual virtual-value allocation, implemented in the symmetric regular case by a second-price auction with a reserve. More generally, when $\lambda^\ast=0$, any departure from the standard Myerson allocation comes only from direct winner-specific execution payoffs, not from the executor's incentive constraint.

The theorem is stated in terms of bonus expenditures $q_i^\ast(v)$, rather than only winner-level bonuses. When $\lambda^\ast\Delta_i\neq\sigma_i$, the optimal expenditure is pinned down, and it is either zero or the maximal feasible amount $B_i x_i^\ast(v)$. When $\lambda^\ast\Delta_i=\sigma_i$, the seller is locally indifferent over all feasible bonus expenditures for bidder $i$. In that knife-edge case, the optimal expenditure is selected from the primal solution. If $\lambda^\ast>0$, complementary slackness implies
\[
\mathbb E\left[\sum_{i=1}^n \Delta_i q_i^\ast(v)\right]=\psi.
\]
Thus, the characterization does not require a generic marginal bidder. Any fine-tuning of the incentive constraint can occur through bonus choices over states where the seller is locally indifferent.

The term \emph{shadow-score} auction refers to the fact that the allocation score contains the shadow value of the executor's incentive constraint. Unlike in a standard scoring auction, the non-price component is not a quality preference. Rather, it is the value of assigning the object to a bidder because that allocation state helps provide execution incentives. Section~\ref{sec:format_restrictions} shows that this shadow adjustment is not generally equivalent to a reserve-price policy, because it can change pairwise rankings among eligible bidders.

The shadow score depends on the multiplier $\lambda^\ast$, so comparative statics of the allocation pass through the shadow value of the executor's incentive constraint. For a candidate multiplier $\lambda$ and bonus caps $B=(B_1,\ldots,B_n)$, define the maximized shadow-score surplus
\begin{equation}\label{eq:surplus}
H(\lambda\mid B) \coloneqq  \E\left[ \max\left\{ 0, \max_{i\in N} \left( \varphi_i(v_i)+g_i + B_i\max\{\lambda\Delta_i-\sigma_i,0\} \right) \right\} \right].
\end{equation}
This is the value of the shadow-score allocation for a given candidate shadow value $\lambda$, after optimizing over success-contingent bonus expenditures but before subtracting the term $\lambda\psi$ associated with the required execution incentive. Thus, the dual objective is $H(\lambda\mid B)-\lambda\psi$, and the set of optimal multipliers is
\[
\Lambda(\psi\mid B) = \argmin_{\lambda\geq0} \left\{ H(\lambda\mid B)-\lambda\psi \right\}.
\]
The next corollary uses this dual representation to describe how the shadow value changes with the required incentive level and with pledgeable execution rewards. The first statement is global and does not require differentiability. The derivative formulas are local statements that apply at smooth points of the dual objective, away from score ties and bonus-activation thresholds.


\begin{corollary}
\label{prop:lambda_comparative_statics}

The set $\Lambda(\psi\mid B)$ is nondecreasing in the execution cost $\psi$. If $\psi_2>\psi_1$, then every $\lambda_1\in\Lambda(\psi_1\mid B)$ and $\lambda_2\in\Lambda(\psi_2\mid B)$ satisfy $\lambda_2\geq\lambda_1$.

At points where the minimizer $\lambda^\ast(a)$ is unique, interior, locally differentiable, and the dual objective is twice differentiable with $H_{\lambda\lambda}(\lambda^\ast(a)\mid a)>0$, the following local comparative statics hold under the common scaling $B_i(a)=aB_i$,
\[
\frac{d\lambda^\ast}{d\psi} = \frac{1}{H_{\lambda\lambda}(\lambda^\ast(a)\mid a)} >0 \; \text{ and }\; \frac{d\lambda^\ast}{da} = -\frac{H_{\lambda a}(\lambda^\ast(a)\mid a)} {H_{\lambda\lambda}(\lambda^\ast(a)\mid a)}.
\]
In particular, $d\lambda^\ast/da\leq 0$ whenever $H_{\lambda a}(\lambda^\ast(a)\mid a)\geq 0$.
\end{corollary}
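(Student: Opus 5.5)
The plan is to work entirely with the dual objective $D(\lambda,\psi)\coloneqq H(\lambda\mid B)-\lambda\psi$ and exploit its structure. First I record that $H(\cdot\mid B)$ is convex in $\lambda$. For each fixed profile $v$, the integrand
\[
\max\Bigl\{0,\max_{i}\bigl(\varphi_i(v_i)+g_i+B_i\max\{\lambda\D_i-\sigma_i,0\}\bigr)\Bigr\}
\]
is a pointwise maximum of functions affine in $\lambda$. It is therefore convex, and taking expectations preserves convexity. It is also finite, since the primitives are bounded. Hence $D(\cdot,\psi)$ is convex, and $\Lambda(\psi\mid B)$ is the set of its minimizers over $\lambda\ge0$. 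Nonemptiness of this set at the relevant $\psi$ is supplied by Lemma~\ref{lem:multiplier} under \eqref{eq:slater}.

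For the global monotonicity statement, I will use a standard revealed-preference (single-crossing) argument rather than derivatives. The objective $D(\lambda,\psi)$ has decreasing differences in $(\lambda,\psi)$, because the cross term is $-\lambda\psi$. Take $\psi_2>\psi_1$, $\lambda_1\in\Lambda(\psi_1\mid B)$ and $\lambda_2\in\Lambda(\psi_2\mid B)$. Optimality gives two inequalities:
\[
H(\lambda_1)-\lambda_1\psi_1\le H(\lambda_2)-\lambda_2\psi_1,
\qquad
H(\lambda_2)-\lambda_2\psi_2\le H(\lambda_1)-\lambda_1\psi_2 .
\]
Adding them yields $(\lambda_2-\lambda_1)(\psi_2-\psi_1)\ge0$, so $\lambda_2\ge\lambda_1$. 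This gives the strong-set-order (in fact elementwise) comparison claimed, and it needs no differentiability. The only care needed is finiteness of $H$, so that the inequalities can be subtracted.

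For the local statements, I write $H(\lambda\mid a)$ for $H$ under $B_i(a)=aB_i$. At an interior unique minimizer the first-order condition is
\[
H_\lambda(\lambda^\ast(a)\mid a)-\psi=0 .
\]
Differentiating this identity in $\psi$ gives $H_{\lambda\lambda}\,d\lambda^\ast/d\psi=1$. Differentiating it in $a$ at fixed $\psi$ gives $H_{\lambda\lambda}\,d\lambda^\ast/da+H_{\lambda a}=0$. The displayed formulas follow by dividing by $H_{\lambda\lambda}>0$. Positivity of $d\lambda^\ast/d\psi$ is immediate, and the sign claim for $d\lambda^\ast/da$ follows whenever $H_{\lambda a}\ge0$. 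One technical point is that the implicit function theorem needs $H_\lambda$ to be $C^1$ near the point. Rather than verify this, I will rely on the hypothesis that $\lambda^\ast(a)$ is locally differentiable and that $D$ is twice differentiable there. That makes the total differentiation of the first-order condition legitimate.

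The main obstacle is purely one of justification: making sure the first-order condition holds with equality. Interiority ($\lambda^\ast>0$) is assumed, and differentiability of $H$ at $\lambda^\ast$ is part of the hypothesis, so the subdifferential condition $\psi\in\partial_\lambda H$ reduces to $H_\lambda=\psi$. I would also note, as an interpretation, that $H_\lambda(\lambda)$ equals the execution incentive $\E[\sum_i \D_i q_i]$ generated by the shadow-score allocation (an envelope argument). This ties the condition $H_\lambda=\psi$ to complementary slackness from Theorem~\ref{thm:execution_score}, but the proof does not rely on it.
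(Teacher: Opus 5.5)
Your proposal is correct and follows essentially the same route as the paper. The paper also adds the two optimality inequalities to get $(\lambda_2-\lambda_1)(\psi_2-\psi_1)\ge 0$, and then totally differentiates the first-order condition $H_\lambda(\lambda^\ast(a)\mid a)=\psi$ in $\psi$ and in $a$. Your extra remarks go slightly beyond what the paper writes but are harmless: convexity of $H$ (not needed for the argument), nonemptiness of $\Lambda(\psi\mid B)$ via Lemma~\ref{lem:multiplier}, and finiteness of $H$.
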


Because $H(\lambda\mid B)$ is defined using maximum operators and the positive-part term in $\chi_i(\lambda)$, it need not be smooth everywhere. Kinks arise, for example, when a bidder's success-contingent bonus becomes just worth using, or when two allocation choices give the same shadow surplus. At such points, the multiplier may stay fixed while the mechanism adjusts the allocation or the bonus expenditures to satisfy the executor's incentive constraint exactly. Thus, the global monotonicity result is the robust comparative static. The derivative formulas describe the smoother regions between kinks.

Corollary~\ref{prop:lambda_comparative_statics} shows that when the required execution incentive is higher, the shadow value of the executor's constraint is higher, and shadow scores distort the allocation more strongly. When pledgeable bonuses become more abundant, the same incentive can be provided with a lower shadow value, so the allocation distortion weakly falls under the stated regularity conditions.

The condition $H_{\lambda a}(\lambda^\ast(a)\mid a)\geq 0$ says that, at a given shadow value, larger pledgeable rewards weakly increase the amount of execution incentive generated by the shadow-score allocation. When this holds, more pledgeability reduces the shadow value needed to meet a fixed incentive requirement. If rewards become easier to pledge, the mechanism can motivate the executor with a smaller distortion to the allocation rule.

The preceding comparative statics describe how the optimal multiplier changes when the execution problem becomes more or less demanding. The dual representation also gives the multiplier a direct value  interpretation. Since $\lambda^\ast$ is the shadow value of the execution agent's incentive constraint, increasing the required incentive level lowers the seller's effort-inducing payoff at rate $\lambda^\ast$. Similarly, increasing a bidder's pledgeable success bonus raises the value of the optimal mechanism
only in states where that bidder wins and where its execution reward has positive shadow value. The next result shows these envelope values.

\begin{corollary}
\label{prop:envelope_values}
Let $V^E(\psi,B)$ denote the seller's value from the optimal mechanism conditional on inducing execution effort, where $B=(B_1,\ldots,B_n)$. Suppose that the strict-feasibility condition \eqref{eq:slater} holds, the optimal multiplier is unique, and the score maximizer is unique almost everywhere. Then, at points of differentiability,
\[
\frac{\partial V^E}{\partial \psi}=-\lambda^\ast.
\]
Moreover, for each bidder $i$,
\[
\frac{\partial V^E}{\partial B_i} = \E\left[ x_i^\ast(v)\max\{\lambda^\ast\D_i-\sigma_i,0\} \right].
\]
\end{corollary}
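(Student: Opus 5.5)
The plan is to treat $V^E(\psi,B)$ as the value of a linear program in $(x,q)$ and obtain both formulas from a Lagrangian envelope argument.

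\emph{Step 1 (strong duality).} By Lemma~\ref{lem:multiplier} and strict feasibility \eqref{eq:slater}, strong duality holds, so
\[
V^E(\psi,B)=\min_{\lambda\geq0}\left\{H(\lambda\mid B)-\lambda\psi\right\}.
\]
To verify this I will fix $\lambda$ and maximize the Lagrangian $\Pi(x,q)+\lambda(I(x,q)-\psi)$ pointwise in $v$. For each $i$ the inner maximization over $q_i\in[0,B_ix_i]$ gives $x_i\chi_i(\lambda)$. The outer maximization over $x$ in the simplex $\sum_i x_i\leq 1$ then gives $\max\{0,\max_i S_i(v_i\mid\lambda)\}$. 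Taking expectations yields $H(\lambda\mid B)-\lambda\psi$. Under Assumption~\ref{ass:regularity}, the Myerson reduction makes the seller's payoff from revenue equal to $\Pi$, and the optimal point-wise allocation is monotone, so the relaxed value coincides with $V^E$. This is the same argument that underlies Theorem~\ref{thm:execution_score}.

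\emph{Step 2 (derivative in $\psi$).} $V^E$ is a pointwise minimum of functions that are affine in $\psi$, so it is concave in $\psi$. By Danskin's theorem, its superdifferential in $\psi$ is $\{-\lambda:\lambda\in\Lambda(\psi\mid B)\}$. Since the optimal multiplier is unique, $V^E$ is differentiable in $\psi$ and $\partial V^E/\partial\psi=-\lambda^\ast$.

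\emph{Step 3 (derivative in $B_i$).} Apply an envelope theorem to the saddle-point representation. Perturbing $B_i$ does not affect the constraint of the dual beyond the objective, so at points of differentiability
\[
\frac{\partial V^E}{\partial B_i}=\frac{\partial H(\lambda^\ast\mid B)}{\partial B_i}.
\]
To compute this, differentiate inside the expectation, with the interchange justified by dominated convergence since the integrand is Lipschitz in $B_i$ with constant $\max\{\lambda^\ast\D_i-\sigma_i,0\}$. Pointwise, the map $B_i\mapsto\max\{0,\max_j S_j\}$ has derivative $\mathbf 1\{i \text{ is the unique score maximizer with } S_i\geq 0\}\max\{\lambda^\ast\D_i-\sigma_i,0\}$. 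Uniqueness of the maximizer almost everywhere makes this well-defined almost surely. Ties with $0$ occur on a null set because $\varphi_i$ is strictly increasing and $F_i$ has a density.

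That indicator is exactly $x_i^\ast(v)$ from Theorem~\ref{thm:execution_score}(i), which gives
\[
\frac{\partial V^E}{\partial B_i}=\E\left[x_i^\ast(v)\max\{\lambda^\ast\D_i-\sigma_i,0\}\right].
\]

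\emph{Main obstacle.} The hardest step is justifying the envelope step when $B_i$ and $\lambda$ vary jointly, since $\lambda^\ast$ itself depends on $B$. I would handle this in three parts.

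First, $V^E(\psi,\cdot)=\min_\lambda G(\lambda,B)$ with $G(\lambda,B)=H(\lambda\mid B)-\lambda\psi$. The function $G$ is continuous, and it is differentiable in $B_i$ at $\lambda^\ast$ by the computation in Step 3.

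Second, the minimizer set is the singleton $\{\lambda^\ast\}$. Coercivity in $\lambda$ follows from strict feasibility, because $\psi<\bar I$ by Lemma~\ref{lem:effort_feasibility} and $H$ grows at slope $\bar I$ as $\lambda\to\infty$. This keeps minimizers in a compact set uniformly for nearby $B$.

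Third, with a compact minimizer set that is a singleton, Danskin's theorem for minima over a compact set gives differentiability of $V^E$ in $B_i$ with derivative $\partial_{B_i}G(\lambda^\ast,B)$. Continuity of $\partial_{B_i}G$ in $\lambda$ near $\lambda^\ast$ is where the almost-everywhere uniqueness of the score maximizer is used.
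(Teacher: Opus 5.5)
Your proposal is correct and follows the paper's own route. You write $V^E$ by strong duality as $\min_{\lambda\geq0}\{H(\lambda\mid B)-\lambda\psi\}$, apply the envelope theorem in $\psi$, and differentiate the integrand of $H$ in $B_i$ at $\lambda^\ast$, using almost-everywhere uniqueness of the score maximizer. Your Danskin, coercivity (via $\psi<\bar I$) and dominated-convergence steps make rigorous what the paper states in one line, and they also derive differentiability from uniqueness of $\lambda^\ast$ instead of assuming it.
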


Corollary~\ref{prop:envelope_values} gives two economic interpretations of the shadow-score mechanism. The first derivative shows that the multiplier $\lambda^\ast$ is the marginal cost of making the execution incentive constraint harder to satisfy. If the required incentive level $\psi$ rises by one unit, the seller's optimal effort-inducing value falls at rate $\lambda^\ast$. The same multiplier that distorts bidder rankings also measures the value of relaxing the execution agent's moral hazard constraint.

The second derivative measures the marginal value of pledgeability. Increasing $B_i$ expands the amount of success-contingent reward that can be promised when bidder $i$ wins. This is valuable only in states in which bidder $i$ receives the object and only when bidder $i$'s execution reward has positive shadow value, that is, when $\lambda^\ast\Delta_i>\sigma_i$. The expression $\E\left[ x_i^\ast(v)\max\{\lambda^\ast\Delta_i-\sigma_i,0\} \right]$ has a basic interpretation. It is the expected shadow surplus from relaxing bidder $i$'s pledgeability constraint. Pledgeable rewards are most valuable for bidders who win often under the optimal allocation and whose allocation states are cost effective instruments for inducing execution effort.


\subsubsection{The no-effort alternative}

The characterization in Theorem~\ref{thm:execution_score} is conditional on inducing execution effort. The seller's global problem also includes the possibility of not inducing effort. If effort is not induced, success-contingent rewards provide no incentive benefit and are costly, so the seller sets them equal to zero. Let $g_i^0$ denote the seller's execution payoff when bidder $i$ receives the object and the execution agent does not exert effort. The resulting problem is the standard virtual-surplus problem with no-effort execution score
\[
S_i^0(v_i)\coloneqq \varphi_i(v_i)+g_i^0.
\]
Under Assumption~\ref{ass:regularity}, the optimal no-effort mechanism assigns the object to a bidder with the highest nonnegative value of $S_i^0(v_i)$ and is implemented by critical value payments.

Let $W^E$ be the seller's value from the optimal effort-inducing mechanism and let $W^0$ be the seller's value from the optimal no-effort mechanism. The seller induces execution effort if and only if $W^E\geq W^0$. If this inequality fails, the seller uses the no-effort virtual-surplus allocation with score $S_i^0(v_i)$. In the remainder of the paper, I focus on the case in which inducing execution effort is optimal.


\subsubsection{Post-win execution effort}
\label{subsec:post_win_effort}

The baseline model assumes that execution effort is chosen before bidder values are realized and before the winner is known. This timing captures ex ante execution capacity, such as monitoring resources, enforcement capacity, review teams, or compliance infrastructure must be prepared before the identity of the winner is known. It is useful to contrast this case with the alternative timing in which the execution agent chooses effort only after observing the winner.

Suppose, in this section only, that effort is chosen after the auction outcome is realized. If bidder $i$ wins, the execution agent exerts effort if and only if $\D_i b_i\geq \psi$. Thus, effort can be induced after assigning the object to bidder $i$ if and only if $\D_i B_i\geq \psi$. When this condition holds and $\D_i>0$, the cheapest effort-inducing success-contingent bonus is
\[
b_i^E\coloneqq\frac{\psi}{\D_i}.
\]
Let $g_i^0$ denote the seller's execution payoff when bidder $i$ receives the object and effort is not exerted. Define the post-win execution value of assigning the object to bidder $i$ by
\[
\Gamma_i \coloneqq \max\left\{ g_i^0,\, g_i-\sigma_i\frac{\psi}{\D_i} \right\},
\]
where the effort term is available only if $\D_i B_i\geq\psi$. If $\D_i=0$ or $\D_i B_i<\psi$, effort is infeasible and we set $\Gamma_i=g_i^0$.

The execution problem is no longer an aggregate constraint on the allocation rule. After each possible winner, the seller simply chooses whether inducing effort is worthwhile and feasible. Lemma~\ref{prop:post_win_effort} shows that this collapses the auction problem to a standard virtual-surplus maximization with winner-specific continuation values.

\begin{lemma}
\label{prop:post_win_effort}
Suppose the execution agent chooses effort after observing the winner. Under Assumption~\ref{ass:regularity}, the optimal mechanism assigns the object to a bidder with the highest nonnegative score
\[
S_i^{PW}(v_i)=\varphi_i(v_i)+\Gamma_i.
\]
The allocation is implementable by a dominant-strategy scoring auction with critical value payments.
\end{lemma}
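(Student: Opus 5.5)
The argument reduces the post-win problem to a standard Myerson virtual-surplus maximization in which each bidder's score is shifted by a winner-specific constant. The shift comes from solving the execution contract winner by winner.

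The first step is to fix the allocation rule $x$ and optimize the execution contract pointwise. With post-win effort, the executor's decision after bidder $i$ wins depends only on the bonus $b_i(v)$ in that history, through $\D_i b_i(v)\ge\psi$. There is therefore no aggregate constraint linking allocation states. In each history where $x_i(v)>0$, the seller chooses between two options.

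\begin{itemize}
\item[] \emph{No effort.} The seller sets $b_i(v)=0$, since any bonus is pure cost, and obtains $g_i^0$.
\item[] \emph{Effort.} This requires $\D_i B_i\ge\psi$ and $\D_i>0$. The bonus must satisfy $b_i(v)\ge\psi/\D_i$, and the expected bonus cost $\sigma_i b_i(v)$ is increasing in $b_i$. The cheapest feasible choice is therefore $b_i^E=\psi/\D_i\le B_i$, which yields $g_i-\sigma_i\psi/\D_i$.
\end{itemize}

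The maximized continuation value in each history is thus $\Gamma_i$, with $\Gamma_i=g_i^0$ when effort is infeasible. It is independent of $v$ and of the rest of the mechanism. The footnote's argument (failure payments set to zero, fixed transfers dominated) justifies restricting attention to success-contingent bonuses here as well.

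The second step plugs this back into the seller's problem. For any Bayesian incentive-compatible and individually rational mechanism, the standard envelope and payment-equivalence argument applies. With the lowest types' rents set to zero, it gives $\E[\sum_i p_i(v)]=\E[\sum_i x_i(v)\varphi_i(v_i)]$. The seller's objective is then bounded above by
\[
\E\Bigl[\sum_i x_i(v)\bigl(\varphi_i(v_i)+\Gamma_i\bigr)\Bigr],
\]
subject only to $\sum_i x_i\le 1$. Pointwise maximization gives an upper bound that is attained by assigning the object to a bidder with the highest nonnegative $S_i^{PW}(v_i)$, and to no one if all scores are negative. Ties are broken by a fixed priority order.

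The third step is implementation. Each $\Gamma_i$ is a constant, so under Assumption~\ref{ass:regularity} the score $S_i^{PW}(v_i)$ is strictly increasing in $v_i$. The allocation is therefore monotone in each bidder's own value, holding others fixed. The critical-value payment
\[
p_i(v_{-i})=\inf\Bigl\{z \;\Bigm|\; S_i^{PW}(z)\ge \max\bigl\{0,\max_{j\ne i}S_j^{PW}(v_j)\bigr\}\Bigr\}
\]
then yields dominant-strategy incentive compatibility and zero rent for the lowest type. This is the same argument as in Lemma~\ref{prop:fixed} and Theorem~\ref{thm:execution_score}(ii), with $\Gamma_i$ in place of the execution shift. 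Because the mechanism attains the relaxed bound, it is optimal.

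I expect the main obstacle to be step one, specifically confirming that per-history optimization is legitimate. The concern is that bonuses or the effort choice might interact with bidders' incentives or with other histories. They do not: bidders' payoffs do not depend on $b$ or on execution, so the bonuses leave the bidder-side incentive and participation constraints untouched, and the post-win incentive constraints are separable across histories. Once this separation is stated explicitly, the remaining steps are standard Myerson arguments.
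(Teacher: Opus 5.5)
Your proposal is correct and follows the same route as the paper's proof. You solve the execution contract winner by winner to get the constant continuation value $\Gamma_i$, then apply the revenue identity and pointwise maximization of $\varphi_i(v_i)+\Gamma_i$, with regularity giving monotonicity and critical-value implementation; your remark that bonuses do not enter bidders' payoffs, so the per-history contract problem separates from the bidder-side constraints, makes explicit a step the paper leaves implicit.
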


This comparison clarifies the role of the baseline timing assumption. With post-win effort, execution affects the auction only through winner-specific continuation values. There is no aggregate moral hazard constraint linking the allocation rule to the executor's ex ante incentive to exert effort. The multiplier-based shadow-score channel in Theorem~\ref{thm:execution_score} is specific to ex ante execution capacity.

Ex ante effort is natural when execution requires capacity to be prepared before the winner is known, such as monitoring systems, administrative review, enforcement capacity, or platform compliance resources. Post-win effort is natural when execution can be tailored after the auction outcome, such as due diligence, winner-specific contracting, or case-specific approval.

\subsection{The shadow-score auction format}
\label{sec:execution}

Theorem~\ref{thm:execution_score} characterizes the optimal effort-inducing allocation. This section describes the corresponding auction format. The mechanism is second-price-like, but bidders are ranked by shadow scores rather than by values alone.

For the optimal multiplier $\lambda^\ast$, define bidder $i$'s \emph{shadow handicap} by
\begin{equation}\label{eq:handicap}
h_i^\ast\coloneqq g_i+\chi_i(\lambda^\ast).
\end{equation}
This handicap has two components. The first, $g_i$, is the direct execution payoff from assigning the object to bidder $i$. The second, $\chi_i(\lambda^\ast)$, is the endogenous shadow value of using bidder $i$'s allocation state to relax the executor's incentive constraint. Bidder $i$'s shadow score is thus $S_i(v_i\mid\lambda^\ast)=\varphi_i(v_i)+h_i^\ast$. The object is assigned to a bidder with the highest nonnegative shadow score, with ties broken by a fixed exogenous priority rule. Ignoring tie profiles, bidder $i$ wins whenever
\[
\underbrace{\varphi_i(v_i)+h_i^\ast}_{=S_i(v_i\mid \lambda^\ast)} \geq \max\left\{ 0,\max_{j\neq i}\left(\varphi_j(v_j)+h_j^\ast\right) \right\}.
\]
Let $M_i(v_{-i}\mid \lambda^\ast)$ be as defined by \eqref{eq:m}. Since $\varphi_i$ is strictly increasing, bidder $i$'s critical value is
\[
p_i(v_{-i}) = \inf\left\{ z\in[\ubar v_i,\vbar v_i] \mid \varphi_i(z)+h_i^\ast \geq M_i(v_{-i}\mid \lambda^\ast) \right\},
\]
where the critical value is kept within bidder $i$'s value support. A winning bidder pays this critical value.

The payment rule has the same logic as a second-price auction. That is, a bidder does not pay its own report, but the lowest report at which it would still win. The difference is the relevant comparison. In a standard second-price auction with a reserve, the allocation compares values and uses the reserve only to decide whether the object is sold. In a shadow-score auction, the allocation compares $\varphi_i(v_i)+h_i^\ast$. A bidder with a larger handicap can win with a lower value because that bidder either generates a higher direct execution payoff or provides a more valuable state for satisfying the executor's incentive constraint. 

Figure~\ref{fig:shadow_score_ranking_reversal} illustrates the ranking channel created by the shadow score. A bidder with a higher execution premium has a score curve that lies above that of an otherwise identical bidder. As a result, the shadow-score auction may assign the object to a bidder with a lower value when that bidder's allocation state is more useful for satisfying the execution agent's incentive constraint.\footnote{The figure uses two bidders with values independently distributed on $[0,1]$, so $\varphi_i(v_i)=2v_i-1$. The execution primitives are $g_1=g_2=0$, $B_1=1$, $B_2=0$, $\Delta_1=\sigma_1=1/2$, $\Delta_2=0$, and $\psi=23/80$. These parameters imply $\lambda^\ast=9/5$. Hence, $\chi_1(\lambda^\ast)=B_1(\lambda^\ast\Delta_1-\sigma_1)=2/5$ and $\chi_2(\lambda^\ast)=0$, giving $S_1(v)=2v-1+2/5$ and $S_2(v)=2v-1$.}


\begin{figure}
    \centering
    \includegraphics[width=0.75\linewidth]{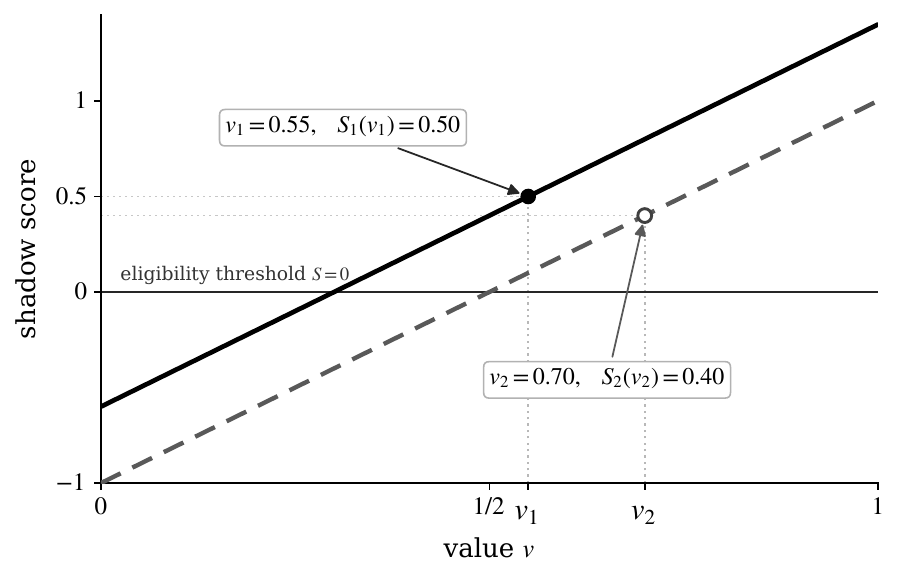}
    \caption{Shadow scores and ranking reversal. The solid line is bidder 1's shadow score, while the dashed line is bidder 2's shadow score. The horizontal line marks the eligibility threshold $S=0$. At the displayed values, bidder 1 has a lower value, $v_1<v_2$, but a higher shadow score, $S_1(v_1)>S_2(v_2)$. As a result, the shadow-score auction assigns the object to bidder 1.}
\label{fig:shadow_score_ranking_reversal}
\end{figure}


If bidders are symmetric on the value side and all shadow handicaps are equal, the shadow-score ranking coincides with the value ranking. In this case, the mechanism reduces to an anonymous second-price auction with a common reserve. However, if shadow handicaps differ, the reserve-auction logic no longer applies. The optimal mechanism may change the ranking of eligible bidders, not only the threshold for sale.

The seller's objective is revenue plus execution payoff, net of bonus costs. For policy applications, it is also useful to ask how the shadow-score allocation looks from a total-surplus perspective. Ignoring transfers, let
\[
W^E(x) \coloneqq \E\left[ \sum_{i=1}^n x_i(v)\left(v_i+g_i\right) \right] -\psi
\]
denote total surplus from allocation rule $x$ when effort is induced. Moreover, let
\[
W^0(x) \coloneqq \E\left[ \sum_{i=1}^n x_i(v)\left(v_i+g_i^0\right) \right]
\]
denote total surplus without effort. Let $x^{S}$ be the shadow-score allocation, let $x^{FB,E}$ maximize $W^E(x)$, and let $x^{FB,0}$ maximize $W^0(x)$. Then,
\[
W^E\left(x^{S}\right)-W^0\left(x^{FB,0}\right) = \underbrace{ \left[W^E(x^{FB,E})-W^0(x^{FB,0})\right] }_{\text{potential surplus gain from effort}} - \underbrace{ \left[W^E(x^{FB,E})-W^E(x^{S})\right] }_{\text{misallocation loss under shadow scoring}}.
\]
The first term is the surplus gain that effort could create under the first-best allocation. The second term is the surplus loss from using the revenue- and incentive-optimal shadow-score allocation instead. Shadow scoring is valuable when the execution gains it makes possible are large relative to this misallocation loss.



\section{Format restrictions and the value of shadow scoring}
\label{sec:format_restrictions}

We have seen that the optimal effort-inducing mechanism is a shadow-score auction. In practice, however, sellers may be restricted to simpler formats. Procurement agencies, license authorities, and platforms often use anonymous second-price auctions with a common reserve, or at most bidder-specific reserve prices based on observable bidder categories. This section compares these restricted formats with the shadow-score allocation.

The shadow-score auctions' ranking channel is essential when execution incentives depend on the identity of the winner. To isolate the comparison, suppose in this section that bidders are symmetric on the value side, i.e., $F_i=F$ for all $i\in N$, with common support $[\ubar v,\vbar v]$ and common continuous, strictly increasing virtual value $\varphi$. The shadow-score auction assigns the object to a bidder with the highest nonnegative shadow score, $S_i(v_i)$, with handicap $h_i^\ast$ as defined by \eqref{eq:handicap}.

We compare this allocation with two restricted formats. An anonymous second-price auction with a common reserve uses a single threshold $r$ and assigns the object to the highest-value bidder among those with $v_i\geq r$. A second-price auction with bidder-specific reserves uses thresholds $(r_1,\ldots,r_n)$ and assigns the object to the highest-value bidder among those with $v_i\geq r_i$.

We use the following terminology. A bidder $i$ is \emph{active} under the shadow-score allocation if it wins with positive probability, i.e., if $\Pr(x_i^\ast(v)>0)>0$. A restricted reserve mechanism is \emph{effort-feasible} if it can be paired with bonus expenditures that induce execution effort.

For the payoff comparison, let $y=(x,q)$ denote an allocation-bonus pair, where $x=(x_1,\ldots,x_n)$ is an allocation rule and $q=(q_1,\ldots,q_n)$ satisfies $0\leq q_i(v)\leq B_i x_i(v)$ for every $i$ and $v$. Define the execution incentive generated by $y$ as
\[
I(y)\coloneqq \E\left[\sum_{i=1}^n \D_i q_i(v)\right],
\]
and define the seller's expected virtual payoff, net of execution bonuses, as
\[
\Pi(y)\coloneqq \E\left[ \sum_{i=1}^n x_i(v)\left(\varphi(v_i)+g_i\right) - \sum_{i=1}^n \sigma_i q_i(v) \right].
\]
An allocation-bonus pair is effort-feasible if $I(y)\geq\psi$. Restricted formats are allowed to choose their bonus expenditures subject to the same bounds $0\leq q_i(v)\leq B_i x_i(v)$. The comparison does not disadvantage reserve auctions by forcing them to use the shadow-score auction's bonus rule.

The shadow-score theorem characterizes the unconstrained optimal effort-inducing mechanism. A separate question is how much of this mechanism can be reproduced by familiar reserve formats. This question matters because many sellers are institutionally constrained to use anonymous reserves or bidder-specific eligibility rules. The comparison below shows that these formats miss the ranking channel created by execution incentives.

\begin{proposition}
\label{prop:format_restrictions}
Suppose values have common support $[\ubar v,\vbar v]$ and $\varphi$ is continuous and strictly increasing. Consider the optimal shadow-score allocation.

\begin{enumerate}[label=(\roman*)]

\item If all bidders have the same shadow handicap, i.e. $h_i^\ast=h_j^\ast$ for all $i,j\in N$, then the shadow-score allocation is implementable by an anonymous second-price auction with a common reserve.

Conversely, if two active bidders $i$ and $j$ have different shadow handicaps, $h_i^\ast\neq h_j^\ast$, then no anonymous second-price auction with a common reserve implements the shadow-score allocation.

\item A bidder-specific-reserve auction can implement the shadow-score allocation only if the shadow-score allocation never requires a pairwise ranking reversal among active eligible bidders. In particular, if two active bidders $i$ and $j$ have different shadow handicaps, $h_i^\ast\neq h_j^\ast$, then no bidder-specific-reserve auction implements the shadow-score allocation.

\item Let $y^\ast=(x^\ast,q^\ast)$ be an optimal shadow-score mechanism, and suppose the associated multiplier satisfies $\lambda^\ast>0$. Suppose also that shadow-score maximization is unique almost everywhere. If no effort-feasible restricted reserve mechanism has allocation rule $x^\ast$ almost everywhere, then every effort-feasible restricted reserve mechanism yields strictly lower expected payoff than the shadow-score auction, i.e., $\Pi(y)<\Pi(y^\ast)$.
\end{enumerate}
\end{proposition}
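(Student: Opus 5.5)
For part (i), I will work directly with the symmetric structure. If $h_i^\ast=h$ for all $i$, then $S_i(v_i\mid\lambda^\ast)=\varphi(v_i)+h$, and strict monotonicity of $\varphi$ makes the score ranking coincide with the value ranking. Eligibility $\varphi(v_i)+h\geq 0$ becomes $v_i\geq r$, where $r\coloneqq\inf\{z\in[\ubar v,\vbar v]\mid \varphi(z)+h\geq 0\}$, truncated to the support. The shadow-score allocation is then "highest value among bidders with $v_i\geq r$", which is exactly the anonymous second-price auction with reserve $r$. Ties have probability zero, and the critical value payments coincide with the second-price-with-reserve payments.

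For the converse, suppose two active bidders have $h_i^\ast>h_j^\ast$. Continuity of $\varphi$ and activity of both bidders let me find an open set of profiles with positive probability on which the following holds: $v_i<v_j$, both scores are nonnegative, $S_i>S_j$, and every $k\neq i,j$ has a strictly lower score. Concretely, I pick a value $v_j$ at which $j$ can win and a $v_i$ slightly below it. Since $\varphi(v_j)-\varphi(v_i)$ is small while $h_i^\ast-h_j^\ast$ is fixed, $i$ wins. Then I perturb the profile to an open neighborhood and use densities $f>0$. On this set the shadow-score allocation gives the object to $i$, whereas any common-reserve auction either gives it to $j$ or does not sell. This requires that $i$ still clears $0$ for values slightly below $j$'s, which I will handle by choosing $v_j$ in the interior of $j$'s winning region. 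I expect this construction to be the main technical obstacle: activity alone does not immediately guarantee that such a region exists. My first attempt will use activity of $j$ and continuity of $\varphi$, together with $S_i(v_j)>S_j(v_j)$, to show that $i$ also has nonnegative score near $v_j$.

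Part (ii) uses the same construction. A bidder-specific-reserve auction always awards the object to the highest-value eligible bidder, so on the positive-probability region above, with $v_i<v_j$ and both bidders eligible under the shadow allocation, it cannot select $i$. One subtlety remains. Reserves could make $j$ ineligible there, but then they would also exclude $j$ on profiles where the shadow allocation requires $j$ to win at a similar $v_j$ against weaker opponents. Alternatively, I can note that eligibility of $j$ at $v_j$ is the property that makes $j$ active near that value. The general "only if" statement follows the same way: any allocation produced by reserves is value-monotone among eligible bidders.

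For part (iii), I will use Lemma~\ref{lem:multiplier} and Theorem~\ref{thm:execution_score}. For any effort-feasible $y$, weak duality gives
\[
\Pi(y)\leq \Pi(y)+\lambda^\ast\big(I(y)-\psi\big)=L(y,\lambda^\ast)\leq \max_{y'}L(y',\lambda^\ast)=L(y^\ast,\lambda^\ast)=\Pi(y^\ast),
\]
where the final equality uses complementary slackness, which holds because $\lambda^\ast>0$. The Lagrangian is maximized pointwise by the shadow-score allocation. Uniqueness a.e. of the score maximizer therefore implies that any Lagrangian maximizer must have allocation $x^\ast$ a.e. Since bonus choices only affect the knife-edge terms, and those terms leave the allocation unchanged, equality $L(y,\lambda^\ast)=L(y^\ast,\lambda^\ast)$ would force $x=x^\ast$ a.e. By hypothesis no reserve mechanism achieves this, so the second inequality is strict and $\Pi(y)<\Pi(y^\ast)$.
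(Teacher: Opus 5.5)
Your proposal is correct and follows essentially the same route as the paper. The three ingredients match: reducing equal handicaps to a common reserve; building a positive-measure ranking-reversal region that no common or bidder-specific reserve can reproduce, because reserves preserve the value order among eligible bidders; and the weak-duality chain $\Pi(y)\leq \mathcal L_{\lambda^\ast}(y)<\mathcal L_{\lambda^\ast}(y^\ast)=\Pi(y^\ast)$, with strictness from a.e.\ uniqueness of the score maximizer.

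Your construction is somewhat more careful than the paper's. Starting from a profile where $j$ strictly wins handles the other bidders $k\neq i,j$ and pins down $j$'s eligibility locally. The paper instead appeals to overlapping upper-tail eligibility intervals and does not address third bidders. One small correction: on your reversal region the common-reserve auction may award the object to some $k$ with $v_k>v_j$ rather than to $j$, which does not affect the conclusion.
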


Proposition~\ref{prop:format_restrictions} shows that the limitation of reserve formats is structural. A common reserve can change only the sale threshold. Bidder-specific reserves can also change which bidders are eligible to win. Neither format can generally change the ranking of bidders who remain eligible. Shadow scores can do exactly that. When execution incentives make one bidder's allocation state more valuable than another's, the optimal mechanism may award the object to a lower-value eligible bidder because doing so relaxes the execution agent's incentive constraint. This ranking channel cannot generally be reproduced by reserves.

The restriction is payoff-relevant when the execution constraint has positive shadow value. In that case, any effort-feasible reserve format that cannot reproduce the shadow-score allocation leaves value on the table. The next example quantifies this loss in a two-bidder uniform environment.


\subsection{A two-bidder uniform example}
\label{subsec:uniform_example}

We now quantify the loss from restricting attention to common reserves in a simple example. There are two bidders with independent values uniformly distributed on $[0,1]$. Bidder 1 is execution-relevant and bidder 2 is not. The execution requirement is represented in reduced form by the constraint
\begin{equation}
\Pr(\text{bidder 1 wins})\geq \kappa .
\label{eq:kappa_constraint}
\end{equation}
There are no direct execution costs in this example. The virtual value is $\varphi(v)=2v-1$.

Without the execution constraint, the optimal auction is the standard Myerson auction \citep{Myerson1981}, i.e., a second-price auction with reserve $1/2$. Bidder 1's winning probability under that auction is
\[
\Pr(v_1\geq 1/2,\ v_1\geq v_2) = \int_{1/2}^1 v_1\,dv_1 = \frac{3}{8}.
\]
The standard Myerson auction violates
\eqref{eq:kappa_constraint} whenever $\kappa>3/8$.

\begin{proposition}
\label{prop:common_reserve_loss}
In the two-bidder uniform example, the following statements hold.

\begin{enumerate}[label=(\roman*)]

\item The standard Myerson auction satisfies the execution constraint if and only if $\kappa\leq 3/8$.

\item For $\kappa\in\left(3/8, 1/2\right]$, the best anonymous second-price auction with a common reserve uses a reserve $r^C(\kappa)=\sqrt{1-2\kappa}$, and yields expected virtual surplus
\[
R^C(\kappa) = \frac{1}{3} + \left(r^C(\kappa)\right)^2 - \frac{4}{3}\left(r^C(\kappa)\right)^3.
\]
For $\kappa>1/2$, no common-reserve auction can satisfy the execution constraint.

\item For $\kappa\in\left[3/8, 7/8\right]$, the optimal shadow-score auction gives bidder~1 a virtual-score premium equal to $2\left(\kappa-3/8\right)$. The virtual scores are
\[
S_1(v_1) = 2v_1-1+2\left(\kappa-\frac{3}{8}\right),
\]
\[
S_2(v_2) = 2v_2-1.
\]
The resulting expected virtual surplus is
\[
R^S(\kappa) = \frac{5}{12} - \left(\kappa-\frac{3}{8}\right)^2.
\]

\item For every $\kappa\in\left(3/8, 1/2\right]$, the shadow-score auction strictly dominates the best feasible common-reserve auction, as $R^S(\kappa)>R^C(\kappa)$. Over the range in which common reserves remain feasible, the loss from using the best common reserve rather than the shadow-score auction is maximized at $\kappa=1/2$, and
\[
\max_{\kappa\in[3/8,1/2]}\left( R^S(\kappa)-R^C(\kappa)  \right)=\frac{13}{192}.
\]
\end{enumerate}
\end{proposition}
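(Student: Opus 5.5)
The plan is direct computation in each part. The one structural input is a Lagrangian (weak-duality) bound for part (iii), which I will get from a pointwise-maximum argument in the spirit of Lemma~\ref{lem:multiplier}. Throughout I use $\varphi(v)=2v-1$, and I write expected virtual surplus as $\E[x_1\varphi(v_1)+x_2\varphi(v_2)]$.

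For (i) and (ii), fix a common reserve $r\in[0,1]$. Bidder~1 wins exactly when $v_1\geq\max\{r,v_2\}$, so
\[
\Pr(\text{1 wins})=\int_r^1 v\,dv=\frac{1-r^2}{2}.
\]
This is at most $1/2$, which gives infeasibility for $\kappa>1/2$. At $r=1/2$ it equals $3/8$, which gives (i). The constraint $\Pr(\text{1 wins})\geq\kappa$ is equivalent to $r\leq\sqrt{1-2\kappa}$. The highest value $v_{(1)}$ has density $2v$, so the virtual surplus is
\[
\int_r^1(2v-1)\,2v\,dv=\frac13+r^2-\frac43 r^3.
\]
Its derivative in $r$ is $2r(1-2r)$, which is positive on $(0,1/2)$. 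For $\kappa>3/8$ we have $\sqrt{1-2\kappa}<1/2$, so the best feasible reserve is the boundary value $r^C(\kappa)=\sqrt{1-2\kappa}$, and $R^C$ follows.

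For (iii), give bidder~1 an additive premium $c\in[0,1]$ and let $H(c)=\E[\max\{0,\varphi(v_1)+c,\varphi(v_2)\}]$. Under the premium, bidder~1 wins when $v_1\geq (1-c)/2$ and $v_2\leq v_1+c/2$. Splitting the integral at $v_1=1-c/2$ and substituting $u=v_1+c/2$ gives
\[
P(c)=\int_{1/2}^{1}u\,du+\frac c2=\frac38+\frac c2 .
\]
Setting $P(c)=\kappa$ gives $c=2(\kappa-3/8)$, and the requirement $c\leq1$ gives the range $\kappa\leq7/8$.

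Optimality of this allocation comes from the pointwise bound. For any allocation $x$ with $\Pr(\text{1 wins})\geq\kappa$, pointwise maximization gives
\[
\E[x_1\varphi_1+x_2\varphi_2]\leq H(c)-c\,\Pr(\text{1 wins})\leq H(c)-c\kappa .
\]
The score allocation attains both inequalities with equality. Its value is computed by the envelope identity $H'(c)=P(c)$, with $H(0)=5/12$ (the Myerson value $\frac13+\frac14-\frac16$):
\[
R^S=H(c)-cP(c)=\frac5{12}+\frac{3c}8+\frac{c^2}4-\frac{3c}8-\frac{c^2}2=\frac5{12}-\frac{c^2}{4}=\frac5{12}-\Bigl(\kappa-\frac38\Bigr)^2.
\]

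For (iv), which I expect to be the main obstacle, strictness and the location of the maximum must be shown on the whole interval, not just at the endpoints. I will parametrize by $r=\sqrt{1-2\kappa}\in[0,1/2)$, so that $\kappa-3/8=(1-4r^2)/8$. Define
\[
D(r)=R^S-R^C=\frac1{12}-\frac{(1-4r^2)^2}{64}-r^2+\frac43r^3 .
\]
Differentiating gives
\[
D'(r)=r\Bigl(-\frac74+4r-r^2\Bigr).
\]
The quadratic factor is increasing on $[0,1/2]$ and equals $0$ at $r=1/2$, so it is negative on $[0,1/2)$. Hence $D$ is strictly decreasing in $r$, and therefore strictly increasing in $\kappa$. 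Since $D(1/2)=0$ (that is, $\kappa=3/8$), we get $D>0$ on $(3/8,1/2]$, which is the strict dominance claim. The maximum is attained at $r=0$ ($\kappa=1/2$), where
\[
D=\frac{77}{192}-\frac{64}{192}=\frac{13}{192}.
\]
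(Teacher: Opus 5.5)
Your proof is correct and follows the paper's route: the same reserve computation for (i)--(ii), a bidder-1 premium pinned down by the binding constraint $3/8+c/2=\kappa$ for (iii), and the same derivative $r(-7/4+4r-r^2)$ for (iv). The differences are minor. You certify optimality in (iii) explicitly by weak duality, where the paper relies on the general Lagrangian argument. You also obtain $R^S$ from the envelope identity $H'(c)=P(c)$ rather than by integrating over the three winning regions. Finally, you get strict dominance in (iv) from monotonicity of $D$ together with $D(1/2)=0$, instead of the paper's factorization $-(2r-1)^2(12r^2-52r-13)/192$.
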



\begin{figure}
    \centering
    \includegraphics[width=0.75\linewidth]{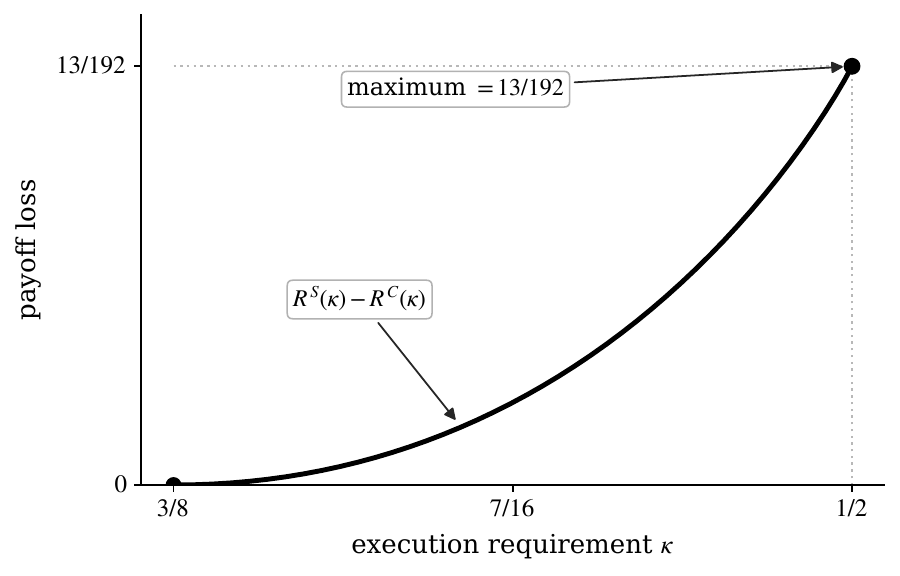}
    \caption{Loss from restricting attention to common reserves in the two-bidder uniform example. The figure plots the payoff loss over the range $\kappa\in[3/8,1/2]$, where common reserves are feasible and the execution constraint binds. At $\kappa=3/8$, the standard Myerson auction already satisfies the execution requirement, so the loss is zero. The loss is maximized at $\kappa=1/2$, where it equals $13/192$. For $\kappa>1/2$, no common-reserve auction satisfies the execution requirement.}
\label{fig:common_reserve_loss}
\end{figure}


The example separates two ways of satisfying the execution requirement. A common reserve can increase the probability that bidder 1 wins only by lowering the sale threshold. This becomes infeasible once $\kappa>1/2$. A shadow-score auction instead changes the pairwise ranking of the two bidders. It lets bidder 1 win even when bidder 2 has a slightly higher value, precisely because assigning the object to bidder 1 relaxes the execution requirement. This ranking channel is what reserve prices alone cannot reproduce.

Figure~\ref{fig:common_reserve_loss} illustrates how the cost of using common reserves varies with the execution requirement $\kappa$. In this example, $\kappa$ is the minimum probability with which bidder 1, the execution-relevant bidder, must receive the object. At $\kappa=3/8$, the unconstrained Myerson auction already satisfies the requirement, so the restriction to common reserves is costless. As $\kappa$ increases, a common reserve can raise bidder 1's winning probability only by lowering the sale threshold, whereas the shadow-score auction uses the ranking channel and gives bidder 1 a score premium. The loss from using the best common reserve rises over the feasible range and reaches $13/192$ at $\kappa=1/2$. For larger requirements, no common-reserve auction can satisfy the constraint.


\section{When reduced-form scoring breaks down}
\label{sec:beyond_scoring}

The baseline admits a reduced-form scoring interpretation. After the optimal multiplier is determined, the allocation can be written as if the seller had bidder-specific payoff shifters. This equivalence relies on two restrictions. Execution primitives must be public or certifiable, and execution effects must be winner-specific rather than report-dependent or set-dependent. This section studies departures from those restrictions.

The results separate four ways in which the standard scoring representation can fail. Private reliability restricts what can be elicited from bidders. Certification determines which execution characteristics become public enough to be scored. Execution externalities make the shadow value of assigning the object to one bidder depend on other bidders' reports. Set-dependent execution makes the relevant shadow value attach to a winner set rather than to individual bidders. In all four cases, the optimal auction is a constrained shadow allocation, not generally a standard scoring auction with primitive bidder qualities.

The characterizations in this section are less closed-form than Theorem~\ref{thm:execution_score}. In the baseline, the execution problem generates bidder-specific shadow premia, so the optimal allocation can be implemented by a shadow-score auction. In the environments below, the relevant execution information is not directly scoreable, or the execution value is not bidder-separable. The optimal mechanism is characterized as the solution to a constrained shadow-allocation problem. These results identify what replaces the baseline score auction and why the standard scoring representation is no longer available.


\subsection{A constrained shadow-allocation template}
\label{subsec:constrained_template}

The extensions in this section follow the same logic. Each one first asks what the mechanism is allowed to observe and use. In some cases the mechanism can use only value reports. In others it can use willingness-to-pay reports, public certification signals, or the full report profile. Once this information is fixed, the public execution primitives of the baseline are replaced by the corresponding effective primitives: posterior, signal-contingent, or report-dependent. The seller then solves the same kind of shadow-allocation problem as before, but now subject to the incentive-compatibility constraints appropriate for that environment.

Throughout the individual-allocation extensions, let $\mathcal X^{IC}$ denote the set of feasible possibly randomized allocation rules $x$ satisfying $\sum_{i=1}^n x_i(v)\leq 1$ and $0\leq x_i(v)\leq 1$ for every $v$, and such that, for each bidder $i$ and every $v_{-i}$, the allocation probability $x_i(v_i,v_{-i})$ is nondecreasing in $v_i$. Any allocation in $\mathcal X^{IC}$ is implementable in dominant strategies by the envelope payment formula
\begin{equation}\label{eq:envelope}
p_i(v_i,v_{-i}) = v_i x_i(v_i,v_{-i}) - \int_{\underline v_i}^{v_i}x_i(s,v_{-i})\,ds,
\end{equation}
with the lowest type receiving zero utility. For deterministic allocations, this formula reduces to critical value payments.

Two variants of this allocation set will also be used. In the payoff-relevant reliability case, the one-dimensional payoff type is willingness to pay rather than value. I denote by $\mathcal X^{IC,\theta}$ the set of feasible allocation rules that are monotone in each bidder's willingness to pay, denoted by $\theta_i$. In the public certification case, the allocation may also condition on the public signal profile $s$. I denote by $\mathcal X^{IC,\mathcal C}$ the set of feasible allocation rules $x(v,s)$ that are monotone in $v_i$, for each bidder $i$ and each fixed signal profile $s$.

The strict-feasibility conditions below are all versions of the same requirement. There must exist a feasible monotone allocation-bonus pair that generates more than the required execution incentive $\psi$. This condition ensures that the executor's incentive constraint admits a supporting multiplier. The resulting allocation problem is then a constrained shadow-allocation problem. If the pointwise maximizer of the shadow objective is monotone, the mechanism is a shadow-score auction. If it is not monotone, the optimal mechanism is the monotonicity-constrained, or ironed, version.

Assumption~\ref{ass:regularity} is used in the baseline to show that shadow-score maximization is monotone and truthfully implementable. In the applications analyzed by Section~\ref{sec:beyond_scoring}, pointwise maximization may fail monotonicity. The optimal mechanisms are characterized by maximizing shadow objectives over incentive compatible allocation sets. For these constrained characterizations, monotonicity of virtual values is not required for implementability, because monotonicity is imposed directly. Regularity is imposed only when a result claims that an unconstrained score rule, or set-level score rule, is itself implementable.


\subsection{Private payoff-irrelevant reliability}
\label{sec:private}

The analysis so far assumes that the execution parameters $\D_i$, $\sigma_i$, $B_i$, and $g_i$ are public. This is natural when execution capacity is certified, audited, or otherwise observable before the auction. In many applications, however, the relevant reliability of a bidder is partly private. A firm may know more than the seller about its ability to comply with deployment obligations, its internal organization, the quality of its subcontractors, its exposure to litigation, or the ease with which a regulator can monitor it.

Private reliability creates a distinct problem when it matters for execution but not for the bidder's own payoff. A bidder may prefer winning at a low price independently of whether it is easy or difficult for the regulator to monitor. In that case, reliability is not a standard private value. It is a private characteristic that affects the seller's execution problem but not the bidder's preferences. This section shows that such information cannot simply be inserted into a shadow score and elicited as if it were an ordinary payoff-relevant type. Incentive compatibility forces reliability types with the same value to receive the same interim treatment. I then study the robust design problem in which the mechanism does not condition allocations or execution rewards on unverifiable reliability reports.

The payoff-irrelevance assumption applies when a bidder cares about winning and paying, but does not directly internalize how easy it is for the execution agent to monitor, regulate, or coordinate with it. In applications where reliability also affects the bidder's continuation payoff, reputation, penalties, or operating profits, reliability becomes payoff-relevant and the bunching result below need not apply in this form (see Section~\ref{sec:relevant}).

Suppose bidder $i$'s type is $(v_i,a_i)$, where $v_i$ is its private value and $a_i\in A_i$ is a reliability parameter. The reliability parameter affects execution primitives, such as $\D_i(a_i)$, $\sigma_i(a_i)$, $B_i(a_i)$, and $g_i(a_i)$, but it does not enter bidder $i$'s payoff directly. If bidder $i$ receives the object and pays $p_i$, its utility is $v_i-p_i$, independently of $a_i$. A direct mechanism asks bidder $i$ to report both $v_i$ and $a_i$.

For a direct mechanism, let $X_i(v_i,a_i)$ and $P_i(v_i,a_i)$ denote bidder $i$'s interim allocation probability and expected payment when it reports $(v_i,a_i)$ truthfully, taking expectations over the other bidders' types. Let $\mathcal T_i\subseteq [\underline v_i,\overline v_i]\times A_i$ denote bidder $i$'s type support. For a reliability type $a_i$, write
\[
V_i(a_i)\coloneqq\{v_i \mid (v_i,a_i)\in\mathcal T_i\}
\]
for the set of values at which reliability type $a_i$ can occur. The bunching result below compares reliability types only at values where both types are in the support. This qualification matters because reliability may be correlated with value. If some reliability types occur only at some values, then the value report itself may contain information about reliability.

\begin{proposition}
\label{prop:private_bunching}
Suppose reliability $a_i$ is payoff-irrelevant for bidder $i$. In any Bayesian incentive compatible mechanism, for every two reliability types $a_i,a_i'\in A_i$, we have $X_i(v_i,a_i)=X_i(v_i,a_i')$ and $P_i(v_i,a_i)=P_i(v_i,a_i')$ for almost every $v_i\in V_i(a_i)\cap V_i(a_i')$. Conditional on any value at which both reliability types can occur, the two types are bunched. The mechanism cannot use unverifiable, payoff-irrelevant reliability reports to give different interim allocation probabilities or expected payments to reliability types that share the same value.
\end{proposition}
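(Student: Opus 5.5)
The argument rests on Bayesian incentive compatibility alone: interim utility depends on the reliability report only through $(X_i,P_i)$, and $a_i$ does not enter payoffs. I fix bidder $i$ and two reliability types $a_i,a_i'$, and write $U_i(v_i,a_i)\coloneqq v_iX_i(v_i,a_i)-P_i(v_i,a_i)$ for the interim utility of truthful reporting. Since bidder $i$'s payoff from any report $(\hat v,\hat a)$ is $v_iX_i(\hat v,\hat a)-P_i(\hat v,\hat a)$ regardless of its true $a_i$, a type $(v_i,a_i)$ can always mimic $(v_i,a_i')$ whenever the latter is in the support. Bayesian incentive compatibility then gives, for every $v_i\in V_i(a_i)\cap V_i(a_i')$,
\[
v_iX_i(v_i,a_i)-P_i(v_i,a_i)\geq v_iX_i(v_i,a_i')-P_i(v_i,a_i'),
\]
and the reverse inequality from the type $(v_i,a_i')$. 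Hence $U_i(v_i,a_i)=U_i(v_i,a_i')$ on the common value set. This is not yet bunching of $(X_i,P_i)$ separately, because two different pairs can give the same utility at a single value.

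To separate $X_i$ from $P_i$, I use the standard envelope argument within each reliability type. For fixed $a_i$, the function $U_i(\cdot,a_i)$ is the upper envelope, over reports with reliability component $a_i$, of the affine functions $v\mapsto vX_i(\hat v,a_i)-P_i(\hat v,a_i)$. The key point is that type $(v_i,a_i)$ may deviate to any report $(\hat v,\hat a)$, including reports with the other reliability type. So $U_i(v_i,a_i)=\max_{(\hat v,\hat a)}\{v_iX_i(\hat v,\hat a)-P_i(\hat v,\hat a)\}$, and this maximum is the same convex function $\mathcal U_i(v_i)$ for every $a_i$. A convex function is differentiable almost everywhere. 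At every differentiability point $v_i$ interior to $V_i(a_i)\cap V_i(a_i')$, the affine function chosen by type $(v_i,a_i)$ touches $\mathcal U_i$ from below at $v_i$, so its slope $X_i(v_i,a_i)$ must equal $\mathcal U_i'(v_i)$. The same holds for $a_i'$, giving $X_i(v_i,a_i)=X_i(v_i,a_i')$. Combining this with equal utilities then yields $P_i(v_i,a_i)=P_i(v_i,a_i')$.

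The main obstacle is the support qualification. The envelope argument needs $\mathcal U_i$ to be defined on values that both types attain, and the slope identification needs $v_i$ to be interior in the sense that deviations on both sides are available. I will therefore define $\mathcal U_i(v)\coloneqq\sup_{(\hat v,\hat a)\in\mathcal T_i}\{vX_i(\hat v,\hat a)-P_i(\hat v,\hat a)\}$ on all of $[\underline v_i,\overline v_i]$. This function is convex and finite, since $X_i\in[0,1]$ and participation bounds the payments. I will note that incentive compatibility makes $U_i(v,a)=\mathcal U_i(v)$ for every $(v,a)\in\mathcal T_i$, and that a supporting line of a convex function at a differentiability point is unique. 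Nonuniqueness of supporting lines can occur only at the countably many kinks of $\mathcal U_i$, which justifies the ``almost every'' qualifier. Values where only one reliability type occurs impose no restriction, which is why the statement restricts to $V_i(a_i)\cap V_i(a_i')$.
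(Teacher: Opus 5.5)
Your proof is correct and follows the paper's route. Mutual mimicry at a common value equalizes truthful utilities, an envelope argument identifies $X_i(v_i,\cdot)$ with the derivative of that common utility, and equal payments then follow from equal utilities.

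Your envelope step, where both types' truthful utility lines support the single convex function $\mathcal U_i$ on all of $[\underline v_i,\overline v_i]$, is slightly tighter than the paper's type-by-type envelope formula. It needs no interval structure of $V_i(a_i)$ and yields equality off a countable set. Two small corrections. First, finiteness of $\mathcal U_i$ follows from BIC itself, since $\mathcal U_i$ equals $U_i(v_i,a_i)$ at any type and has slopes in $[0,1]$; it does not follow from participation, which bounds payments only from above. Second, interiority is needed only relative to $[\underline v_i,\overline v_i]$, not relative to $V_i(a_i)\cap V_i(a_i')$.
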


Proposition~\ref{prop:private_bunching} is a conditional bunching result. It says that, at a value where two reliability types can both occur, incentive compatibility forces them to receive the same interim allocation probability and expected payment. However, it does not claim that reliability is irrelevant when it is statistically related with value. In that case, the value report may reveal information about reliability, and the designer can use the posterior execution values associated with the reported one.

The bunching result rules out a simple extension of the baseline shadow-score auction. If reliability were public, the designer could insert $a_i$-dependent execution parameters into the score and allocate accordingly. When reliability is private, unverifiable, and payoff-irrelevant, this is not possible. Proposition~\ref{prop:private_bunching} implies that all reliability types with the same value must receive the same interim allocation and payment. Hence, any public reliability shadow score that separates such types cannot be implemented by asking bidders to report reliability. The next corollary makes this implication explicit.

\begin{corollary}
\label{cor:public_score_unattainable}
 Consider the shadow-score allocation that would be optimal if reliability were public. If, for some bidder $i$, two reliability types $a_i,a_i'$, and a positive measure set of values $V\subseteq V_i(a_i)\cap V_i(a_i')$, the public reliability allocation induces different interim allocation probabilities, i.e., $X_i^P(v_i,a_i)\neq X_i^P(v_i,a_i')$ for $v_i\in V$, then no Bayesian incentive compatible direct mechanism can implement that allocation using only unverifiable reports of $a_i$.
\end{corollary}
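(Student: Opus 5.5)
The plan is a direct contradiction argument built on Proposition~\ref{prop:private_bunching}. I suppose, toward a contradiction, that some Bayesian incentive compatible direct mechanism, which asks bidders to report $(v_i,a_i)$ and conditions only on these unverifiable reports, implements the public-reliability shadow-score allocation. By ``implements'' I mean that under truthful reporting its ex post allocation rule coincides with the public allocation $x^P$ almost everywhere on the type space.

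The first step is to pass from ex post to interim objects. Under truthful reporting, bidder $i$'s interim allocation probability in the candidate mechanism at type $(v_i,a_i)$ is obtained by integrating $x_i$ over the other bidders' types $(v_{-i},a_{-i})$. Since the candidate mechanism's ex post allocation agrees with $x^P$ almost everywhere, Fubini's theorem implies that its interim allocation $X_i(v_i,a_i)$ equals $X_i^P(v_i,a_i)$ for almost every $(v_i,a_i)$. I will need these interim values only at the two reliability types $a_i,a_i'$ named in the statement, so I should make sure the equality holds for almost every $v_i\in V$ at those two fixed types. This is immediate if $A_i$ is finite or the two types carry positive mass. In general I would state the implementation requirement directly at the interim level for bidder $i$'s types, which is what implementing an allocation in a Bayesian mechanism amounts to.

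The second step applies Proposition~\ref{prop:private_bunching} to the candidate mechanism. This gives $X_i(v_i,a_i)=X_i(v_i,a_i')$ for almost every $v_i\in V_i(a_i)\cap V_i(a_i')$, and hence for almost every $v_i\in V$. Combining this with the first step yields $X_i^P(v_i,a_i)=X_i^P(v_i,a_i')$ for almost every $v_i\in V$. The hypothesis, however, says these two differ on all of $V$, and $V$ has positive measure, so we have a contradiction. Therefore no such mechanism exists. The same argument extends to mechanisms that also condition execution rewards on reliability reports, because the proposition constrains the bidder-side interim allocation regardless of how bonuses are set.

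The main obstacle is the measure-theoretic bookkeeping in the first step: making precise that ``implements that allocation'' transfers to almost-everywhere equality of interim allocations at the specific types $a_i,a_i'$, rather than only on a full-measure set that might exclude them. I would handle this by stating the notion of implementation at the interim level (or by noting that the corollary is meaningful only when those types are non-null). Once that is fixed, the remaining argument is a one-line application of the bunching proposition.
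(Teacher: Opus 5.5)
Your proposal is correct and follows essentially the same route as the paper: suppose a Bayesian incentive compatible direct mechanism implements the public-reliability allocation, so its interim allocations coincide with $X_i^P$, then invoke Proposition~\ref{prop:private_bunching} to get $X_i(v_i,a_i)=X_i(v_i,a_i')$ for almost every $v_i\in V$, contradicting the positive-measure difference. Your extra care in passing from ex post agreement to interim agreement at the two specific reliability types is a refinement. The paper handles this implicitly by taking implementation to mean coincidence of interim allocation probabilities, which is the same resolution you propose.
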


The designer may still use reliability information if it is publicly certified, or if it is statistically related to the value report. The second case creates a posterior design problem. I focus on mechanisms that do not condition on unverifiable reliability reports themselves. Under this restriction, the auction can use only the information about reliability contained in the bidder's value report.

For the rest of this section, suppose that the bonus cap is public and does not depend on unverifiable reliability. Denote it by $\bar B_i$. For a fixed value report $v_i$, define the posterior execution parameters $\bar\Delta_i(v_i) \coloneqq  \E[\Delta_i(a_i)\mid v_i]$, $\bar\sigma_i(v_i) \coloneqq  \E[\sigma_i(a_i)\mid v_i]$, and $\bar g_i(v_i) \coloneqq  \E[g_i(a_i)\mid v_i]$. The posterior incentive generated by an allocation-bonus pair $(x,q)$ is \[ \bar I(x,q) \coloneqq  \E\left[ \sum_{i=1}^n \bar\Delta_i(v_i)q_i(v) \right], \] where $0\leq q_i(v)\leq \bar B_i x_i(v)$. I impose the corresponding strict feasibility condition,
\begin{equation}
\exists\,(\tilde x,\tilde q) \quad\text{such that}\quad \tilde x\in\mathcal X^{IC},\quad 0\leq \tilde q_i(v)\leq \bar B_i\tilde x_i(v), \quad\text{and}\quad \bar I(\tilde x,\tilde q)>\psi.
\label{eq:private_slater}
\end{equation}

The strict-feasibility condition ensures that the posterior execution problem admits a supporting multiplier. The next proposition characterizes the optimal effort-inducing mechanism in this environment. The logic parallels Theorem~\ref{thm:execution_score}, but with an important difference. Since unverifiable reliability cannot be conditioned on directly, the relevant execution parameters are posterior objects, conditional on the reported value. The resulting posterior shadow score need not be increasing in value. Hence, the designer cannot generally allocate pointwise to the highest posterior shadow score. The optimal allocation instead maximizes posterior shadow surplus subject to the monotonicity constraints required for incentive compatibility.

For each bidder $i$, define the posterior execution value of assigning the object to bidder $i$ by
\[
\bar\chi_i\left(v_i\mid\lambda^P\right) \coloneqq  \bar B_i \max\left\{\lambda^P\bar\Delta_i(v_i)-\bar\sigma_i(v_i),0\right\}.
\]
The corresponding posterior shadow score is
\[
\bar S_i\left(v_i\mid\lambda^P\right) \coloneqq  \varphi_i(v_i)+\bar g_i(v_i)+\bar\chi_i\left(v_i\mid\lambda^P\right).
\]
This score is the analogue of the shadow score in Theorem~\ref{thm:execution_score}, with public execution parameters replaced by their posterior conditional on the reported value. Unlike the benchmark score, however, $\bar S_i(v_i\mid\lambda)$ need not be increasing in $v_i$. Hence, maximization of posterior shadow scores may violate incentive compatibility.

\begin{proposition}
\label{prop:optimal_private_reliability}
Suppose reliability is private, unverifiable, and payoff-irrelevant, and suppose the mechanism cannot condition on unverifiable reliability reports. Suppose \eqref{eq:private_slater} holds. Then, there exists an optimal effort-inducing mechanism and a multiplier $\lambda^P\geq0$ such that the optimal allocation solves the monotonicity-constrained problem
\[
x^P \in \argmax_{x\in\mathcal X^{IC}} \E\left[ \sum_{i=1}^n x_i(v)\bar S_i(v_i\mid\lambda^P) \right].
\]
The associated bonus expenditure satisfies
\[
q_i^P(v) \in \argmax_{0\leq q\leq \bar B_i x_i^P(v)} \left(\lambda^P\bar\Delta_i(v_i)-\bar\sigma_i(v_i)\right)q.
\]
The allocation is implementable in dominant strategies by the envelope payment formula. If the maximizer of posterior shadow scores is monotone, then the optimal mechanism is the posterior shadow-score auction. If the maximizer is not monotone, the optimal mechanism is its monotonicity-constrained, or ironed, version.
\end{proposition}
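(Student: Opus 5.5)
The plan mirrors the proof of Theorem~\ref{thm:execution_score}, with two substitutions: posterior execution parameters replace public ones, and the monotonicity constraint defining $\mathcal X^{IC}$ replaces the pointwise regularity argument.

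\textbf{Step 1: reduce to value-only mechanisms.} Since the mechanism cannot condition on $a_i$, allocations, bonuses and payments are functions of $v$ alone. Dominant-strategy (or Bayesian) incentive compatibility in the one-dimensional value report gives the usual envelope representation. By the standard Myerson integration by parts, the seller's expected revenue equals $\E[\sum_i x_i(v)\varphi_i(v_i)]$ for any $x\in\mathcal X^{IC}$ with payments \eqref{eq:envelope}.

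\textbf{Step 2: replace execution terms by posteriors.} Because reliability types are independent across bidders given values, and $x_i,q_i$ depend only on $v$, the law of iterated expectations conditioning on $v$ gives
\[
\E[\Delta_i(a_i)q_i(v)]=\E[\bar\Delta_i(v_i)q_i(v)],
\]
and similarly for $\sigma_i$ and $g_i$. The cap $\bar B_i$ is public. Hence the executor's incentive constraint becomes $\bar I(x,q)\geq\psi$, and the seller's objective becomes
\[
\E\Bigl[\textstyle\sum_i x_i(\varphi_i+\bar g_i)-\sum_i\bar\sigma_i q_i\Bigr].
\]
I would check that the executor's effort incentive is computed in expectation over $(v,a)$, so this averaging is legitimate. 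The result is a linear program in $(x,q)$ over the convex set $\{x\in\mathcal X^{IC},\ 0\leq q_i\leq \bar B_i x_i\}$ with a single linear constraint.

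\textbf{Step 3: multiplier and pointwise bonus optimization.} Condition \eqref{eq:private_slater} is a Slater point for this convex problem. I would invoke the same Lagrangian duality argument as in Lemma~\ref{lem:multiplier}, now over the convex set including monotonicity, assuming the value is attained as the paper stipulates. This yields $\lambda^P\geq0$ such that an optimal pair maximizes the Lagrangian
\[
\Pi+\lambda^P(\bar I-\psi),
\]
with complementary slackness. For fixed $x$, the Lagrangian is separable in $q$ state by state, and the bounds $0\leq q_i\leq\bar B_i x_i$ impose no monotonicity restriction on $q$. So $q_i$ maximizes $(\lambda^P\bar\Delta_i-\bar\sigma_i)q$, giving the stated bonus rule and contributing $x_i\bar\chi_i(v_i\mid\lambda^P)$. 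Substituting back leaves $\max_{x\in\mathcal X^{IC}}\E[\sum_i x_i\bar S_i]$.

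\textbf{Step 4: implementation and the two cases.} Implementability follows from \eqref{eq:envelope}. If the pointwise maximizer of $\bar S_i$ (highest nonnegative score) is monotone, it lies in $\mathcal X^{IC}$ and is therefore optimal; this is the posterior shadow-score auction. Otherwise the constrained maximizer is by definition the ironed version.

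\textbf{Main obstacle.} I expect the delicate point to be the exchange of maximization order in Step 3: the Lagrangian must be maximized jointly over $(x,q)$ with $x$ constrained to $\mathcal X^{IC}$. I would handle it by maximizing over $q$ first for each fixed $x$, which is valid because the constraint on $q$ is a state-wise box depending on $x$ only through the upper bound.

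A second issue is the existence of a saddle point in the infinite-dimensional space. My first attempt would use the finite-dimensional trick: map feasible pairs to their values $(\Pi,\bar I)\in\mathbb R^2$, observe that this image is convex, and apply a separating hyperplane using the Slater point.
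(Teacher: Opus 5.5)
Your proposal is correct and follows essentially the same route as the paper: restrict to value-only mechanisms, replace execution primitives by their posteriors given $v_i$, obtain $\lambda^P$ from the value-function argument of Lemma~\ref{lem:multiplier} applied to the convex monotone feasible set, maximize out $q$ state by state, and then maximize the posterior shadow objective over $\mathcal X^{IC}$ with envelope payments. Your explicit handling of maximizing over $q$ before $x$ fills in a detail the paper leaves implicit. In the iterated-expectations step, the condition actually needed is that $a_i$ is independent of $v_{-i}$ given $v_i$, which follows from independence of the type pairs $(v_i,a_i)$ across bidders; independence of the $a_i$ across bidders given $v$, as you phrase it, is not the right condition, though the paper also leaves this assumption unstated.
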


Proposition~\ref{prop:optimal_private_reliability} shows how the scoring problem changes under this robustness restriction. The designer does not score self-reported reliability directly. Instead, the mechanism uses posterior execution values inferred from the value report.

The next example shows that the monotonicity constraint in Proposition~\ref{prop:optimal_private_reliability} can bind even with one bidder. The example uses a single bidder only to isolate the monotonicity issue. With one bidder, the implementable analogue of a second-price auction with a reserve is a posted-price mechanism, where the bidder wins if its value exceeds a threshold and pays that threshold. The example shows that posterior shadow-score maximization can fail incentive compatibility even before strategic interaction among bidders is introduced.\footnote{See Appendix~\ref{app:beyond} for details about the example.}

\begin{example}
\label{prop:posterior_nonmonotone}
There exists a one-bidder environment with regular values and private, payoff-irrelevant reliability in which the shadow-score maximizer is not incentive compatible. In that environment, the optimal incentive compatible mechanism cannot implement the positive posterior-score rule. It must instead iron or discard part of the execution value.
\end{example}


\subsection{Private payoff-relevant reliability}\label{sec:relevant}

Section~\ref{sec:private} considers reliability that affects execution but not the bidder's own payoff. I now allow reliability to affect the bidder's payoff through willingness to pay. This payoff relevance changes the one-dimensional type from $v_i$ to $\theta_i$, but it does not make residual reliability separately scoreable when preferences over winning and paying are summarized by $\theta_i$.

Let bidder $i$'s private type be $(\theta_i,a_i)$, where $\theta_i$ is its willingness to pay and $a_i$ is an execution-reliability characteristic. If bidder $i$ receives the object and pays $p_i$, its payoff is $\theta_i-p_i$. This formulation includes, for example, a value $v_i$ and a reliability-dependent compliance cost $c_i(a_i)$, with $\theta_i=v_i-c_i(a_i)$. The reliability characteristic $a_i$ may affect the execution parameters, but the bidder's preferences over winning and paying are indexed by the scalar $\theta_i$. For a direct mechanism, let $X_i(\theta_i,a_i)$ and $P_i(\theta_i,a_i)$ denote bidder $i$'s interim allocation probability and expected payment.

The preceding bunching logic also applies when reliability affects the bidder's payoff only through willingness to pay. In that case the scalar $\theta_i$ is the payoff-relevant type. Conditional on $\theta_i$, any remaining reliability information is payoff-irrelevant, even if it matters for execution. The next corollary shows this implication.

\begin{corollary}
\label{cor:wtp_bunching}
Suppose bidder $i$'s payoff from receiving the object and paying $p_i$ is $\theta_i-p_i$, where $\theta_i$ is its willingness to pay. Let $a_i$ be a private, unverifiable reliability characteristic that may affect execution primitives. In any Bayesian incentive compatible mechanism, for every two reliability types $a_i,a_i'$ such that $(\theta_i,a_i)$ and $(\theta_i,a_i')$ belong to the type support, we have that $X_i(\theta_i,a_i)=X_i(\theta_i,a_i')$ and $P_i(\theta_i,a_i)=P_i(\theta_i,a_i')$ for almost every $\theta_i$. Thus, conditional on willingness to pay, residual reliability is bunched.
\end{corollary}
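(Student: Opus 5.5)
The plan is to reduce Corollary~\ref{cor:wtp_bunching} to the bunching logic of Proposition~\ref{prop:private_bunching}, with the payoff-relevant scalar $\theta_i$ in place of $v_i$. A bidder's payoff from reporting $(\hat\theta_i,\hat a_i)$ when its true type is $(\theta_i,a_i)$ is $\theta_i X_i(\hat\theta_i,\hat a_i)-P_i(\hat\theta_i,\hat a_i)$. This expression depends on the true type only through $\theta_i$. Define the interim utility $U_i(\theta_i,a_i)\coloneqq \theta_i X_i(\theta_i,a_i)-P_i(\theta_i,a_i)$.

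First step: fix $\theta_i$ such that both $(\theta_i,a_i)$ and $(\theta_i,a_i')$ lie in the support. Bayesian incentive compatibility for type $(\theta_i,a_i)$ against the report $(\theta_i,a_i')$, together with the reverse deviation, gives
\[
U_i(\theta_i,a_i)\geq \theta_i X_i(\theta_i,a_i')-P_i(\theta_i,a_i')=U_i(\theta_i,a_i')
\]
and the reverse inequality. Hence $U_i(\theta_i,a_i)=U_i(\theta_i,a_i')$, so interim utility depends only on $\theta_i$. Write it as $U_i(\theta_i)\coloneqq \max_{a} U_i(\theta_i,a)$ over the reliability types present at $\theta_i$. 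More precisely, $U_i(\theta_i)$ is the supremum over all reports of $\theta_i X_i(\hat\theta,\hat a)-P_i(\hat\theta,\hat a)$, and BIC forces every type at $\theta_i$ to attain it.

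Second step: $U_i(\theta_i)$ is a supremum of affine functions of $\theta_i$ with slopes $X_i(\hat\theta,\hat a)\in[0,1]$. It is therefore convex and Lipschitz, and hence differentiable almost everywhere. At an interior differentiability point $\theta_i$, each type $(\theta_i,a)$ attains the supremum with the affine function of slope $X_i(\theta_i,a)$. That affine function supports the convex function $U_i$ at $\theta_i$, so its slope equals the derivative: $X_i(\theta_i,a)=U_i'(\theta_i)$. This holds for both $a_i$ and $a_i'$, so the allocations coincide. Payments then coincide as well, since $P_i=\theta_i X_i-U_i$.

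The main obstacle is the qualification "almost every $\theta_i$" when the common support $\{\theta_i:(\theta_i,a_i),(\theta_i,a_i')\in\mathcal T_i\}$ is not an interval or has isolated points. I handle this with a convexity argument rather than the envelope formula. The function $U_i$ is defined on all of $\mathbb R$ as the supremum over reports, and it is convex and differentiable outside a countable set. This suffices because every on-support type's slope is a subgradient of $U_i$ at its own $\theta_i$, and the subgradient is unique off that countable set. I would present this as a direct adaptation of the proof of Proposition~\ref{prop:private_bunching}, noting that nothing there used the specific interpretation of the scalar type.
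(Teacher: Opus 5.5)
Your proposal is correct and follows the paper's route. The paper's proof just invokes Proposition~\ref{prop:private_bunching} with $\theta_i$ in place of $v_i$, and that argument consists of your two-sided incentive-compatibility step equating truthful utilities followed by an identification of slopes. The only difference is in how the slope is identified: you take it as the unique subgradient of the convex function $\theta\mapsto\sup_{(\hat\theta,\hat a)}\{\theta X_i(\hat\theta,\hat a)-P_i(\hat\theta,\hat a)\}$, whereas the paper differentiates the envelope formula along each reliability type. Your version is slightly more robust, since it needs no interval structure on the common support and gives equality outside a countable set.
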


The mechanism can elicit willingness to pay, but not residual reliability among types with the same $\theta_i$. The optimal mechanism must use posterior execution parameters conditional on $\theta_i$. I assume that the willingness-to-pay types $\theta_i$ are independently distributed across bidders, with support $[\underline\theta_i,\overline\theta_i]$, distribution $F_i^\theta$, and density $f_i^\theta>0$. Define the virtual willingness to pay by
\[
\varphi_i^\theta(\theta_i) \coloneqq  \theta_i - \frac{1-F_i^\theta(\theta_i)}{f_i^\theta(\theta_i)}.
\]
Assume that $\varphi_i^\theta$ is strictly increasing.

Suppose the bonus cap is public and equal to $\bar B_i$. Define posterior execution parameters conditional on willingness to pay as $\bar\Delta_i(\theta_i) \coloneqq  \E[\Delta_i(a_i)\mid \theta_i]$, $\bar\sigma_i(\theta_i) \coloneqq  \E[\sigma_i(a_i)\mid \theta_i]$, and $\bar g_i(\theta_i) \coloneqq  \E[g_i(a_i)\mid \theta_i]$. For an allocation-bonus pair $(x,q)$, let \[ \bar I^\theta(x,q) \coloneqq  \E\left[ \sum_{i=1}^n \bar\Delta_i(\theta_i)q_i(\theta) \right], \] with $0\leq q_i(\theta)\leq \bar B_i x_i(\theta)$. As before, impose strict feasibility,
\begin{equation}
\exists\,(\tilde x,\tilde q) \quad\text{such that}\quad \tilde x\in\mathcal X^{IC,\theta},\quad 0\leq \tilde q_i(\theta)\leq \bar B_i\tilde x_i(\theta), \quad\text{and}\quad \bar I^\theta(\tilde x,\tilde q)>\psi,
\label{eq:payoff_relevant_private_slater}
\end{equation}
where $\mathcal X^{IC,\theta}$ is the set of feasible allocation rules monotone in each bidder's willingness to pay.\footnote{ The sets $\mathcal X^{IC}$ and $\mathcal X^{IC,\theta}$ impose the same incentive-compatibility requirement on different payoff types. The set $\mathcal X^{IC}$ contains feasible allocation rules that are monotone in reported values $v_i$. The set $\mathcal X^{IC,\theta}$ contains feasible allocation rules that are monotone in reported willingness to pay $\theta_i$. The notation changes only because, in the payoff-relevant reliability case, $\theta_i$ rather than $v_i$ is the one-dimensional payoff type. } For each bidder $i$, define 
\[ 
\bar\chi_i^\theta\left(\theta_i\mid\lambda^\theta\right) \coloneqq  \bar B_i \max\left\{\lambda^\theta\bar\Delta_i(\theta_i) -\bar\sigma_i(\theta_i),0\right\}, 
\]
and 
\[ 
\bar S_i^\theta\left(\theta_i\mid\lambda^\theta\right) \coloneqq  \varphi_i^\theta(\theta_i) + \bar g_i(\theta_i) + \bar\chi_i^\theta\left(\theta_i\mid\lambda^\theta\right). 
\]

The score $\bar S_i^\theta$ is the payoff-relevant analogue of the posterior shadow score in the preceding Section~\ref{sec:private}. Corollary~\ref{cor:wtp_bunching} implies that the mechanism can elicit willingness to pay, but cannot separately elicit residual reliability among types with the same $\theta_i$. Thus, the designer must treat $\theta_i$ as the one-dimensional payoff type and evaluate execution parameter through their posterior values conditional on $\theta_i$. The resulting problem has the same structure as the posterior reliability problem above, except that monotonicity is imposed with respect to willingness to pay rather than value. The next corollary shows this constrained posterior-shadow allocation.\footnote{The relevant single-dimensional type is $\theta_i$, not the underlying pair $(v_i,a_i)$. The envelope formula and the monotonicity constraint are imposed in willingness to pay.}


\begin{corollary}
\label{prop:optimal_payoff_relevant_reliability}
Suppose reliability is private and unverifiable, and suppose bidder preferences are indexed by willingness to pay $\theta_i$. Suppose \eqref{eq:payoff_relevant_private_slater} holds. Then, there exists an optimal effort-inducing mechanism and a multiplier $\lambda^\theta\geq0$ such that the optimal allocation solves
\[
x^\theta \in \argmax_{x\in\mathcal X^{IC,\theta}} \E\left[ \sum_{i=1}^n x_i(\theta) \bar S_i^\theta\left(\theta_i\mid\lambda^\theta\right) \right].
\]
The associated bonus expenditure satisfies
\[
q_i^\theta(\theta) \in \argmax_{0\leq q\leq \bar B_i x_i^\theta(\theta)} \left(\lambda^\theta\bar\Delta_i(\theta_i) -\bar\sigma_i(\theta_i)\right)q.
\]
The allocation is implemented by the envelope payment formula in willingness to pay. If the maximizer of posterior shadow scores is monotone in $\theta_i$, the optimal mechanism is the posterior shadow-score auction. If not, the optimal mechanism is its monotonicity-constrained, or ironed, version.
\end{corollary}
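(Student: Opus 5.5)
The plan is to reduce Corollary~\ref{prop:optimal_payoff_relevant_reliability} to the argument behind Proposition~\ref{prop:optimal_private_reliability}. The scalar payoff type is $\theta_i$ instead of $v_i$, and the posterior execution parameters are conditioned on $\theta_i$.

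\textbf{Step 1: restrict to $\theta$-measurable mechanisms.} By Corollary~\ref{cor:wtp_bunching}, any Bayesian incentive compatible mechanism gives $X_i$ and $P_i$ that depend on $(\theta_i,a_i)$ only through $\theta_i$, almost everywhere. Following the robustness restriction of Section~\ref{sec:private}, I therefore take allocations $x(\theta)$ and bonus expenditures $q(\theta)$ to be measurable in the reported profile $\theta$ alone. Bidder preferences are $\theta_i-p_i$ with independent $\theta_i$, so standard one-dimensional mechanism design applies. Implementability is then equivalent to monotonicity of $x_i(\theta_i,\theta_{-i})$ in $\theta_i$, i.e.\ to $x\in\mathcal X^{IC,\theta}$. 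Payments are given by the envelope formula \eqref{eq:envelope} written in $\theta$, with the lowest type receiving zero rent. Substituting these payments, expected revenue becomes $\E[\sum_i x_i(\theta)\varphi_i^\theta(\theta_i)]$.

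\textbf{Step 2: replace execution terms by posteriors.} The seller's execution payoff is $\sum_i x_i g_i(a_i)-\sigma_i(a_i)q_i$, and the executor's incentive is $\E[\sum_i\Delta_i(a_i)q_i]$. Because $x$ and $q$ are $\theta$-measurable, the law of iterated expectations (conditioning on $\theta$) lets me replace $g_i,\sigma_i,\Delta_i$ by $\bar g_i(\theta_i),\bar\sigma_i(\theta_i),\bar\Delta_i(\theta_i)$. Conditioning on $\theta_i$ alone suffices because types are independent across bidders, so $a_i$ given $\theta$ depends only on $\theta_i$; I would state this explicitly. The cap $\bar B_i$ is public, so $0\le q_i\le\bar B_i x_i$ carries over unchanged. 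The problem becomes: maximize a linear functional of $(x,q)$ over the convex set $\{x\in\mathcal X^{IC,\theta},\ 0\le q\le \bar B x\}$, subject to the single linear constraint $\bar I^\theta(x,q)\ge\psi$.

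\textbf{Step 3: multiplier and pointwise optimisation.} Condition~\eqref{eq:payoff_relevant_private_slater} is a Slater point, and the problem is convex and linear with one scalar constraint. As in Lemma~\ref{lem:multiplier}, this gives $\lambda^\theta\ge0$ such that an optimal pair maximizes the Lagrangian and complementary slackness holds. This step needs the assumption that the optimum is attained. For fixed $x$, the Lagrangian is separable in $q$ pointwise, and it is linear in $q$ with coefficient $\lambda^\theta\bar\Delta_i-\bar\sigma_i$. Hence the stated bonus rule is optimal, and maximizing over $q$ yields $x_i\,\bar\chi_i^\theta$. The monotonicity constraint restricts only $x$, not $q$, so this interchange is legitimate. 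The Lagrangian then reduces to $\E[\sum_i x_i\bar S_i^\theta]-\lambda^\theta\psi$, which gives the displayed argmax over $\mathcal X^{IC,\theta}$.

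\textbf{Step 4: the final dichotomy.} If the pointwise maximizer of $\bar S_i^\theta$ is monotone in $\theta_i$, it lies in $\mathcal X^{IC,\theta}$ and therefore solves the constrained problem; this is the posterior shadow-score auction. Otherwise the solution is the constrained, or ironed, maximizer.

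\textbf{Main obstacle.} The delicate step is Step 3 in infinite dimensions. The duality argument must avoid a duality gap when the monotone cone $\mathcal X^{IC,\theta}$ is present. I would handle this as the paper's lemmas do: form the convex image set of achievable values of $(\text{objective},\bar I^\theta)$ and apply a separating hyperplane in $\mathbb R^2$, using the Slater point to ensure the multiplier on the objective is nonzero.
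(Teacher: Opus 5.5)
Your proposal is correct and follows essentially the same route as the paper. Both invoke Corollary~\ref{cor:wtp_bunching} to restrict to $\theta$-measurable rules, apply the revenue identity in willingness to pay, replace execution primitives by their posteriors given $\theta_i$, obtain $\lambda^\theta$ from the Slater condition via the concavity argument of Lemma~\ref{lem:multiplier} (your separating hyperplane in $\mathbb R^2$ is the same argument), optimize the bonus pointwise, and maximize over $\mathcal X^{IC,\theta}$. Your explicit use of the robustness restriction, and of cross-bidder independence of the full types $(\theta_i,a_i)$ so that $\E[\Delta_i(a_i)\mid\theta]=\bar\Delta_i(\theta_i)$, makes precise two steps the paper's proof leaves implicit.
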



This result distinguishes payoff-relevant reliability from standard scoring. Even when reliability affects a bidder's willingness to pay, the auction can use only the information about execution contained in willingness to pay unless reliability is verified. The optimal mechanism is a posterior shadow allocation, and monotonicity may force ironing. A  standard scoring model would take the relevant quality as observable or reportable. In this environment, the quality component is not directly scoreable.


\subsection{Certification and scoreability}

The distinction between public and private reliability gives certification a key role. Certification, auditing, pre-qualification, and disclosure rules determine which execution characteristics can enter the allocation rule as public information. This section formalizes this idea.

Let a certification technology $\mathcal C$ generate a public signal $s_i\in S_i$ about bidder $i$'s reliability before the auction. The signal profile $s=(s_1,\ldots,s_n)$ is observed by the seller and by all bidders before bids are submitted. The technology has an ex ante cost $K(\mathcal C)$. Conditional on $s$, values are independent across bidders, and bidder $i$'s conditional distribution depends only on $s_i$. Let $F_i(\cdot\mid s_i)$ denote this conditional distribution, with density $f_i(\cdot\mid s_i)>0$. The corresponding conditional virtual value is
\[
\varphi_i(v_i\mid s_i) \coloneqq  v_i - \frac{1-F_i(v_i\mid s_i)}{f_i(v_i\mid s_i)}.
\]
Assume that $\varphi_i(\cdot\mid s_i)$ is strictly increasing for every signal $s_i$. For simplicity, suppose the bonus cap $B_i$ is public.

The certified signal also affects execution primitives. Conditional on $(v_i,s_i)$, define the scores $\Delta_i^{\mathcal C}(v_i,s_i)\coloneqq \mathbb E[\Delta_i(a_i)\mid v_i,s_i]$, $\sigma_i^{\mathcal C}(v_i,s_i)\coloneqq \mathbb E[\sigma_i(a_i)\mid v_i,s_i]$, and $g_i^{\mathcal C}(v_i,s_i)\coloneqq \mathbb E[g_i(a_i)\mid v_i,s_i]$.

Let $\mathcal X^{IC,\mathcal C}$ denote the set of feasible allocation rules under certification technology $\mathcal C$. An allocation rule $x=(x_1,\ldots,x_n)$ belongs to $\mathcal X^{IC,\mathcal C}$ if, for every signal profile $s$ and value profile $v$, we have $\sum_{i=1}^n x_i(v,s)\leq1$ and $0\leq x_i(v,s)\leq1$ for every $i\in N$, and, for each bidder $i$, every $s$, and every $v_{-i}$, we have that $x_i(v_i,v_{-i},s)$ is nondecreasing in $v_i$. Any allocation in $\mathcal X^{IC,\mathcal C}$ is implementable, conditional on the public signal profile $s$, by the usual envelope payment formula \eqref{eq:envelope}, with the lowest type receiving zero utility for each signal profile.

For an allocation-bonus pair $(x,q)$ under certification technology $\mathcal C$, define the execution incentive generated by the mechanism as
\[
I^{\mathcal C}(x,q) \coloneqq  \E\left[ \sum_{i=1}^n \Delta_i^{\mathcal C}(v_i,s_i)q_i(v,s) \right],
\]
where the expectation is taken over certified signals and values. Feasible bonus expenditures satisfy $0\leq q_i(v,s)\leq B_i x_i(v,s)$ for every $i$, $v$, and $s$. The certification problem is strictly feasible if
\begin{equation}
\exists\,(\tilde x,\tilde q) \quad \text{such that} \quad \tilde x\in\mathcal X^{IC,\mathcal C},\quad 0\leq \tilde q_i(v,s)\leq B_i\tilde x_i(v,s), \quad\text{and}\quad I^{\mathcal C}(\tilde x,\tilde q)>\psi.
\label{eq:certification_slater}
\end{equation}
This condition is the certified-signal analogue of the strict-feasibility condition used in the baseline problem. It guarantees that the execution agent's incentive constraint admits a supporting multiplier after the certification technology has fixed the public signal structure.

The previous results show why private reliability cannot generally be treated as a public quality characteristic. Certification changes the problem by turning some execution-relevant information into a public signal before the auction. Once the signal is public, the designer can condition both the allocation and the execution contract on it. The relevant execution parameters are then conditional on the certified signal, and the optimal mechanism becomes a certified shadow-allocation problem.

\begin{proposition}
\label{prop:optimal_certification_given_signal}
Fix a certification technology $\mathcal C$, and suppose \eqref{eq:certification_slater} holds. Then, there exists an optimal effort-inducing mechanism and a common ex ante multiplier $\lambda^{\mathcal C}\geq0$ such that the signal-contingent allocation rule solves
\[
x^{\mathcal C} \in \argmax_{x\in\mathcal X^{IC,\mathcal C}} \E\left[ \sum_{i=1}^n x_i(v,s) S_i^{\mathcal C}\left(v_i,s_i\mid\lambda^{\mathcal C}\right) \right],
\]
where
\[
S_i^{\mathcal C}\left(v_i,s_i\mid\lambda^{\mathcal C}\right) \coloneqq  \varphi_i(v_i\mid s_i) + g_i^{\mathcal C}(v_i,s_i) + B_i\max\left\{ \lambda^{\mathcal C}\Delta_i^{\mathcal C}(v_i,s_i) - \sigma_i^{\mathcal C}(v_i,s_i), 0\right\}.
\]
After the common multiplier $\lambda^{\mathcal C}$ is fixed, the mechanism chooses the best allocation conditional on each realized public signal profile, subject to monotonicity in values for that signal profile. The associated bonus expenditure satisfies
\[
q_i^{\mathcal C}(v,s) \in \argmax_{0\leq q\leq B_i x_i^{\mathcal C}(v,s)} \left( \lambda^{\mathcal C}\Delta_i^{\mathcal C}(v_i,s_i) - \sigma_i^{\mathcal C}(v_i,s_i) \right)q.
\]
The allocation is implemented by the envelope payment formula conditional on the public signal.
\end{proposition}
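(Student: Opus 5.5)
The proof follows the template used for Theorem~\ref{thm:execution_score} and Proposition~\ref{prop:optimal_private_reliability}, applied signal profile by signal profile.

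\textbf{Step 1: reduce the seller's objective to virtual surplus.} Fix $\mathcal C$; the cost $K(\mathcal C)$ is sunk and plays no role. Conditional on the public profile $s$, values are independent with distributions $F_i(\cdot\mid s_i)$. Dominant-strategy incentive compatibility and individual rationality, conditional on $s$, are then equivalent to two requirements. First, $x_i(\cdot,v_{-i},s)$ must be monotone. Second, payments must be given by the envelope formula \eqref{eq:envelope}, with the lowest type receiving zero rent.

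Substituting these payments and integrating by parts against $F_i(\cdot\mid s_i)$, expected revenue equals $\E[\sum_i x_i(v,s)\varphi_i(v_i\mid s_i)]$. Next, the reliability $a_i$ is unobserved and independent across bidders conditional on $(v_i,s_i)$. Iterated expectations therefore replace $g_i(a_i)$, $\sigma_i(a_i)$ and $\Delta_i(a_i)$ by $g_i^{\mathcal C}$, $\sigma_i^{\mathcal C}$ and $\Delta_i^{\mathcal C}$, because $x$ and $q$ are measurable in $(v,s)$. The effort-inducing problem thus becomes
\[
\max_{x\in\mathcal X^{IC,\mathcal C},\,0\le q\le Bx}\; \E\Big[\sum_i x_i\big(\varphi_i(v_i\mid s_i)+g_i^{\mathcal C}\big)-\sigma_i^{\mathcal C}q_i\Big] \quad \text{s.t.}\quad I^{\mathcal C}(x,q)\ge\psi .
\]

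\textbf{Step 2: obtain a single ex ante multiplier.} The feasible set is convex: $\mathcal X^{IC,\mathcal C}$ is convex, and the bonus bounds are linear in $(x,q)$. The objective and the constraint are both linear. Condition \eqref{eq:certification_slater} is exactly a Slater point, so the argument of Lemma~\ref{lem:multiplier} applies. It yields $\lambda^{\mathcal C}\ge0$ such that any optimum maximizes the Lagrangian $\Pi^{\mathcal C}(x,q)+\lambda^{\mathcal C}(I^{\mathcal C}(x,q)-\psi)$, with complementary slackness. The multiplier is common across $s$ because the executor's constraint is a single expectation over both signals and values: effort is pooled and chosen before $s$ is known.

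\textbf{Step 3: optimize bonuses, then allocations.} For fixed $x$, the Lagrangian is separable in $q$ pointwise in $(v,s)$. Maximizing it gives the stated argmax condition for $q^{\mathcal C}$, with value $B_i x_i\max\{\lambda^{\mathcal C}\Delta_i^{\mathcal C}-\sigma_i^{\mathcal C},0\}$. The bonus constraints involve no monotonicity requirement, so this step introduces no interaction with incentive compatibility. Substituting back leaves $\max_{x\in\mathcal X^{IC,\mathcal C}}\E[\sum_i x_i S_i^{\mathcal C}]-\lambda^{\mathcal C}\psi$.

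The monotonicity constraints in $\mathcal X^{IC,\mathcal C}$ restrict $x(\cdot,s)$ separately for each $s$. The maximization can therefore be carried out signal profile by signal profile, conditionally on $s$. This gives the claimed interpretation of the allocation rule, up to measurable selection. Implementation follows from Step~1.

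\textbf{Main obstacle.} The hard step is the existence of the multiplier in Step~2. The choice space is an infinite-dimensional space of measurable functions, so a finite-dimensional Kuhn--Tucker argument does not apply directly. I would handle this with a standard argument on the convex image set
\[
\{(\Pi^{\mathcal C}(y),I^{\mathcal C}(y))\}\subset\mathbb R^2 .
\]
This set is convex by linearity in $y$. A separating hyperplane in $\mathbb R^2$ between this set and the strictly better region gives the supporting multiplier, and \eqref{eq:certification_slater} guarantees that the weight on the objective is nonzero. Attainment of the optimum is taken as given, as in the baseline. A secondary difficulty is justifying the conditional-expectation replacements: this needs conditional independence of $a_i$ from $(v_{-i},s_{-i})$ given $(v_i,s_i)$, which I would state explicitly.
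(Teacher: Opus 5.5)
Your proposal is correct and follows essentially the same route as the paper: the conditional revenue identity, replacement of execution primitives by their conditional expectations, and a single supporting multiplier for the ex ante constraint obtained as in Lemma~\ref{lem:multiplier}. Your separating-hyperplane argument on the image set in $\mathbb R^2$ is equivalent to the paper's concave value-function argument, and the remaining steps match: pointwise optimization of bonuses, then maximization over $\mathcal X^{IC,\mathcal C}$ signal profile by signal profile with envelope payments. Your explicit requirement that $a_i$ be conditionally independent of $(v_{-i},s_{-i})$ given $(v_i,s_i)$ is something the paper uses implicitly when it writes $\Delta_i^{\mathcal C}(v_i,s_i)$, $\sigma_i^{\mathcal C}(v_i,s_i)$, and $g_i^{\mathcal C}(v_i,s_i)$ as functions of bidder $i$'s own value and signal only.
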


The multiplier $\lambda^C$ is common across signal profiles because the executor's incentive constraint is ex ante. Once this multiplier is fixed, the allocation and bonus rule may vary with the realized public signal profile.

Let $V(\mathcal C)$ denote the gross value of the optimal effort-inducing mechanism after using certification technology $\mathcal C$, before subtracting the certification cost. Say that $\mathcal C'$ is more informative than $\mathcal C$ if the signal generated by $\mathcal C$ can be obtained from the signal generated by $\mathcal C'$ through a public garbling. Thus, after observing the finer signal, the seller can commit to use only the coarser information generated by the garbling.

Certification affects the auction by determining which execution-relevant information becomes scoreable. A more informative certification technology allows the seller to condition the allocation and the execution contract on a finer public signal. Since the seller can always ignore extra information, a more informative technology cannot lower the gross value of the optimal effort-inducing mechanism. Of course, certification may be costly, so the seller trades off the gross value of better information against the cost of producing it. The next corollary formalizes this standard logic.

\begin{corollary}
\label{prop:value_of_certification}
If $\mathcal C'$ is more informative than $\mathcal C$, then $V(\mathcal C')\geq V(\mathcal C)$. The seller chooses a certification technology solving $\max_{\mathcal C} \left\{ V(\mathcal C)-K(\mathcal C) \right\}$. No certification gives the posterior shadow allocation based only on value reports. Full certification of reliability gives the public reliability shadow allocation.
\end{corollary}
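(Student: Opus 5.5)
The plan is a replication argument: anything the seller can achieve with the coarse signal, it can also achieve with the finer signal by ignoring the extra information. The remaining two claims of Corollary~\ref{prop:value_of_certification} then follow from the definitions and from Propositions~\ref{prop:optimal_private_reliability} and~\ref{prop:optimal_certification_given_signal}.

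\textbf{Step 1 (setup).} Since $\mathcal C$ is a public garbling of $\mathcal C'$, write $s=\gamma(s',\omega)$, where $\omega$ is public randomization independent of values and reliability. Take any effort-inducing mechanism $(x,q)$ under $\mathcal C$ that is feasible and satisfies the incentive and participation constraints. Define the mechanism under $\mathcal C'$ by
\[
x'(v,s')=\E_\omega\bigl[x(v,\gamma(s',\omega))\bigr],
\]
and define $q'$ and the payments $p'$ in the same way.

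\textbf{Step 2 (feasibility).} I check that $(x',q',p')$ is feasible and incentive compatible.
\begin{itemize}
\item Feasibility holds because averaging preserves $\sum_i x_i\le 1$ and $0\le q_i\le B_ix_i$.
\item Monotonicity in $v_i$ holds for each fixed $s'$, because it is preserved under mixtures. Hence $x'\in\mathcal X^{IC,\mathcal C'}$.
\item Bidder dominant-strategy incentive compatibility holds conditional on $s'$. An alternative route is to implement the mechanism by first drawing $\omega$ publicly and then running the $\mathcal C$-mechanism at signal $s$.
\end{itemize}

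\textbf{Step 3 (incentives and payoff).} I show that the executor's incentive and the seller's payoff are unchanged. Because $s'$ is a refinement of $s$, the law of iterated expectations gives
\[
\E\bigl[\Delta_i^{\mathcal C'}(v_i,s_i')\,q_i'\bigr]=\E\bigl[\Delta_i(a_i)\,q_i(v,s)\bigr]=\E\bigl[\Delta_i^{\mathcal C}(v_i,s_i)\,q_i(v,s)\bigr].
\]
The same identity holds with $\sigma_i$ and $g_i$ in place of $\Delta_i$. The expected payments are also identical. So $I^{\mathcal C'}=I^{\mathcal C}$, and the objective is unchanged. Taking the supremum over $\mathcal C$-mechanisms gives $V(\mathcal C')\ge V(\mathcal C)$.

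\textbf{Main obstacle.} The delicate point is the iterated-expectation step when garbling is signal-by-signal per bidder: $s_i$ should depend only on $s_i'$. The conditional-independence structure of values given signals must also be respected. I would either assume componentwise garbling, or argue directly in terms of the primitive objective $\E[\sum_i p_i+x_ig_i(a_i)-\sigma_i(a_i)q_i]$, which avoids the conditional virtual values entirely.

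\textbf{Remaining claims.} The choice $\max_{\mathcal C}\{V(\mathcal C)-K(\mathcal C)\}$ is a definition of the seller's certification problem. For no certification, the signal is degenerate, so $\Delta^{\mathcal C}_i(v_i,s_i)=\bar\Delta_i(v_i)$ and $\varphi_i(\cdot\mid s_i)=\varphi_i$, and Proposition~\ref{prop:optimal_certification_given_signal} reduces to Proposition~\ref{prop:optimal_private_reliability}. For full certification, $s_i=a_i$, so the execution parameters become $\Delta_i(a_i)$ and the other primitives at the true $a_i$. This gives the public reliability shadow allocation, which is Theorem~\ref{thm:execution_score} with signal-contingent primitives, with ironing if needed.
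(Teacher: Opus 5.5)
Your proposal is correct and is essentially the paper's argument. The paper's proof is the one-line observation that the seller can garble or ignore the extra information in $\mathcal C'$ and replicate the $\mathcal C$-optimal mechanism; your Steps 1--3 make explicit why the averaged mechanism remains feasible, monotone, incentive compatible, and effort-inducing with an unchanged payoff. Your fallback of evaluating the payoff and incentive with the primitive $a_i$-dependent objective already disposes of the garbling worry.

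One small correction to your last paragraph concerns full certification. There the execution terms $\Delta_i(a_i)$, $\sigma_i(a_i)$, $g_i(a_i)$ do not vary with $v_i$, and $\varphi_i(\cdot\mid s_i)$ is strictly increasing by assumption. So the pointwise shadow-score rule is already monotone and no ironing is ever needed.
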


Proposition~\ref{prop:optimal_certification_given_signal} and Corollary~\ref{prop:value_of_certification} give a formal meaning to scoreability. Certification determines which execution-relevant characteristics become public inputs to the shadow allocation. Thus, a scoring rule may require an informational instrument, such as auditing, pre-qualification, or disclosure, before the relevant execution characteristic can be used in the auction.

The analysis assumes that the certification signal is public. Appendix~\ref{app:private_certification} shows that public disclosure is not essential when the certified signal is relevant for execution but not informative about values. In that case, a signal that is private to the seller can be used to condition the execution contract without changing the bidder-side virtual-value formula.


\subsection{Execution externalities}
\label{sec:externalities}

The shadow-score theorem assumes that the execution parameters associated with bidder $i$ depend on bidder $i$'s allocation state, and not on strategic reports by losing bidders. In some applications this restriction may fail. The ease of execution may depend on the full auction environment. Losing bidders may create litigation risk, political opposition, or information that helps the executor monitor the winner. The gap between bids may affect the credibility of the winner. A regulator's effort may be more productive when the winner belongs to one group and the strongest loser belongs to another. These effects are execution externalities.

If execution externalities depend only on public characteristics of the bidder pool, the preceding characterization is unchanged after conditioning on the public state. The shadow term remains independent of bidder $i$'s own report except through $\varphi_i(v_i)$, so monotonicity is preserved under regularity.

The difficulty arises when the execution value of assigning the object to one bidder depends on another bidder's report. Then, bidder $i$'s report may affect not only bidder $i$'s own score, but also the scores of other bidders. Pointwise maximization of shadow scores can violate monotonicity. The optimal mechanism can still be characterized, but the characterization is no longer a bidder-specific scoring rule. It is a monotonicity-constrained shadow allocation.

Suppose that the execution parameters are public, but they may depend on the full report profile. If bidder $i$ receives the object at profile $v$, let $g_i(v)$, $\Delta_i(v)$, $\sigma_i(v)$, and $B_i(v)$ denote the corresponding execution payoff, marginal effect of effort, success probability under effort, and pledgeable success bonus, respectively.\footnote{These functions are primitives of the execution environment, evaluated at the report profile used by the auction. Thus, reports affect not only who wins, but also the execution problem created by assigning the object to a given bidder. This is precisely why incentive compatibility is more delicate. One bidder's report may change the execution value of assigning the object to another bidder.} Assume these functions are bounded and measurable, with $B_i(v)\geq0$. For an allocation-bonus pair $(x,q)$, feasible bonus expenditures satisfy $0\leq q_i(v)\leq B_i(v)x_i(v)$, and the execution incentive generated by $(x,q)$ is
\[
I^E(x,q) \coloneqq  \E\left[ \sum_{i=1}^n \Delta_i(v)q_i(v) \right].
\]
As usual, assume strict feasibility, i.e.,
\begin{equation}
\exists\,(\tilde x,\tilde q) \quad \text{such that} \quad \tilde x\in\mathcal X^{IC},\quad 0\leq\tilde q_i(v)\leq B_i(v)\tilde x_i(v), \quad\text{and}\quad I^E(\tilde x,\tilde q)>\psi.
\label{eq:externality_slater}
\end{equation}
For each bidder $i$, define the unconstrained externality-adjusted value
\[
\chi_i^E\left(v\mid\lambda^E\right) \coloneqq  B_i(v)\max\left\{\lambda^E\Delta_i(v)-\sigma_i(v),0\right\}, \] and the unconstrained externality-adjusted payoff \[ \mathcal{E}_i\left(v\mid\lambda^E\right) \coloneqq  \varphi_i(v_i)+g_i(v)+\chi_i^E\left(v\mid\lambda^E\right).
\]

The preceding discussion shows why execution externalities create a difficulty that is absent from the baseline model. When the execution value of assigning the object to bidder $i$ depends on the full report profile, the Lagrangian still has a shadow-payoff representation, but this payoff is no longer a bidder-specific score depending only on $i$'s own report. The next proposition characterizes the optimal mechanism as the corresponding shadow-allocation problem with monotonicity imposed directly.


\begin{proposition}
\label{prop:optimal_externalities}
Suppose execution primitives may depend on the full report profile, and suppose \eqref{eq:externality_slater} holds. Then, there exists an optimal effort-inducing mechanism and a multiplier $\lambda^E\geq0$ where the optimal allocation solves
\[
x^E \in \argmax_{x\in\mathcal X^{IC}} \E\left[ \sum_{i=1}^n x_i(v)\mathcal{E}_i\left(v\mid\lambda^E\right) \right].
\]
The associated bonus expenditure satisfies
\[
q_i^E(v) \in \argmax_{0\leq q\leq B_i(v)x_i^E(v)} \left(\lambda^E\Delta_i(v)-\sigma_i(v)\right)q.
\]
The allocation is implementable in dominant strategies by the envelope payment formula. If the unconstrained maximizer of $\mathcal{E}_i\left(v\mid\lambda^E\right)$ belongs to $\mathcal X^{IC}$, then it is optimal. If it does not, the optimal auction is the monotonicity-constrained shadow allocation.
\end{proposition}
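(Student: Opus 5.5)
The argument follows the same template as Theorem~\ref{thm:execution_score} and Proposition~\ref{prop:optimal_private_reliability}. The one difference is that monotonicity is imposed directly through $\mathcal X^{IC}$ and is not derived from regularity.

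\textbf{Reduction.} First, restrict attention to dominant-strategy mechanisms whose allocation lies in $\mathcal X^{IC}$, paid by the envelope formula \eqref{eq:envelope}. Since execution primitives enter only the seller's payoff and not bidders' utilities, the usual Myerson argument still applies. For fixed $v_{-i}$, bidder $i$'s payment integrates to $\E[x_i(v)\varphi_i(v_i)]$, because the lowest type gets zero utility and monotonicity in $v_i$ is exactly the incentive-compatibility condition. The report-dependence of $g_i(v),\Delta_i(v),\sigma_i(v),B_i(v)$ does not affect this step. The seller's effort-inducing problem therefore becomes
\[
\max_{x\in\mathcal X^{IC},\,0\leq q_i\leq B_i x_i} \E\Big[\sum_i x_i(v)(\varphi_i(v_i)+g_i(v))-\sum_i\sigma_i(v)q_i(v)\Big] \quad\text{s.t.}\quad I^E(x,q)\geq\psi .
\]
This problem is linear in $(x,q)$. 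Its feasible set is convex: $\mathcal X^{IC}$ is convex because pointwise convex combinations preserve monotonicity, and the constraints on $q$ are linear.

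\textbf{Multiplier.} Next, obtain $\lambda^E$ as in Lemma~\ref{lem:multiplier}. We maximize a linear functional over a convex set subject to one linear inequality. The Slater point \eqref{eq:externality_slater} gives a nonnegative multiplier $\lambda^E$, via a supporting hyperplane for the convex set of achievable pairs $(I^E,\Pi)$. Under the maintained attainment convention, some optimal $(x^E,q^E)$ maximizes the Lagrangian $\Pi+\lambda^E(I^E-\psi)$ over the feasible set, and complementary slackness holds.

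\textbf{Maximizing the Lagrangian in $q$, then $x$.} For fixed $x$, the Lagrangian is separable state by state in $q$. The coefficient on $q_i(v)$ is $\lambda^E\Delta_i(v)-\sigma_i(v)$, and the constraint on $q$ does not involve monotonicity. The pointwise choice
\[
q_i(v)\in\argmax_{0\leq q\leq B_i(v)x_i(v)}\big(\lambda^E\Delta_i(v)-\sigma_i(v)\big)q
\]
is therefore optimal, and its value is $x_i(v)\chi_i^E(v\mid\lambda^E)$. Measurability is preserved by choosing $q$ as either $0$ or the cap, except in indifference states. In those states we take the primal solution's $q$, which is needed to retain complementary slackness, exactly as in the knife-edge discussion after Theorem~\ref{thm:execution_score}. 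Substituting this value back in, $x^E$ must maximize $\E[\sum_i x_i\mathcal E_i(v\mid\lambda^E)]$ over $\mathcal X^{IC}$.

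\textbf{Implementation and the final claims.} Implementability follows from $x^E\in\mathcal X^{IC}$ together with the envelope formula. If the pointwise maximizer of $\mathcal E_i$ (with $0$ as the no-sale option) lies in $\mathcal X^{IC}$, it maximizes the same objective over a larger set, so it solves the constrained problem and is optimal. Otherwise, optimality is by definition the constrained solution.

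\textbf{Main obstacle.} The delicate step is justifying the dual in an infinite-dimensional space, which needs a separation argument with the attainment assumption. I would handle it by mapping the feasible set to $\mathbb R^2$ through $(I^E,\Pi)$, so that the separation only has to be done in finite dimensions. The remaining care is in the complementary-slackness bookkeeping, ensuring that the chosen $q^E$ attains $I^E=\psi$ when $\lambda^E>0$.
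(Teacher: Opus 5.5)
Your proposal is correct and follows essentially the same route as the paper. The revenue identity reduces payments to virtual surplus, and the supporting multiplier comes from the same value-function/Slater argument as Lemma~\ref{lem:multiplier}; your separation of the achievable $(I^E,\Pi)$ pairs in $\mathbb R^2$ is that argument in another form. The bonus term is then optimized state by state to give $\chi_i^E$, and the reduced objective is maximized over $\mathcal X^{IC}$ with envelope payments. One point to tighten, which the paper also glosses, is that a Lagrangian maximizer is primal-optimal only with feasibility and complementary slackness, so the claim that a monotone unconstrained maximizer is optimal is cleanest under a.e.\ uniqueness, where it must coincide with $x^E$.
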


Proposition~\ref{prop:optimal_externalities} is the externality analogue of the shadow-score theorem. The difference is that the payoff from assigning the object to bidder $i$ may depend on the reports of other bidders. The Lagrangian objective cannot generally be written as a collection of bidder-specific scores depending only on each bidder's own report and public characteristics. The optimal auction is a constrained allocation rule chosen to maximize the externality-adjusted shadow objective subject to incentive compatibility.

The following example shows why the monotonicity constraint can bind.

\begin{example}
\label{ex:externality_nonmonotone}
There are two bidders with values independently and uniformly distributed on $[0,1]$, so $\varphi_i(v_i)=2v_i-1$. This example considers the Lagrangian rule for a fixed multiplier $\lambda$. The purpose is to show that, with execution externalities, maximization of externality-adjusted shadow payoffs can violate monotonicity.

Fix $K>1/5$ and set $\lambda=1+2K$. Execution primitives depend on the report profile. If bidder 1 receives the object, let $g_1(v)=0$, $B_1(v)=1$, and
\[
\Delta_1(v)=\sigma_1(v) = \frac12 \mathds 1\{v_2\geq 0.6\}.
\]
Bidder 2's report affects how useful assigning the object to bidder 1 is for providing execution incentives. When $v_2<0.6$, bidder 1's allocation state provides no execution incentive. When $v_2\geq0.6$, effort raises the success probability by $1/2$, and success occurs with probability $1/2$ under effort. If bidder 2 receives the object, let $g_2(v)=B_2(v)=\Delta_2(v)=\sigma_2(v)=0$.

For the fixed multiplier $\lambda=1+2K$, the externality-adjusted shadow payoff from assigning the object to bidder 1 is
\[
\mathcal E_1(v\mid\lambda) = 2v_1-1 + \underbrace{B_1(v)\max\{\lambda\Delta_1(v)-\sigma_1(v),0\}}_{=K\mathds 1\{v_2\geq0.6\}}.
\]
The shadow payoff from assigning the object to bidder 2 is
\[
\mathcal E_2(v\mid\lambda)=2v_2-1.
\]

Now fix bidder 1's value at $v_1=0.51$. If bidder 2 reports $v_2=0.55$, then $\mathcal E_1=0.02<\mathcal E_2=0.10$. The shadow rule assigns the object to bidder 2. If bidder 2 instead reports $v_2=0.61$, then $\mathcal E_1=0.02+K>0.22=\mathcal E_2$, where the inequality uses $K>1/5$. The shadow rule now assigns the object to bidder 1.

Thus, bidder 2 wins at $v_2=0.55$ but loses at the higher report $v_2=0.61$, holding $v_1$ fixed. Bidder 2's allocation probability is not monotone in its own report. The shadow rule cannot be implemented by a truthful auction with critical value payments.
\end{example}

The above is a counterexample to pointwise shadow maximization in the externalities environment. The score adjustment is generated by execution externalities, as bidder 2's report changes the execution value of assigning the object to bidder 1. The resulting rule is not a truthful mechanism, because it violates monotonicity. This is why Proposition~\ref{prop:optimal_externalities} characterizes the optimal auction as a maximization of the externality-adjusted shadow objective over $\mathcal X^{IC}$, rather than as an assignment to the highest externality-adjusted shadow payoff.

The example does more than show that the shadow rule is not truthfully implementable. It also rules out the standard bidder-specific scoring interpretation. In any scoring rule with bidder-specific scores that are weakly increasing in own value, a bidder who strictly wins at some report cannot strictly lose after raising its own report, holding the other reports fixed. Example~\ref{ex:externality_nonmonotone} violates exactly this property.

The next lemma identifies the representation failure. A standard bidder-specific score rule has an upper-set property in each bidder's own report. Execution externalities can violate this property because one bidder's report may raise the shadow value of assigning the object to another bidder. The optimal truthful mechanism must impose monotonicity directly, as in Proposition~\ref{prop:optimal_externalities}.

\begin{lemma}
\label{prop:no_standard_scoring_representation}
Consider any allocation rule that can be represented by bidder-specific scores $s_i(v_i)$, where each $s_i$ is weakly increasing in $v_i$, and where the object is assigned to a bidder with the highest nonnegative score. Fix bidder $i$ and $v_{-i}$. If bidder $i$ is the unique winner at $v_i$, then bidder $i$ remains a maximal-score bidder at every $v_i'>v_i$. In particular, if bidder $i$ strictly beats the outside option and all other bidders at $v_i$, then bidder $i$ cannot strictly lose at any higher report $v_i'>v_i$. The execution-externality rule in Example~\ref{ex:externality_nonmonotone} has no such representation.
\end{lemma}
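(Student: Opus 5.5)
The proof has two parts: an upper-set property of monotone score rules, and a check that Example~\ref{ex:externality_nonmonotone} violates that property.

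\textbf{Upper-set property.} Fix bidder $i$ and $v_{-i}$. Suppose $i$ is the unique winner at $v_i$ under a rule that assigns the object to a bidder with the highest nonnegative score. Being the winner means $s_i(v_i)\geq \max\{0,\max_{j\neq i}s_j(v_j)\}$. Now take any $v_i'>v_i$. Weak monotonicity gives $s_i(v_i')\geq s_i(v_i)$, and the other scores $s_j(v_j)$, $j\neq i$, are unchanged because they depend only on $v_j$. Hence
\[
s_i(v_i')\geq \max\left\{0,\max_{j\neq i}s_j(v_j)\right\},
\]
so $i$ is still a maximal nonnegative-score bidder at $v_i'$. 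If the inequalities at $v_i$ are strict, they stay strict at $v_i'$. In that case $i$ is the unique maximizer and must win, whatever the tie-breaking rule. This gives the ``in particular'' claim: a bidder who strictly beats everyone cannot strictly lose at a higher report.

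\textbf{Application to the example.} Suppose some weakly increasing scores $s_1,s_2$ represent the externality rule, and apply the property to bidder~2 with $v_1=0.51$ held fixed. At $v_2=0.55$ the shadow payoffs are $\mathcal E_2=0.10>0.02=\mathcal E_1$, and $\mathcal E_2>0$. So bidder~2 is the unique winner, strictly beating bidder~1 and the outside option.

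The representation must reproduce this allocation. Since bidder~2 is the unique assigned bidder at this profile, bidder~2 is a maximal-score bidder there. By the first part of the property, bidder~2 remains maximal at $v_2=0.61$. At $v_2=0.61$, however, the rule assigns the object to bidder~1 with $\mathcal E_1=0.02+K>0.22=\mathcal E_2$. So the rule makes bidder~2 a strict loser at a higher report, which contradicts the representation.

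\textbf{Main obstacle: ties.} The representation only says the winner has a maximal score, so at $v_2=0.61$ bidder~2 could tie with bidder~1 and still lose through tie-breaking. I would handle this in two steps.

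First, fix the tie-breaking convention. Either there is a fixed priority order, as in Theorem~\ref{thm:execution_score}, or, in the generic case, the statement is read as requiring the winner to be the unique maximizer.

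Second, use an open set of profiles. The strict inequalities in the example hold on neighborhoods of $v_2=0.55$ and $v_2=0.61$. Under a fixed priority order, I would show that a tie must fail at some nearby profile. With $s_1(0.51)$ fixed, the scores must satisfy $s_2(v_2)\ge s_1(0.51)$ on the low interval and $s_2(v_2)\le s_1(0.51)$ on the high interval. Together with monotonicity, this forces $s_2$ to be constant, equal to $s_1(0.51)$, on all of $[0.55,0.61]$. Then the allocation at every profile in that range is decided purely by the priority order, and so it is the same throughout. That contradicts bidder~2 winning at $0.55$ and losing at $0.61$.
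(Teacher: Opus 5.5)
Your proof is correct and follows the paper's own argument: weakly increasing bidder-specific scores give the upper-set property, and the example violates it through bidder~2's reversal between $v_2=0.55$ and $v_2=0.61$ at $v_1=0.51$ in Example~\ref{ex:externality_nonmonotone}. Your handling of ties is more careful than the paper's proof, which tacitly treats the strict inequalities between the $\mathcal E_i$ as strict inequalities between the representing scores $s_i$. Your fixed-priority step closes that gap, and it needs only the two profiles: they force $s_2(0.55)=s_2(0.61)=s_1(0.51)\geq 0$, so identical score vectors would have to produce different winners.
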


The externality problem arises because a bidder's report can affect the execution value of assigning the object to someone else. A related but distinct departure occurs when execution depends on the composition of a winner set. Even with public primitives and no report externalities, the shadow adjustment is not generally bidder-specific. The next Section~\ref{subsec:set_dependent_execution} studies this case.


\subsection{Set-dependent execution}
\label{subsec:set_dependent_execution}

The baseline model has a single object, so winner identity is the only allocation state that matters for execution. Many policy auctions allocate several licenses, concessions, regions, or access rights. In those settings, execution may depend on the set of winners. Monitoring may be easier when winners use compatible technologies, harder when several high-risk firms win together, or more costly when projects are geographically dispersed. These effects cannot generally be represented by bidder-specific quality terms.

Suppose there are up to $K$ identical objects. Each bidder wants at most one object, and has private value $v_i$. Let
\[
\mathcal A_K\coloneqq \{A\subseteq N \;\text{ such that }\; |A|\leq K\}
\]
be the set of feasible winner sets. The empty set $\varnothing\in\mathcal A_K$ represents no allocation. An allocation rule is a collection $\{y_A(v)\}_{A\in\mathcal A_K}$, where $y_A(v)\in[0,1]$ is the probability that the winner set is $A$, and $\sum_{A\in\mathcal A_K}y_A(v)=1$ for every $v$. Bidder $i$'s allocation probability is
\[
x_i(v)\coloneqq \sum_{A\ni i}y_A(v).
\]

If winner set $A$ is selected, the execution parameters are $g_A$, $\Delta_A$, $\sigma_A$, and $B_A$. For the empty set, normalize $g_\emptyset=\Delta_\emptyset=\sigma_\emptyset=B_\emptyset=0$. Thus, choosing $\emptyset$ gives the seller the outside option of not allocating any object, and creates no execution incentive or execution reward. The executor can receive a success-contingent bonus after the winner set $A$ is selected, with cap $B_A$. Let
\[
q_A(v)\coloneqq y_A(v)b_A(v)
\]
be promised bonus expenditure for winner set $A$. Feasibility requires $0\leq q_A(v)\leq B_A y_A(v)$. Conditional on inducing effort, the seller's expected payoff is
\[
\E\left[ \sum_{i=1}^n p_i(v) + \sum_{A\in\mathcal A_K}y_A(v)g_A - \sum_{A\in\mathcal A_K}\sigma_A q_A(v) \right],
\]
and the execution incentive constraint is
\[
\E\left[ \sum_{A\in\mathcal A_K}\Delta_A q_A(v) \right]\geq \psi.
\]

For $\lambda\geq0$, define the set-level shadow value $\chi_A(\lambda)$ as
\[
\chi_A(\lambda) \coloneqq  B_A\max\{\lambda\Delta_A-\sigma_A,0\},
\]
and the set-level execution adjustment
\[
H_A(\lambda)\coloneqq g_A+\chi_A(\lambda),
\]
with the convention that $H_\emptyset(\lambda^A)=0$.

Section~\ref{sec:externalities} considers execution externalities in a single-object environment. The allocation still selected one winner, but the execution value of assigning the object to one bidder could depend on other bidders' reports. This extension considers a different departure from bidder separability, where execution may depend on the entire winner set.

In this environment, the execution problem no longer attaches a shadow value to each bidder separately. Instead, each feasible winner set $A$ generates its own execution payoff and its own contribution to the execution agent's incentive constraint. For a candidate multiplier $\lambda$, the shadow value of choosing set $A$ is $H_A(\lambda)$. The virtual objective from choosing set $A$ is
\[
\sum_{i\in A}\varphi_i(v_i)+H_A(\lambda).
\]
The natural analogue of the baseline shadow-score rule is not an individual scoring rule, but a set-level shadow allocation rule. This procedure compares all feasible winner sets, including the empty set. Since the empty set has normalized shadow surplus zero, the mechanism chooses no allocation whenever every nonempty winner set gives negative shadow surplus.

I further impose the corresponding strict-feasibility condition. There must exist a feasible set-allocation rule $\{\tilde y_A\}_{A\in\mathcal A_K}$ such that the induced bidder allocation probabilities $\tilde x_i(v)\coloneqq \sum_{A\ni i}\tilde y_A(v)$ are nondecreasing in $v_i$, and
\begin{equation}\label{eq:set_slater}
\mathbb E\left[ \sum_{A\in\mathcal A_K}\Delta_A B_A \tilde y_A(v) \right]>\psi .
\end{equation}
This condition says that some incentive compatible set allocation can generate strictly more than the required execution incentive when maximal success-contingent bonuses are used.

The next proposition characterizes the optimal effort-inducing mechanism in this set-dependent environment. Under regularity, the set-level rule is monotone in each bidder's value. Increasing $v_i$ raises the objective of every set containing bidder $i$ and leaves the objective of every set not containing bidder $i$ unchanged. With an appropriate monotone tie-breaking rule, the allocation is implementable by critical value payments.\footnote{ Assumption~\ref{ass:regularity} is needed here because the proposition gives a pointwise set-level allocation rule rather than a maximization over an incentive compatible allocation set. The proof uses that raising $v_i$ raises $\varphi_i(v_i)$, and thus raises the objective of every winner set containing bidder $i$. }

\begin{proposition}
\label{prop:set_dependent_shadow_allocation}
Suppose that Assumption~\ref{ass:regularity} and \eqref{eq:set_slater} hold. Then, there exists an optimal effort-inducing mechanism and a multiplier $\lambda^A\geq0$ such that the allocation assigns positive probability only to winner sets that maximize the nonnegative set-level shadow surplus, i.e.,
\[
y_A^\ast(v)>0 \implies A \in \argmax_{C\in\mathcal A_K} \left\{ \sum_{i\in C}\varphi_i(v_i)+H_C\left(\lambda^A\right) \right\}.
\]
The associated bonus expenditure satisfies
\[
q_A^\ast(v) \in \argmax_{0\leq q\leq B_A y_A^\ast(v)} \left(\lambda^A\Delta_A-\sigma_A\right)q.
\]
The induced bidder allocation probabilities $x_i^\ast(v)$ are monotone in each bidder's own value and are implementable in dominant strategies by the envelope payment formula.
\end{proposition}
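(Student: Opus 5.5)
The plan is to follow the baseline argument behind Theorem~\ref{thm:execution_score}, with bidder-level allocation variables replaced by set-level ones. First, the seller's problem is reduced to virtual surplus. For any dominant-strategy (or Bayesian) incentive compatible mechanism whose lowest types get zero utility, the envelope formula \eqref{eq:envelope} applied to $x_i(v)=\sum_{A\ni i}y_A(v)$ gives expected revenue $\E[\sum_i x_i(v)\varphi_i(v_i)]=\E[\sum_{A}y_A(v)\sum_{i\in A}\varphi_i(v_i)]$. The relaxed effort-inducing problem is therefore to maximize
\[
\E\Big[\sum_{A\in\mathcal A_K}y_A(v)\Big(\sum_{i\in A}\varphi_i(v_i)+g_A\Big)-\sum_{A}\sigma_A q_A(v)\Big]
\]
subject to three conditions: $y(v)$ lies in the simplex over $\mathcal A_K$; $0\le q_A\le B_Ay_A$; and $\E[\sum_A\Delta_Aq_A]\ge\psi$. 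I drop monotonicity here and recover it at the end. This problem is linear in $(y,q)$ over a convex feasible set.

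Second, I obtain a multiplier exactly as in Lemma~\ref{lem:multiplier}. Condition \eqref{eq:set_slater} supplies a Slater point: take $\tilde y$ together with $\tilde q_A=B_A\tilde y_A$. Because the problem is a linear (hence concave) program with a single aggregate inequality constraint and the value is attained, Lagrangian duality gives some $\lambda^A\ge0$ with two properties. An optimal $(y^\ast,q^\ast)$ maximizes
\[
\E\Big[\sum_A y_A\Big(\sum_{i\in A}\varphi_i+g_A\Big)+\sum_A(\lambda^A\Delta_A-\sigma_A)q_A\Big]-\lambda^A\psi,
\]
and complementary slackness holds.

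Third, I maximize pointwise. For fixed $v$ and $y$, the $q_A$ part separates, and $q_A$ maximizes $(\lambda^A\Delta_A-\sigma_A)q$ on $[0,B_Ay_A]$; this is the stated bonus condition, with value $y_A\chi_A(\lambda^A)$. What remains is linear in $y(v)$ over the simplex. Up to a null set, its maximizers put weight only on sets attaining $\max_C\{\sum_{i\in C}\varphi_i(v_i)+H_C(\lambda^A)\}$. The empty set has value $0$ and is included, which makes the maximum nonnegative. A non-maximizing support on a positive-measure set would strictly lower the Lagrangian and contradict optimality. In the knife-edge case $\lambda^A\Delta_A=\sigma_A$, the $q$ values are taken from the primal solution, as in the baseline.

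The main obstacle is the last step: I must show that the resulting rule is monotone, so that the relaxed solution is actually implementable. I will select a deterministic argmax with a fixed priority order over sets, modified so that ties are broken in favour of sets containing $i$ whenever that can be done consistently. The cleaner route is to show that the set of profiles where ties occur has measure zero for each $v_{-i}$. Take $v_i<v_i'$ and suppose some $A\ni i$ is selected at $v_i$. Moving to $v_i'$ raises every set $C\ni i$ by the same amount $\varphi_i(v_i')-\varphi_i(v_i)>0$, using Assumption~\ref{ass:regularity}, and leaves every set $C\not\ni i$ unchanged. So some set containing $i$ strictly beats all sets not containing $i$ at $v_i'$, and the selected set at $v_i'$ contains $i$. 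Hence $x_i^\ast$ is a nondecreasing indicator in $v_i$, and the envelope payments (critical values) implement it in dominant strategies with zero utility for the lowest type. This confirms that the relaxation was without loss.
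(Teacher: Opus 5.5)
Your proposal is correct and follows the paper's proof essentially step for step: the revenue identity rewritten over winner sets, a supporting multiplier from the value-function argument of Lemma~\ref{lem:multiplier}, pointwise maximization of the set-level Lagrangian (with the empty set supplying the nonnegativity), and monotonicity from the fact that raising $v_i$ shifts every set containing $i$ up by the same amount. Your final step is slightly sharper than the paper's: because $\varphi_i$ is strictly increasing, every maximizer at the higher report contains $i$, so no special monotone tie-breaking rule is needed, and the detour you sketch about priority orders can be dropped.
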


Proposition~\ref{prop:set_dependent_shadow_allocation} generalizes the shadow-score theorem from bidder-specific execution to set-dependent execution. The optimal allocation is still a shadow allocation, but the shadow adjustment is attached to a winner set $A$, not to individual bidders. The mechanism ranks feasible sets by virtual surplus plus the shadow value of implementing that set.

The set-level rule reduces to individual scoring only when the set-level execution adjustment is additive across winners. A set function $H(\cdot\mid \lambda)$ is modular if there exist numbers $\{\eta_i(\lambda)\}_{i\in N}$ such that
\[
H_A(\lambda)=\sum_{i\in A}\eta_i(\lambda)
\]
for every feasible winner set $A$. In that case, the set-level objective can be written as a sum of individual scores. If $H_A(\lambda)$ is not modular, the shadow value of including bidder $i$ depends on the other selected bidders, and the allocation must be represented as a set-level shadow allocation.

The modularity condition is a statement about whether the set-level execution adjustment can be decomposed into bidder-specific score adjustments. By a standard individual-score representation, I mean a rule that chooses a feasible winner set by maximizing
\[
\sum_{i\in A}\left(\varphi_i(v_i)+\eta_i(\lambda)\right)
\]
for some bidder-specific constants $\eta_i(\lambda)$. This is the natural multi-object analogue of the baseline shadow-score auction, where each bidder receives its own score adjustment. If $H_A(\lambda)$ is modular, then the set-level objective has this form. If it is not modular, the execution adjustment is intrinsically set-level rather than bidder by bidder.

\begin{corollary}
\label{prop:nonmodular_no_individual_scoring}
If $H_A(\lambda^A)$ is modular, then the set-dependent shadow allocation can be written in standard individual-score form, with bidder scores $\varphi_i(v_i)+\eta_i(\lambda^A)$.

If $H_A(\lambda^A)$ is not modular, then the set-dependent shadow objective cannot be written in standard individual-score form. In that case, the execution adjustment cannot generally be reduced to bidder-specific score premia added to virtual values. The allocation must instead compare feasible winner sets.
\end{corollary}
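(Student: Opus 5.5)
The first claim is direct. Suppose $H_A(\lambda^A)=\sum_{i\in A}\eta_i(\lambda^A)$ for every $A\in\mathcal A_K$. Then for every profile $v$ and feasible set $A$,
\[
\sum_{i\in A}\varphi_i(v_i)+H_A(\lambda^A)=\sum_{i\in A}\bigl(\varphi_i(v_i)+\eta_i(\lambda^A)\bigr).
\]
So the set-level argmax in Proposition~\ref{prop:set_dependent_shadow_allocation} coincides, state by state, with the argmax of the individual-score objective. The empty set has value $0$ on both sides, which is consistent with the normalization $H_\emptyset=0$ and the empty sum. Bonus expenditures are unchanged. Under regularity each score $\varphi_i(v_i)+\eta_i(\lambda^A)$ is strictly increasing. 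The resulting rule is therefore the multi-unit analogue of the shadow-score auction: pick up to $K$ bidders with the highest nonnegative adjusted scores. Implementation is inherited from the proposition.

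For the converse I will prove the contrapositive, interpreting ``can be written in standard individual-score form'' as equality of the objectives as functions of $(A,v)$ up to a term independent of $A$. Suppose there exist constants $\eta_i$ and a function $c(v)$ with
\[
\sum_{i\in A}\varphi_i(v_i)+H_A(\lambda^A)=\sum_{i\in A}\bigl(\varphi_i(v_i)+\eta_i\bigr)+c(v)
\]
for all $A\in\mathcal A_K$ and all $v$. Evaluating at $A=\emptyset$ gives $c(v)=0$, because $H_\emptyset=0$. Cancelling the virtual values then yields $H_A(\lambda^A)=\sum_{i\in A}\eta_i$ for every $A$, so $H$ is modular, a contradiction. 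Equivalently, the singletons force $\eta_i=H_{\{i\}}(\lambda^A)$. Non-modularity then means some $A$ with $|A|\geq2$ has $H_A\neq\sum_{i\in A}H_{\{i\}}$. On that set the two objectives differ by a nonzero constant for every $v$.

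The main obstacle is the weaker reading, under which only the argmax correspondence, not the objective, must be reproduced by some individual scores. Here the claim is intended only ``generally,'' and the proof should say so. I would strengthen it by exhibiting non-representability of the argmax under a mild richness condition. Take $A$ with $|A|\ge2$ and $H_A\neq\sum_{i\in A}H_{\{i\}}$, say $H_A>\sum_{i\in A} H_{\{i\}}$ (the other sign is symmetric). Whenever the virtual values range over an interval, I would choose $v$ so that the individual terms $\varphi_i(v_i)+H_{\{i\}}$ are slightly negative while $\sum_{i\in A}\varphi_i(v_i)+H_A>0$. Then the set-level rule selects a set containing $A$, whereas the additive rule with $\eta_i=H_{\{i\}}$, which is forced by the singleton comparisons against $\emptyset$, selects $\emptyset$.

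Making this rigorous requires the singleton comparisons to pin down $\eta_i$, which needs an interval of $v_i$ at which $\varphi_i(v_i)+H_{\{i\}}$ crosses zero. I would state this as the richness condition. Without it the argmax version can fail, which is why the corollary hedges with ``cannot generally.''
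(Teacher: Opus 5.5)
Your proof is the paper's argument. The modular case is direct substitution. For the converse, you cancel the common virtual-value terms and use $H_\emptyset=0$ to remove the additive constant, which gives $H_A(\lambda^A)=\sum_{i\in A}\eta_i(\lambda^A)$. Your evaluation at $A=\emptyset$, which also handles a $v$-dependent constant $c(v)$, is a more precise version of the paper's ``normalizing the outside option to zero.''

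The argmax-level strengthening goes beyond what the paper proves and can be dropped. If you keep it, two steps need care. At your chosen profile the set-level rule only selects \emph{some} nonempty set, not necessarily one containing $A$. For the additive rule to select $\emptyset$, every bidder outside $A$ must also have a negative individual score, so your richness condition must cover those bidders as well.
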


Nonmodularity is a clean case in which individual scoring fails. The execution object is the winner set, and not the individual bidder. If execution exhibits complementarities or substitution effects across winners, the mechanism must compare feasible sets rather than assign each bidder an independent score adjustment. This is relevant for spectrum packages, regional concessions, procurement lots, and platform access decisions in which the cost of execution depends on the composition of the selected firms.

\section{Concluding remarks}
\label{sec:conclusion}

This paper studies auctions in which allocation creates a moral hazard problem. A third-party executor must exert noncontractible effort before the winner is known, and the productivity and cost of rewarding that effort depend on who wins. In a regular IPV environment, the optimal effort-inducing mechanism is a shadow-score auction. Bidders are ranked by virtual values plus an execution adjustment. The novel component of this adjustment is the endogenous shadow value of relaxing the executor's incentive constraint. Assigning the object to some bidders is more valuable because those allocation states make execution effort easier to induce.

The result changes the usual interpretation of reserve-price design. In the standard Myersonian environment \citep{Myerson1981}, the reserve price determines whether the object is sold, and, conditional on sale, the object goes to the bidder with the highest relevant value. With winner-dependent incentives, reserve prices are not enough. A common reserve changes only the sale threshold. Bidder-specific reserves change eligibility. By contrast, shadow scores can also change the ranking of eligible bidders. This ranking channel is essential when the winner identity affects the cost of inducing execution effort.

This distinction matters for policy and regulation. In applications such as spectrum licenses, infrastructure concessions, public-private partnerships, privatizations, and digital platform access, the highest monetary bidder need not be the one who makes execution easiest or most valuable. A regulator may need to monitor coverage obligations, enforce investment milestones, approve restructuring plans, or supervise compliance after the auction. If that effort is more productive or less costly for some winners than for others, then a revenue-only auction may allocate the object to a bidder in a way that is difficult to implement. The shadow-score auction internalizes this execution effect in the allocation rule.

The model also highlights a transparency issue. A shadow-score auction may award the object to a bidder with a lower monetary bid because that bidder makes it easier to provide execution incentives. In regulatory settings, such departures from price ranking can be controversial. The analysis suggests that execution-based scoring rules should be announced, justified, and tied to verifiable parameters. The score should not be an opaque discretionary preference for one bidder over another, but should represent an execution value that the designer can explain and, where possible, document.

The analysis also clarifies what a regulator or public buyer must know to use such a rule. If execution parameters are public, audited, or certifiable, then they can be entered directly into the shadow score. If reliability is private, the public information score may be unattainable. When reliability affects bidders only through willingness to pay, the mechanism can use the posterior execution values associated with that willingness to pay. However, residual reliability remains bunched unless it is verified. Certification, pre-qualification, auditing, disclosure rules, and performance guarantees determine which execution characteristics become scoreable. These instruments shape the allocation rules that can be truthfully implemented.

The paper also identifies the limits of the scoring interpretation. With public winner-specific execution parameters, the optimal allocation has a reduced-form scoring representation. Once execution information is private, report-dependent, or set-dependent, that representation can fail. The optimal mechanism then becomes a posterior, externality-adjusted, or set-level shadow allocation subject to incentive compatibility.

The broader contribution is that the auction does more than screen bidders and extract revenue, as it also selects the problem faced by the executor. Once execution is part of the design problem, the standard logic of reserve auctions and the quality-preference interpretation of scoring auctions are both incomplete. The optimal auction may need to allocate not only to the bidder with the highest willingness to pay, but also to the bidder or winner set that makes the allocation easiest to implement.



\newpage

\appendix


\section{Proofs}


\subsection{Baseline analysis}

\noindent{\bf Proof of Lemma~\ref{prop:fixed}.} By the revenue equivalence identity, for any incentive compatible and individually rational mechanism, expected bidder payments are at most expected virtual surplus, with equality when the lowest type of each bidder obtains zero utility. Hence, conditional on effort and the fixed bonus $B$, the seller maximizes
\[
\E\left[\sum_i x_i(v)\left(\varphi_i(v_i)+g_i-\sigma_iB\right)\right]
\]
subject to feasibility and \eqref{eq:fixedIC}. The Lagrangian for \eqref{eq:fixedIC} is
\[
\E\left[\sum_i x_i(v)\left(\varphi_i(v_i)+g_i-\sigma_iB+\lambda B\D_i\right)\right]-\lambda\psi.
\]
For a fixed $\lambda$, this expression is maximized pointwise by assigning the object to the bidder with the highest nonnegative score \eqref{eq:fixedscore}. Since $\varphi_i$ is strictly increasing and all other terms in $S_i$ are independent of $v_i$, bidder $i$'s allocation is nondecreasing in its report for every $v_{-i}$. As a result, the allocation is dominant-strategy implementable by the critical value payment rule. Complementary slackness gives the optimal multiplier. \qed 

\bigskip

\noindent{\bf Proof of Lemma~\ref{lem:effort_feasibility}.}
For any feasible allocation-bonus pair $(x,q)$,
\[
\sum_{i=1}^n \D_i q_i(v) \leq \sum_{i=1}^n \D_i B_i x_i(v) \leq \bar I\sum_{i=1}^n x_i(v) \leq \bar I
\]
for every value profile $v$. Taking expectations gives $I(x,q)\leq \bar I$. It follows that effort cannot be induced if $\psi>\bar I$.

Conversely, let $k\in\argmax_i \D_i B_i$. The mechanism that assigns the object to bidder $k$ for every value profile, and sets $q_k(v)=B_k$, generates execution incentive $I(x,q)=\D_k B_k=\bar I$. Hence, effort is feasible whenever $\psi\leq\bar I$, and strictly feasible whenever $\psi<\bar I$. \qed

\bigskip

\noindent{\bf Proof of Lemma~\ref{lem:multiplier}.} For each required incentive level $z$, define the value function
\[
V(z) \coloneqq \sup_{x,q} \left\{ \Pi(x,q) \; \text{ such that } \; \sum_{i=1}^n x_i(v)\leq1,\; 0\leq q_i(v)\leq B_i x_i(v),\; I(x,q)\geq z \right\}.
\] The function $V$ is finite because values and bonuses are bounded. It is also concave, as randomizing between two feasible allocation-bonus pairs randomizes both payoff and the generated execution incentive. Moreover, it is nonincreasing in $z$, since a higher required incentive level tightens the constraint.

By \eqref{eq:slater}, $\psi$ lies in the relative interior of the feasible range of incentive levels.\footnote{This is the usual Slater-type strict feasibility condition for the relaxed linear problem. It ensures that the executor's incentive constraint admits a supporting Lagrange multiplier. See, for example, \citet{Sundaram1996}.} Since $V$ is finite and concave on this interval, there exists a supporting slope $s$ at $\psi$ such that $V(z)\leq V(\psi)+s(z-\psi)$ for every feasible incentive level $z$. Since $V$ is nonincreasing, $s\leq0$. Define $\lambda^\ast\coloneqq -s\geq0$. Then, for every incentive level $z$,
\[
V(z)\leq V(\psi)-\lambda^\ast(z-\psi).
\]
For any feasible allocation-bonus pair $(x,q)$, applying this inequality at $z=I(x,q)$ gives
\[
\Pi(x,q)+\lambda^\ast\left(I(x,q)-\psi\right)\leq V(\psi).
\]
Hence, the Lagrangian with multiplier $\lambda^\ast$ is maximized by any primal optimizer of the effort-inducing problem. Complementary slackness follows immediately. If $(x^\ast,q^\ast)$ attains $V(\psi)$, then $\lambda^\ast\left(I(x^\ast,q^\ast)-\psi\right)=0$. \qed


\bigskip

\noindent{\bf Proof of Theorem~\ref{thm:execution_score}.} By the standard revenue identity for single-dimensional private values, any incentive compatible and individually rational mechanism satisfies
\[
\E\left[\sum_{i=1}^n p_i(v)\right] \leq \E\left[\sum_{i=1}^n \varphi_i(v_i)x_i(v)\right],
\]
with equality when the lowest type of each bidder obtains zero expected utility. Conditional on inducing effort, the seller's payoff is bounded above by
\[
\E\left[ \sum_{i=1}^n x_i(v)\left(\varphi_i(v_i)+g_i\right) - \sum_{i=1}^n \sigma_i q_i(v) \right].
\]

The relaxed effort-inducing problem is 
\[
\max_{x,q} \E\left[ \sum_{i=1}^n x_i(v)\left(\varphi_i(v_i)+g_i\right) - \sum_{i=1}^n \sigma_i q_i(v) \right]
\]
subject to
\[
\sum_{i=1}^n x_i(v)\leq1,
\]
\[
0\leq x_i(v)\leq1,
\]
\[
0\leq q_i(v)\leq B_i x_i(v),
\]
and
\[
\E\left[\sum_{i=1}^n \D_i  q_i(v)\right]\geq\psi.
\]
This is a linear program in the variables $(x,q)$.

Attach a multiplier $\lambda\geq0$ to the execution incentive constraint. The Lagrangian is
\[
\E\left[ \sum_{i=1}^n x_i(v)\left(\varphi_i(v_i)+g_i\right) + \sum_{i=1}^n (\lambda\D_i -\sigma_i)q_i(v) \right] -\lambda\psi.
\]
For fixed $x_i(v)$, the constraint $0\leq q_i(v)\leq B_i x_i(v)$ implies that the optimal $q_i(v)$ solves
\[
\max_{0\leq q\leq B_i x_i(v)} (\lambda\D_i -\sigma_i)q.
\]
Alternatively, whenever bidder $i$ is assigned the object, the corresponding bonus solves
\[
\max_{b\in[0,B_i]}(\lambda\D_i -\sigma_i)b.
\]
The value per unit of allocation is
\[
\chi_i(\lambda) = \max_{b\in[0,B_i]}(\lambda\D_i -\sigma_i)b.
\]
Substituting this maximized value into the Lagrangian gives the objective
\[
\sum_{i=1}^n x_i(v) \left(\varphi_i(v_i)+g_i+\chi_i(\lambda)\right) = \sum_{i=1}^n x_i(v)S_i(v_i\mid \lambda).
\]
For any fixed $\lambda$, the Lagrangian is maximized pointwise by assigning the object to a bidder with the highest nonnegative score.

By Lemma~\ref{lem:multiplier}, strict feasibility and attainment give a supporting multiplier $\lambda^\ast \geq 0$ for an optimal effort-inducing allocation-bonus pair $(x^\ast,q^\ast)$. Complementary slackness gives
\[
\lambda^\ast \left( \E\left[\sum_{i=1}^n \D_i q_i^\ast(v)\right]-\psi \right) =0.
\]
Away from the knife-edge case $\lambda^\ast\Delta_i=\sigma_i$, the optimal bonus is bang-bang. Whenever bidder $i$ wins, the bonus is either zero or the maximal
pledgeable bonus $B_i$. This proves part~(i).

Second, we show implementability. For fixed $\lambda^\ast$, bidder $i$'s score is $S_i(v_i\mid \lambda^\ast)=\varphi_i(v_i)+g_i+\chi_i(\lambda^\ast)$. By Assumption~\ref{ass:regularity}, $\varphi_i$ is strictly increasing. The term $(g_i+\chi_i(\lambda^\ast))$ is independent of bidder $i$'s report, and so $S_i(v_i\mid \lambda^\ast)$ is strictly increasing in $v_i$. Holding $v_{-i}$ fixed, bidder $i$'s score is strictly increasing in $v_i$. Hence, once bidder $i$ is selected by the shadow-score rule, a higher report cannot make it lose. The fixed priority rule resolves any score ties in a way that does not depend on bidder $i$'s report except through its score. Therefore, bidder $i$'s allocation rule is nondecreasing in its report.

The critical value payment rule stated in the theorem implements this monotone allocation in dominant strategies and gives the lowest type zero utility. Score ties occur only on boundary profiles under the atomless type distributions, so the particular tie-breaking rule has no effect on expected payoffs or revenue. This proves part (ii). \qed


\bigskip

\noindent{\bf Proof of Corollary~\ref{cor:execution_cost_index}.} If $\D_i>0$, then $\lambda\D_i-\sigma_i = \D_i(\lambda-\rho_i)$. Substituting this expression into
\[
\chi_i(\lambda)=B_i\max\{\lambda\D_i-\sigma_i,0\}
\]
gives the stated formula. The bonus characterization follows from the maximization problem
\[
\max_{b\in[0,B_i]}(\lambda^\ast\D_i-\sigma_i)b.
\]
If $\D_i=0$, then $\lambda^\ast\D_i-\sigma_i\leq0$, so the allocation state does not generate positive execution incentives. \qed


\bigskip

\noindent{\bf Proof of Corollary~\ref{prop:lambda_comparative_statics}.}
The dual representation follows from Lemma~\ref{lem:multiplier}. To prove monotonicity in $\psi$, let $\psi_2>\psi_1$, $\lambda_1\in\Lambda(\psi_1\mid B)$, and $\lambda_2\in\Lambda(\psi_2\mid B)$. Optimality gives
\[
H(\lambda_1\mid B)-\lambda_1\psi_1 \leq H(\lambda_2\mid B)-\lambda_2\psi_1
\]
and
\[
H(\lambda_2\mid B)-\lambda_2\psi_2 \leq H(\lambda_1\mid B)-\lambda_1\psi_2.
\]
Adding the two inequalities yields $(\lambda_2-\lambda_1)(\psi_2-\psi_1)\geq0$. Since $\psi_2>\psi_1$, this implies $\lambda_2\geq\lambda_1$.

For the local comparative statics, the first-order condition for an interior unique minimizer is $H_\lambda(\lambda^\ast(a)\mid a)=\psi$. Differentiating with respect to $\psi$ gives
\[
H_{\lambda\lambda}(\lambda^\ast(a)\mid a) \frac{d\lambda^\ast}{d\psi}=1,
\]
and differentiating with respect to $a$ gives
\[
H_{\lambda\lambda}(\lambda^\ast(a)\mid a) \frac{d\lambda^\ast}{da} + H_{\lambda a}(\lambda^\ast(a)\mid a)=0.
\]
The stated formulas follow. \qed


\bigskip

\noindent{\bf Proof of Corollary~\ref{prop:envelope_values}.} By strong duality, the effort-inducing value can be written as
\[
V^E(\psi,B) = \min_{\lambda\geq0} \left\{ H(\lambda\mid B)-\lambda\psi \right\},
\]
where $H(\lambda\mid B)$ is defined as in \eqref{eq:surplus}. The envelope theorem gives
\[
\frac{\partial V^E}{\partial \psi}=-\lambda^\ast.
\]
If the score maximizer is unique almost everywhere, differentiating the integrand of $H$ with respect to $B_i$ gives
\[
\frac{\partial V^E}{\partial B_i} = \E\left[ x_i^\ast(v)\max\{\lambda^\ast\D_i-\sigma_i,0\} \right].
\] 
\qed

\bigskip

\noindent{\bf Proof of Lemma~\ref{prop:post_win_effort}.}
If bidder $i$ wins, the seller faces a winner-specific execution problem. Inducing effort requires $\D_i b_i\geq\psi$ and is feasible if and only if $\D_i B_i\geq\psi$. When effort is feasible, the seller chooses the cheapest success-contingent bonus,
\[
b_i^E=\frac{\psi}{\D_i},
\]
which gives net execution payoff
\[
g_i-\sigma_i\frac{\psi}{\D_i}.
\]
If effort is not induced, the seller obtains $g_i^0$. It follows that the best continuation payoff after assigning the object to bidder $i$ is $\Gamma_i$.

The auction problem is the standard single-dimensional optimal auction problem with winner-specific payoff shifters $\Gamma_i$. By the revenue equivalence identity, expected payments are bounded above by expected virtual surplus. The seller maximizes
\[
\E\left[
\sum_{i=1}^n x_i(v)\left(\varphi_i(v_i)+\Gamma_i\right)
\right]
\]
subject to feasibility. Pointwise maximization assigns the object to a bidder with the highest nonnegative value of $\varphi_i(v_i)+\Gamma_i$. By Assumption~\ref{ass:regularity}, this score is strictly increasing in $v_i$, so the allocation is monotone in each bidder's report and is implemented by critical value payments. \qed

\bigskip

\noindent{\bf Proof of Proposition~\ref{prop:format_restrictions}.} For part~(i), suppose first that all bidders have the same shadow handicap $h^\ast$. Then, $S_i(v_i)=\varphi(v_i)+h^\ast$ for every bidder $i$. Since $\varphi$ is strictly increasing, the bidder with the highest shadow score is exactly the bidder with the highest value. The nonnegativity condition $S_i(v_i)\geq0$ is equivalent to $v_i\geq r^\ast$ and $\varphi(r^\ast)+h^\ast=0$, with the usual truncation if the solution lies outside the support. Hence, the shadow-score allocation is an anonymous second-price auction with common reserve $r^\ast$.

Conversely, suppose two active bidders $i$ and $j$ have different handicaps. Without loss of generality, let $h_i^\ast>h_j^\ast$. Because both bidders are active, each wins on a set of positive probability under the shadow-score allocation. Since $\varphi$ is continuous and strictly increasing on an interval, there is a positive measure set of value profiles on which $v_i<v_j$ but $\varphi(v_i)+h_i^\ast>\varphi(v_j)+h_j^\ast$, with both scores positive. On this set, the shadow-score allocation assigns the object to bidder $i$, even though bidder $j$ has the higher value. An anonymous second-price auction with a common reserve always assigns the object to the highest-value bidder among those above the reserve. It cannot implement this reversal.

For part~(ii), observe that bidder-specific reserves can alter eligibility but not value rankings among eligible bidders. If bidders $i$ and $j$ are both eligible and $v_j>v_i$, a bidder-specific-reserve auction assigns the object to bidder $j$, not bidder $i$.

Suppose two active bidders $i$ and $j$ have different handicaps, and again let $h_i^\ast>h_j^\ast$. Any bidder-specific-reserve execution must make both bidders eligible on nondegenerate upper tail intervals. Otherwise, one of them would not be active. These eligibility intervals overlap near the top of the common support. Choose $v_i$ and $v_j$ in this overlap with $v_i<v_j$ and $\varphi(v_i)+h_i^\ast>\varphi(v_j)+h_j^\ast$. Such pairs exist on a positive measure set because the handicap difference is strict and $\varphi$ is continuous. The shadow-score allocation gives the object to bidder $i$, whereas any bidder-specific-reserve auction gives it to bidder $j$, since both are eligible and $v_j>v_i$. Hence, no bidder-specific reserve auction implements the shadow-score allocation.

For part~(iii), let $y^\ast=(x^\ast,q^\ast)$ denote the optimal shadow-score mechanism. Since the execution constraint binds at $y^\ast$, we have $I(y^\ast)=\psi$. Consider the Lagrangian evaluated at the optimal multiplier $\lambda^\ast$,
\[
\mathcal L_{\lambda^\ast}(y) \coloneqq \Pi(y)+\lambda^\ast\left(I(y)-\psi\right).
\]
Using the definitions of $I(y)$ and $\Pi(y)$, we obtain
\[
\mathcal L_{\lambda^\ast}(y) = \E\left[ \sum_{i=1}^n x_i(v)\left(\varphi(v_i)+g_i\right) + \sum_{i=1}^n(\lambda^\ast\D_i-\sigma_i)q_i(v) \right] -\lambda^\ast\psi.
\]
For each bidder $i$ and profile $v$, the bonus-expenditure component is maximized subject to $0\leq q_i(v)\leq B_i x_i(v)$. The maximized value per unit of allocation is
\[
\chi_i(\lambda^\ast) = \max_{b\in[0,B_i]}(\lambda^\ast\D_i-\sigma_i)b.
\]
After substituting this value, the Lagrangian score of bidder $i$ is $\varphi(v_i)+g_i+\chi_i(\lambda^\ast)$. Therefore, $y^\ast$ maximizes $\mathcal L_{\lambda^\ast}$ over all
allocation-bonus pairs.

Now let $y=(x,q)$ be any effort-feasible restricted reserve mechanism. Since $I(y)\geq\psi$, we have that
\[
\mathcal L_{\lambda^\ast}(y) = \Pi(y)+\lambda^\ast\left(I(y)-\psi\right) \geq \Pi(y).
\]
Moreover, because $I(y^\ast)=\psi$, we obtain $\mathcal L_{\lambda^\ast}(y^\ast)=\Pi(y^\ast)$.

By assumption, no effort-feasible restricted reserve mechanism has allocation rule $x^\ast$ almost everywhere. Hence, $x$ differs from $x^\ast$ on a set of value profiles with positive probability. Since shadow-score maximization is unique almost everywhere, assigning the object according to $x$ rather than $x^\ast$ strictly lowers the maximized Lagrangian integrand on that set. If the restricted mechanism also uses bonus expenditures that do not maximize the Lagrangian bonus term, the Lagrangian is lowered further. As a result, $\mathcal L_{\lambda^\ast}(y^\ast)> \mathcal L_{\lambda^\ast}(y)$. Combining the preceding inequalities gives
\[
\Pi(y^\ast) = \mathcal L_{\lambda^\ast}(y^\ast) > \mathcal L_{\lambda^\ast}(y) = \Pi(y)+\lambda^\ast\left(I(y)-\psi\right) \geq \Pi(y).
\]
Every effort-feasible restricted reserve mechanism that fails to implement the shadow-score allocation yields strictly lower expected payoff than the shadow-score auction. \qed

\bigskip

\noindent{\bf Proof of Proposition~\ref{prop:common_reserve_loss}.} Part~(i) follows from the in-text computation.

Consider an anonymous second-price auction with common reserve $r$. Bidder 1 wins if and only if $v_1\geq r$ and $v_1\geq v_2$. Hence,
\[
\Pr(\text{bidder 1 wins}) = \int_r^1 v_1\,dv_1 = \frac{1-r^2}{2}.
\]
The execution constraint requires
\[
\frac{1-r^2}{2}\geq \kappa,
\]
or $r\leq \sqrt{1-2\kappa}$. Thus, common reserves are feasible if and only if $\kappa\leq1/2$.

Expected virtual surplus under common reserve $r$ is
\[
R^C(r) = \int_r^1 (2v-1)2v\,dv = \frac{1}{3}+r^2-\frac{4}{3}r^3.
\]
This expression is increasing on $[0,1/2]$, since
\[
\frac{dR^C(r)}{dr} = 2r(1-2r)\geq0
\]
for $r\in[0,1/2]$. For $\kappa\in(3/8,1/2]$, the constraint requires $r\leq \sqrt{1-2\kappa}<1/2$. Thus, the best feasible common reserve is $r^C(\kappa)=\sqrt{1-2\kappa}$. This proves part~(ii).

We now characterize the optimal shadow-score auction. Let $\lambda\geq0$ be the multiplier on the execution constraint \eqref{eq:kappa_constraint}. The Lagrangian scores are
\[
S_1(v_1\mid \lambda)=2v_1-1+\lambda,
\]
\[
S_2(v_2)=2v_2-1.
\]

Write $\delta(\lambda)=\lambda/2$. For $0\leq\delta(\lambda)\leq1/2$, bidder 1 wins if $v_1+\delta(\lambda)\geq v_2$ and $v_1\geq\frac{1}{2}-\delta(\lambda)$. Therefore,
\[
\Pr(\text{bidder 1 wins}) = \int_{1/2-\delta}^{1-\delta}(v_1+\delta(\lambda))\,dv_1 + \int_{1-\delta(\lambda)}^{1}1\,dv_1 = \frac{3}{8}+\delta(\lambda).
\]
To satisfy the constraint with equality, the optimal multiplier $\lambda^\ast(\kappa)$ satisfies
\[
\delta^\ast(\kappa)\coloneqq \frac{\lambda^\ast(\kappa)}{2} = \kappa-\frac{3}{8}.
\]
This formula applies for $\kappa\in\left[3/8,7/8\right]$, corresponding to $\delta\in[0,1/2]$.

The expected virtual surplus from the shadow-score auction is the expected virtual value of the winner. Bidder 1 wins on the region
\[
v_1\geq \frac{1}{2}-\delta^\ast(\kappa),
\]
\[
v_2\leq \min\{1,v_1+\delta^\ast(\kappa)\},
\]
and bidder 2 wins on the region
\[
v_2\geq \frac{1}{2},
\]
\[
v_2\geq v_1+\delta^\ast(\kappa).
\]
Hence,
\begin{multline}
R^S(\kappa) = \int_{1/2-\delta^\ast(\kappa)}^{1-\delta^\ast(\kappa)}(2v_1-1)(v_1+\delta^\ast(\kappa))\,dv_1 \\ + \int_{1-\delta^\ast(\kappa)}^{1}(2v_1-1)\,dv_1 + \int_{1/2}^{1}(2v_2-1)(v_2-\delta^\ast(\kappa))\,dv_2. 
\end{multline}
A direct calculation gives
\[
R^S(\kappa) = \frac{5}{12}-\delta^\ast(\kappa)^2 = \frac{5}{12} - \left(\kappa-\frac{3}{8}\right)^2.
\]
This proves part~(iii).

It remains to compare the shadow-score auction with the best common reserve. Let
\[
r\coloneqq r^C(\kappa)=\sqrt{1-2\kappa}.
\]
For $\kappa\in[3/8,1/2]$, we have $r\in[0,1/2]$ and
\[
\delta^\ast(\kappa) = \kappa-\frac{3}{8} = \frac{1-4r^2}{8}.
\]
Substituting into $R^S(\kappa)-R^C(\kappa)$ yields
\[
R^S(\kappa)-R^C(\kappa) = - \frac{(2r-1)^2\left(12r^2-52r-13\right)}{192}.
\]
For $r\in[0,1/2]$, the first factor is nonnegative and $12r^2-52r-13<0$. Thus, $R^S(\kappa)-R^C(\kappa)\geq0$, with equality only at $r=1/2$, equivalently $\kappa=3/8$. Hence, the shadow-score auction strictly dominates for $\kappa\in\left(3/8,1/2\right]$.

To find the maximum loss, write
\[
R^S(\kappa)-R^C(\kappa) = \frac{13}{192} -\frac{7}{8}r^2 +\frac{4}{3}r^3 -\frac{1}{4}r^4.
\]
Then, for $r\in[0,1/2]$,
\[
\frac{d\left(R^S(\kappa)-R^C(\kappa)\right)}{dr} = r\left(-\frac{7}{4}+4r-r^2\right)\leq0,
\]
with strict inequality for $r\in(0,1/2)$. The loss is maximized at $r=0$, or $\kappa=1/2$, where
\[
R^S(1/2)-R^C(1/2)=\frac{13}{192}.
\]
This proves part~(iv). \qed


\subsection{Beyond score auctions}\label{app:beyond}

\noindent{\bf Proof of Proposition~\ref{prop:private_bunching}.}
 Fix bidder $i$ and suppress it as a subscript. Fix two reliability types, $a$ and $a'$, and consider a value $v$ such that both $(v,a)$ and $(v,a')$ belong to the type support. For any report $(r,\alpha)$, write the interim utility of a type with value $v$ as
\[
U(v\mid r,\alpha)=vX(r,\alpha)-P(r,\alpha).
\]
Because reliability is payoff-irrelevant, this utility depends on the true type only through $v$.

Bayesian incentive compatibility gives
\[
vX(v,a)-P(v,a)\geq vX(v,a')-P(v,a')
\]
for type $(v,a)$, and
\[
vX(v,a')-P(v,a')\geq vX(v,a)-P(v,a)
\]
for type $(v,a')$. Hence,
\[
vX(v,a)-P(v,a)=vX(v,a')-P(v,a')
\]
at every value $v$ where both reliability types can occur. Let this common truthful utility be denoted by $U(v)$.

Now fix $a$. Incentive compatibility with respect to the value report implies the standard envelope formula,
\[
U(v)=U(\ubar v)+\int_{\ubar v}^{v}X(s,a)\,ds,
\]
at all points of differentiability of $U$. Applying the usual envelope argument along each reliability type, the derivative of this common truthful utility is $X(v,a)$ and also $X(v,a')$ for almost every $v$ in the common value support. Hence, $X(v,a)=X(v,a')$ almost everywhere on $V(a)\cap V(a')$. The equality of payments follows from the equality of truthful utilities. \qed

\bigskip

\noindent{\bf Proof of Corollary~\ref{cor:public_score_unattainable}.}
If a direct mechanism implemented the public reliability shadow-score allocation using unverifiable reports of $a_i$, its interim allocation probabilities would have to coincide with $X_i^{P}(v_i,a_i)$. But Proposition~\ref{prop:private_bunching} implies that any Bayesian incentive compatible mechanism must satisfy $X_i(v_i,a_i)=X_i(v_i,a_i')$ for almost every $v_i$. This contradicts the assumed positive measure set on which the public allocation gives different interim allocation probabilities to $a_i$ and $a_i'$. \qed

\bigskip

\noindent{\bf Proof of Proposition~\ref{prop:optimal_private_reliability}.}
Because reliability is payoff-irrelevant and unverifiable, Proposition~\ref{prop:private_bunching} implies that reliability reports cannot be used to generate different interim allocations or payments for types with the same value. The robust design problem conditions on value reports and uses posterior execution primitives.

Conditional on value reports, the seller's virtual payoff and the executor's incentive constraint are obtained from the baseline problem by replacing $(\Delta_i,\sigma_i,g_i,B_i)$ with the posterior primitives $\left(\bar\Delta_i(v_i),\bar\sigma_i(v_i),\bar g_i(v_i),\bar B_i\right)$. Attaching multiplier $\lambda$ to the posterior incentive constraint, the bonus-expenditure term for bidder $i$ is
\[
\max_{0\leq q\leq \bar B_i x_i(v)} \left(\lambda\bar\Delta_i(v_i)-\bar\sigma_i(v_i)\right)q,
\]
with maximized value
\[
x_i(v)\bar B_i \max\left\{\lambda\bar\Delta_i(v_i)-\bar\sigma_i(v_i),0\right\}.
\]
Thus, for a fixed multiplier, the allocation objective is
\[
\mathbb E\left[ \sum_{i=1}^n x_i(v)\bar S_i(v_i\mid\lambda) \right].
\]
The strict-feasibility condition \eqref{eq:private_slater} gives a supporting multiplier $\lambda^P$ by the same value function argument as Lemma~\ref{lem:multiplier}. Because $\bar S_i(v_i\mid\lambda^P)$ need not be increasing in $v_i$, the allocation must be chosen from $\mathcal X^{IC}$. The envelope payment formula then implements the resulting monotone allocation. \qed

\bigskip

\noindent{\bf Proof of Example~\ref{prop:posterior_nonmonotone}.}
Consider one bidder with value $v\sim U[0,1]$. Then, $\varphi(v)=2v-1$. Suppose that, for a fixed multiplier $\lambda$, the posterior execution term is
\[
\bar g(v)+\bar\chi(v\mid \lambda) =
\begin{cases}
0.8 & \text{ when } v\in[0,1/4],\\
0 & \text{ when } v\in(1/4,1].
\end{cases}
\]
This posterior term can be generated by a binary reliability type that is more likely at low values and that raises the execution value of allocation.

The posterior shadow score is
\[
\bar S(v\mid \lambda) =
\begin{cases}
2v-0.2 & \text{ when } v\in[0,1/4],\\
2v-1 & \text{ when } v\in(1/4,1].
\end{cases}
\]
The posterior-score rule allocates whenever $\bar S(v\mid \lambda)\geq0$. Hence, it allocates for $v\in[0.1,1/4]\cup[1/2,1]$. This allocation rule is not monotone in $v$. Therefore, it cannot be implemented by any incentive compatible mechanism.

We now show that the best monotone allocation discards the low value positive score interval. A general incentive compatible one bidder allocation rule is a nondecreasing function $\alpha:[0,1]\to[0,1]$, not necessarily a deterministic threshold. Since the objective is linear in $\alpha$, and any nondecreasing allocation rule can be represented as a mixture of threshold rules, an optimal monotone allocation can be chosen to be a threshold rule. Thus, it is enough to maximize
\[
J(r)\coloneqq \int_r^1 \bar S(v\mid \lambda)\,dv.
\]

For $r\geq1/4$,
\[
J(r)=\int_r^1(2v-1)\,dv=r(1-r),
\]
which is maximized at $r=1/2$ with value $1/4$. For $r\in[0,1/4]$,
\[
J(r) = \int_r^{1/4}(2v-0.2)\,dv + \int_{1/4}^{1}(2v-1)\,dv = 0.2r-r^2+0.2.
\]
This expression is maximized at $r=0.1$, where it equals $0.21<0.25$. Thus, the optimal monotone allocation is the threshold rule $r=1/2$.

The posterior-score maximizer would allocate to the interval $[0.1,1/4]$, but doing so would require excluding some higher values. The optimal incentive compatible mechanism discards that interval. This is the ironing problem, as posterior execution values may be economically valuable, but they cannot be used pointwise when doing so destroys monotonicity.

The posterior execution term used in the example can be generated from primitives consistent with Proposition~\ref{prop:optimal_private_reliability}. Let $a\in\{H,L\}$. Values are uniformly distributed on $[0,1]$, and reliability is correlated with value according to
\[
\Pr(a=H\mid v) =
\begin{cases}
0.8 & \text{ when }  v\in[0,1/4],\\
0 & \text{ when } v\in(1/4,1].
\end{cases}
\]
Let the bonus cap be $\bar B=1$, set $g(H)=g(L)=0$, and suppose that
\[
\Delta(H)=\sigma(H)=1 > 0=\Delta(L)=\sigma(L).
\]
Thus, the high-reliability type is one for which effort is productive and success occurs with probability one under effort, while the low-reliability type does not generate execution incentives. For the fixed multiplier $\lambda=2$, the posterior primitives are $\bar\Delta(v)=\bar\sigma(v)=\Pr(a=H\mid v)$. Hence, the posterior shadow term is
\[
\bar\chi(v\mid 2) = \max\left\{2\bar\Delta(v)-\bar\sigma(v),0\right\} = \Pr(a=H\mid v).
\]
Since $\bar g(v)=0$, this gives
\[
\bar g(v)+\bar\chi(v\mid 2) =
\begin{cases}
0.8 & \text{ when } v\in[0,1/4],\\
0 & \text{ when } v\in(1/4,1],
\end{cases}
\]
which is exactly the posterior execution term used in the example. \qed

\bigskip

\noindent{\bf Proof of Corollary~\ref{cor:wtp_bunching}.}
Conditional on $\theta_i$, reliability $a_i$ does not affect bidder $i$'s preferences over winning and paying. Therefore,Proposition~\ref{prop:private_bunching} applies with $\theta_i$ in place of $v_i$. Hence, all reliability types with the same willingness to pay must have the same interim allocation probability and the same interim payment. \qed

\bigskip

\noindent{\bf Proof of Corollary~\ref{prop:optimal_payoff_relevant_reliability}.}
By Corollary~\ref{cor:wtp_bunching}, any incentive compatible mechanism bunches reliability types with the same willingness to pay. Hence, the mechanism can condition only on $\theta_i$ and on posterior execution primitives conditional on $\theta_i$.

The rest of the argument is the payoff-relevant analogue of Proposition~\ref{prop:optimal_private_reliability}. The revenue identity is applied to the one-dimensional payoff type $\theta_i$, with virtual willingness to pay $\varphi_i^\theta(\theta_i)$. Attaching multiplier $\lambda$ to the posterior execution constraint, the bonus-expenditure term for bidder $i$ is
\[
\max_{0\leq q\leq \bar B_i x_i(\theta)} \left(\lambda\bar\Delta_i(\theta_i)-\bar\sigma_i(\theta_i)\right)q,
\]
which gives the shadow term $\bar\chi_i^\theta(\theta_i\mid\lambda)$. Thus, for a fixed multiplier, the allocation objective is
\[
\mathbb E\left[ \sum_{i=1}^n x_i(\theta) \bar S_i^\theta(\theta_i\mid\lambda) \right].
\]
The strict-feasibility condition \eqref{eq:payoff_relevant_private_slater} gives a supporting multiplier $\lambda^\theta$ by the same value function argument as Lemma~\ref{lem:multiplier}. Since $\bar S_i^\theta$ need not be increasing in $\theta_i$, the allocation is chosen from $\mathcal X^{IC,\theta}$. The envelope formula in willingness to pay implements the resulting monotone allocation. \qed
\bigskip

\noindent{\bf Proof of Proposition~\ref{prop:optimal_certification_given_signal}.}
Fix a certification technology $\mathcal C$. Since the signal profile $s$ is public before the auction, the designer can condition both the allocation rule and the execution contract on $s$. Conditional on $s$, the bidder side is a single-dimensional private value problem with conditional virtual values $\varphi_i(v_i\mid s_i)$. By the revenue identity, expected payments are bounded above by conditional expected virtual surplus, with equality for envelope payments and zero utility for the lowest type. Thus, conditional on inducing execution effort, the seller's expected virtual payoff is bounded by
\[
\E\left[ \sum_{i=1}^n x_i(v,s) \left( \varphi_i(v_i\mid s_i) + g_i^{\mathcal C}(v_i,s_i) \right) - \sum_{i=1}^n \sigma_i^{\mathcal C}(v_i,s_i)q_i(v,s) \right],
\]
subject to feasibility, monotonicity in values for each public signal profile, and the execution incentive constraint
\[
\E\left[ \sum_{i=1}^n \Delta_i^{\mathcal C}(v_i,s_i)q_i(v,s) \right]\geq \psi.
\]

Attach multiplier $\lambda\geq0$ to the execution incentive constraint. The Lagrangian is
\[
\E\left[ \sum_{i=1}^n x_i(v,s) \left( \varphi_i(v_i\mid s_i) + g_i^{\mathcal C}(v_i,s_i) \right) + \sum_{i=1}^n \left( \lambda\Delta_i^{\mathcal C}(v_i,s_i) - \sigma_i^{\mathcal C}(v_i,s_i) \right)q_i(v,s) \right] -\lambda\psi.
\]
For fixed $x_i(v,s)$, the optimal bonus expenditure solves
\[
\max_{0\leq q\leq B_i x_i(v,s)} \left( \lambda\Delta_i^{\mathcal C}(v_i,s_i) - \sigma_i^{\mathcal C}(v_i,s_i) \right)q.
\]
The maximized value per unit of allocation is
\[
B_i \max\left\{ \lambda\Delta_i^{\mathcal C}(v_i,s_i) - \sigma_i^{\mathcal C}(v_i,s_i), 0 \right\}.
\]
Substituting this value into the Lagrangian gives the certified shadow score
\[
S_i^{\mathcal C}(v_i,s_i\mid\lambda) = \varphi_i(v_i\mid s_i) + g_i^{\mathcal C}(v_i,s_i) + B_i \max\left\{ \lambda\Delta_i^{\mathcal C}(v_i,s_i) - \sigma_i^{\mathcal C}(v_i,s_i), 0 \right\}.
\]

The strict-feasibility condition \eqref{eq:certification_slater} gives a supporting multiplier $\lambda^{\mathcal C}\geq 0$ for the ex ante execution constraint. This multiplier is common across signal profiles. Substituting the optimized bonus expenditure into the Lagrangian, the seller chooses a signal-contingent allocation rule solving
\[
x^{\mathcal C} \in \arg\max_{x\in\mathcal X^{IC,\mathcal C}} \mathbb E \left[ \sum_{i=1}^n x_i(v,s)S_i^{\mathcal C} \left(v_i,s_i\mid\lambda^{\mathcal C}\right) \right].
\]
Because $s$ is public, this rule can be implemented conditional on each realized signal profile, with monotonicity imposed in values for that profile. The associated bonus expenditure is the maximizer stated in the proposition. Because $x^{\mathcal C}\in\mathcal X^{IC,\mathcal C}$, the allocation is monotone in each bidder's value for every public signal profile. The envelope payment formula conditional on $s$ implements the allocation in dominant strategies. \qed

\bigskip

\noindent{\bf Proof of Corollary~\ref{prop:value_of_certification}.}
If $\mathcal C'$ is more informative than $\mathcal C$, then the seller can garble or ignore the additional signal generated by $\mathcal C'$ and replicate the mechanism that is optimal under $\mathcal C$. Hence, $V(\mathcal C')\geq V(\mathcal C)$. The optimal certification choice follows by subtracting the certification cost $K(\mathcal C)$.\qed
\bigskip

\noindent{\bf Proof of Proposition~\ref{prop:optimal_externalities}.} By the revenue identity, expected payments are bounded above by expected virtual surplus. Conditional on inducing effort, the seller's expected virtual payoff, net of execution bonuses, is
\[
\E\left[ \sum_{i=1}^n x_i(v)\left(\varphi_i(v_i)+g_i(v)\right) - \sum_{i=1}^n \sigma_i(v)q_i(v) \right].
\]
The executor's incentive constraint is
\[
\E\left[ \sum_{i=1}^n \Delta_i(v)q_i(v) \right]\geq\psi.
\]
Attach multiplier $\lambda\geq0$. The Lagrangian is
\[
\E\left[ \sum_{i=1}^n x_i(v)\left(\varphi_i(v_i)+g_i(v)\right) + \sum_{i=1}^n \left(\lambda\Delta_i(v)-\sigma_i(v)\right)q_i(v) \right] -\lambda\psi.
\]
For fixed $x_i(v)$, the bonus expenditure term is maximized by solving
\[
\max_{0\leq q\leq B_i(v)x_i(v)} \left(\lambda\Delta_i(v)-\sigma_i(v)\right)q.
\]
The maximized value per unit of allocation is $B_i(v)\max\{\lambda\Delta_i(v)-\sigma_i(v),0\}$. Thus, the Lagrangian allocation objective is
\[
\E\left[ \sum_{i=1}^n x_i(v)\mathcal{E}_i(v\mid\lambda) \right].
\]
Because $\mathcal{E}_i(v\mid\lambda)$ may depend on other bidders' reports, pointwise maximization need not be monotone in each bidder's own report. The allocation must be chosen from $\mathcal X^{IC}$. The strict feasibility condition \eqref{eq:externality_slater} gives a supporting multiplier by the same value function argument as Lemma~\ref{lem:multiplier}. The envelope payment formula implements the resulting monotone allocation in dominant strategies. \qed
\bigskip

\noindent{\bf Proof of Lemma~\ref{prop:no_standard_scoring_representation}.} Suppose bidder $i$ strictly wins at report $v_i$ when the other reports are $v_{-i}$. Then, $s_i(v_i)>0$ and $s_i(v_i)>s_j(v_j)$ for all $j\neq i$. If $v_i'>v_i$, weak monotonicity gives $s_i(v_i')\geq s_i(v_i)$. Hence, $s_i(v_i')>0$ and $s_i(v_i')>s_j(v_j)$ for all $j\neq i$. As a result, bidder $i$ strictly wins at $v_i'$ as well.

In Example~\ref{ex:externality_nonmonotone}, fixing $v_1=0.51$, bidder 2 strictly wins when it reports $v_2=0.55$, but strictly loses when it reports $v_2=0.61$. This strict win-to-loss reversal cannot occur under any bidder-specific score representation with weakly increasing own scores. \qed

\bigskip

\noindent{\bf Proof of Proposition~\ref{prop:set_dependent_shadow_allocation}.} By the revenue identity,
\[
\E\left[\sum_i p_i(v)\right] \leq \E\left[\sum_i \varphi_i(v_i)x_i(v)\right].
\]
Since $x_i(v)=\sum_{A\ni i}y_A(v)$, expected virtual surplus can be written as
\[
\E\left[ \sum_{A\in\mathcal A_K} y_A(v)\sum_{i\in A}\varphi_i(v_i) \right].
\]
Thus, conditional on inducing effort, the seller maximizes
\[
\E\left[ \sum_{A\in\mathcal A_K} y_A(v) \left( \sum_{i\in A}\varphi_i(v_i)+g_A \right) - \sum_{A\in\mathcal A_K}\sigma_A q_A(v) \right]
\]
subject to feasibility and
\[
\E\left[ \sum_{A\in\mathcal A_K}\Delta_A q_A(v) \right]\geq\psi.
\]
Attach a multiplier $\lambda\geq0$. For fixed $y_A(v)$, the bonus expenditure for set $A$ solves
\[
\max_{0\leq q\leq B_A y_A(v)} (\lambda\Delta_A-\sigma_A)q,
\]
with maximized value $y_A(v)B_A\max\{\lambda\Delta_A-\sigma_A,0\}$. Hence, the Lagrangian objective for winner set $A$ is
\[
\sum_{i\in A}\varphi_i(v_i)+H_A(\lambda).
\]
Strict feasibility gives a supporting multiplier $\lambda^A$ by the same value function argument as Lemma~\ref{lem:multiplier}, and pointwise maximization gives the stated allocation.

It remains to show monotonicity. Fix bidder $i$ and $v_{-i}$. Increasing $v_i$ raises $\varphi_i(v_i)$ and thus it raises the objective of every winner set containing $i$ by the same amount, while leaving the objective of every winner set not containing $i$ unchanged. If bidder $i$ belongs to an optimal winner set at $v_i$, then at any higher report $v_i'>v_i$, some optimal winner set can be selected that also contains $i$. With a monotone tie-breaking rule, $x_i^\ast(v_i,v_{-i})$ is nondecreasing in $v_i$. The standard envelope payment formula implements the allocation in dominant strategies. \qed

\bigskip

\noindent{\bf Proof of Corollary~\ref{prop:nonmodular_no_individual_scoring}.} If $H_A(\lambda^A)$ is modular, then there exist numbers $\{\eta_i(\lambda^A)\}_{i\in N}$ such that
\[
H_A(\lambda^A)=\sum_{i\in A}\eta_i(\lambda^A)
\]
for every feasible winner set $A$. Hence,
\[
\sum_{i\in A}\varphi_i(v_i)+H_A(\lambda^A) = \sum_{i\in A}\bigl(\varphi_i(v_i)+\eta_i(\lambda^A)\bigr).
\]
The set-dependent rule is exactly a standard individual-score rule.

Conversely, suppose the set-dependent shadow objective can be written in standard individual-score form. Then, there must exist bidder-specific constants $\eta_i(\lambda^A)$ such that, for every feasible set $A$,
\[
\sum_{i\in A}\varphi_i(v_i)+H_A(\lambda^A) = \sum_{i\in A}\bigl(\varphi_i(v_i)+\eta_i(\lambda^A)\bigr),
\]
up to an additive constant common to all feasible sets. Since the virtual-surplus terms are the same on both sides, this requires
\[
H_A(\lambda^A)=\sum_{i\in A}\eta_i(\lambda^A)
\]
up to a common constant. Normalizing the outside option to zero, this is exactly modularity. Thus, when $H_A(\lambda^A)$ is not modular, the set-level execution adjustment cannot be reduced to bidder-specific score premia. \qed

\bigskip

\section{Extensions}

\subsection{Continuous effort}
\label{subsec:continuous_effort}

The binary effort specification is useful for transparency, but the shadow-score logic is not specific to binary effort. Suppose instead that the execution agent chooses effort $e\in[0,\bar e]$ before values are realized and before the winner is known. Effort cost is $C(e)$, where $C$ is continuously differentiable, strictly convex, satisfies $C(0)=0$, and $C'(e)>0$ for all $e>0$. If bidder $i$ wins, success occurs with probability
\[
\sigma_i(e)\coloneqq \sigma_i^0+e\D_i,
\]
where $\sigma_i(e)\in[0,1]$ for all $e\in[0,\bar e]$. Let $g_i(e)$ denote the seller's non-bonus execution payoff from assigning the object to bidder $i$ when the induced effort level is $e$.

Given an allocation-bonus pair $(x,q)$, define the execution incentive generated by the mechanism as
\[
I(x,q)\coloneqq \E\left[\sum_{i=1}^n \D_i q_i(v)\right].
\]
If the execution agent chooses effort $\hat e$, its expected payoff, up to terms independent of effort, is
\[
\hat e\, I(x,q)-C(\hat e).
\]
An interior target effort $e\in(0,\bar e)$ is induced whenever $I(x,q)=C'(e)$. Since $C$ is strictly convex, this condition is sufficient as well as necessary for $e$ to be the unique interior effort choice.

Fix an interior target effort $e\in(0,\bar e)$. Conditional on inducing this effort level, the seller's expected virtual payoff, net of execution bonuses, is
\[
\Pi_e(x,q) \coloneqq \E\left[ \sum_{i=1}^n x_i(v)\left(\varphi_i(v_i)+g_i(e)\right) - \sum_{i=1}^n \sigma_i(e)q_i(v) \right].
\]
The target effort $e$ is strictly feasible if
\begin{equation}
\exists\,(\tilde x,\tilde q) \quad\text{such that}\quad \sum_{i=1}^n \tilde x_i(v)\leq1,\quad 0\leq \tilde q_i(v)\leq B_i\tilde x_i(v), \quad\text{and}\quad I(\tilde x,\tilde q)>C'(e).
\label{eq:ce_strict_feasibility}
\end{equation}
For each bidder $i$, define
\[
\chi_i^e(\lambda^e) \coloneqq \max_{b\in[0,B_i]} \left(\lambda^e\D_i-\sigma_i(e)\right)b = B_i\max\{\lambda^e\D_i-\sigma_i(e),0\}.
\]
The associated continuous effort shadow score is
\[
S_i^e(v_i\mid \lambda^e) \coloneqq \varphi_i(v_i)+g_i(e)+\chi_i^e(\lambda^e).
\]

\begin{corollary}
\label{prop:continuous_effort_score}
Fix an interior target effort $e\in(0,\bar e)$ and suppose that \eqref{eq:ce_strict_feasibility} holds. Under Assumption~\ref{ass:regularity}, there exists an optimal mechanism inducing effort $e$ and a multiplier $\lambda^e\geq0$ such that the following statements hold.

The optimal allocation assigns the object only to bidders with maximal nonnegative continuous effort shadow score, i.e.,
\[
x_i^e(v)>0 \implies S_i^e(v_i\mid \lambda^e) = \max\left\{0,\max_{j\in N}S_j^e(v_j\mid \lambda^e)\right\}.
\]
The bonus expenditure satisfies
\[
q_i^e(v) \in \argmax_{0\leq q\leq B_i x_i^e(v)} \left(\lambda^e\D_i-\sigma_i(e)\right)q.
\]
The allocation is implementable in dominant strategies by a shadow-score auction with critical value payments.
\end{corollary}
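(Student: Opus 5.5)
The plan is to reduce the fixed-target problem to the binary-effort problem already solved in Theorem~\ref{thm:execution_score}. The reduction relies on the fact that, once the interior target $e$ is fixed, every object in the problem is a constant: the incentive requirement $C'(e)$, the success probabilities $\sigma_i(e)$, and the execution payoffs $g_i(e)$.

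The first step pins down the incentive constraint. Because $C$ is strictly convex, the executor's payoff $\hat e\,I(x,q)-C(\hat e)$ is strictly concave in $\hat e$. Inducing the interior effort $e$ is therefore equivalent to the single linear equality $I(x,q)=C'(e)$.

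I would then relax this equality to the inequality $I(x,q)\geq C'(e)$. This relaxation needs a justification. Take any relaxed optimum with $I(x,q)>C'(e)$. One can scale down the bonus expenditures $q$ toward zero, keeping $x$ fixed. Since $I$ is linear in $q$ and $0\leq q_i\leq B_ix_i$ is preserved under scaling, some intermediate scaling attains equality. This adjustment weakly raises $\Pi_e$ because the cost term $-\sum_i\sigma_i(e)q_i$ is nonpositive. So the inequality-constrained problem has an optimum satisfying the equality.

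Next I apply the revenue identity exactly as in the proof of Theorem~\ref{thm:execution_score}. This bounds the seller's payoff by $\Pi_e(x,q)$, and the bound holds with equality under envelope payments that leave the lowest type with zero utility. The resulting relaxed problem is identical to the baseline linear program, with $(\psi,\sigma_i,g_i)$ replaced by $(C'(e),\sigma_i(e),g_i(e))$. Condition~\eqref{eq:ce_strict_feasibility} is precisely \eqref{eq:slater} with $\psi=C'(e)$. Lemma~\ref{lem:multiplier} then supplies a multiplier $\lambda^e\geq0$ supporting an optimum, together with complementary slackness.

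Maximizing the Lagrangian pointwise in $q_i$ subject to $0\leq q_i\leq B_ix_i$ gives the stated bonus rule and the shadow term $\chi_i^e(\lambda^e)$. Maximizing pointwise in $x$ then assigns the object to the highest nonnegative score $S_i^e$. Regularity makes $S_i^e$ strictly increasing in $v_i$, since the additive shift $g_i(e)+\chi_i^e(\lambda^e)$ does not depend on $v_i$. Monotonicity and implementation by critical value payments with priority tie-breaking then follow word for word from part~(ii) of Theorem~\ref{thm:execution_score}.

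The main obstacle is the first step, where the equality constraint is turned into an inequality so that Lemma~\ref{lem:multiplier} applies with $\lambda^e\geq0$. I would verify carefully that scaling $q$ down preserves optimality and feasibility. A subtle case arises when $\lambda^e\Delta_i=\sigma_i(e)$, where the bonus is not uniquely pinned down. There I would select the bonus from the primal solution so that the equality $I=C'(e)$ holds, as discussed after Theorem~\ref{thm:execution_score}. I would also note that if the relaxed constraint were slack ($\lambda^e=0$), the scaled-down $q$ still satisfies the stated argmax, because $-\sigma_i(e)q\leq0$ keeps every maximizer consistent.
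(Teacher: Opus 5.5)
Your proposal is correct and follows essentially the same route as the paper's proof. You relax to $I(x,q)\geq C'(e)$, justify this by scaling $q$ down until equality holds, obtain $\lambda^e$ from the value-function argument of Lemma~\ref{lem:multiplier} under \eqref{eq:ce_strict_feasibility}, maximize the Lagrangian pointwise, and invoke regularity for critical-value implementation. As a minor aside, the case $\lambda^e=0$ cannot arise: a Lagrangian maximizer at $\lambda^e=0$ sets $q_i=0$ whenever $\sigma_i(e)>0$, and $\sigma_i(e)=0$ forces $\Delta_i=0$ when $e>0$, so it would give $I(x,q)=0<C'(e)$; hence $\lambda^e>0$, and complementary slackness already delivers $I(x,q)=C'(e)$ at every optimum.
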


\begin{proof}
For the fixed target effort $e$, consider the relaxed target problem
\[
\max_{x,q}\Pi_e(x,q)
\]
subject to
\[
\sum_{i=1}^n x_i(v)\leq1,
\]
\[
0\leq x_i(v)\leq1,
\]
\[
0\leq q_i(v)\leq B_i x_i(v),
\]
and
\[
I(x,q)\geq C'(e).
\]
Because bonus expenditures are costly and $\sigma_i(e)\geq0$, any feasible pair with $I(x,q)>C'(e)$ can be replaced by another feasible pair with the same allocation, weakly lower bonus expenditures, and $I(x,q)=C'(e)$. To see this, take
\[
\alpha=\frac{C'(e)}{I(x,q)}\in(0,1)
\]
and define $\bar q_i(v)=\alpha q_i(v)$. Then, $0\leq\bar q_i(v)\leq B_i x_i(v)$, $I(x,\bar q)=C'(e)$, and $\Pi_e(x,\bar q)\geq\Pi_e(x,q)$. It follows that the target problem has an optimal representative satisfying $I(x,q)=C'(e)$, which induces effort $e$.

By the strict feasibility condition \eqref{eq:ce_strict_feasibility}, the same value function argument used for the effort-inducing problem in the main text yields a supporting multiplier $\lambda^e\geq0$. The Lagrangian is
\[
\E\left[ \sum_{i=1}^n x_i(v)\left(\varphi_i(v_i)+g_i(e)\right) + \sum_{i=1}^n \left(\lambda^e\D_i-\sigma_i(e)\right)q_i(v) \right] -\lambda^e C'(e).
\]
For fixed $x_i(v)$, the bonus-expenditure component is maximized by solving
\[
\max_{0\leq q\leq B_i x_i(v)} \left(\lambda^e\D_i-\sigma_i(e)\right)q.
\]
The maximized value per unit of allocation is
\[
\chi_i^e(\lambda^e) = \max_{b\in[0,B_i]} \left(\lambda^e\D_i-\sigma_i(e)\right)b.
\]
Substituting this value into the Lagrangian gives the allocation objective
\[
\sum_{i=1}^n x_i(v) \left( \varphi_i(v_i)+g_i(e)+\chi_i^e(\lambda^e) \right).
\]
Thus, the Lagrangian is maximized pointwise by assigning the object to a bidder with the highest nonnegative continuous effort shadow score.

Finally, by Assumption~\ref{ass:regularity}, $\varphi_i(v_i)$ is strictly increasing in $v_i$. The remaining terms in $S_i^e(v_i\mid \lambda^e)$ do not depend on bidder $i$'s report. As a result, the allocation is monotone in each bidder's report and is implementable in dominant strategies by the corresponding critical value payment rule.
\end{proof}

Corollary~\ref{prop:continuous_effort_score} shows that the baseline shadow-score logic survives with continuous effort. Conditional on a target effort level, the optimal auction again ranks bidders by virtual values adjusted for the shadow value of relaxing the executor's incentive condition. The multiplier $\lambda^e$ is now the shadow value of the executor's first-order condition for the target effort level, rather than the shadow value of a binary effort constraint.

The seller can then choose among feasible target effort levels. Let $V^{\mathrm{ce}}(e)$ denote the value of the target problem characterized in Corollary~\ref{prop:continuous_effort_score}. The optimal continuous effort mechanism chooses an effort level
\[
e^\ast\in\argmax_{e\in[0,\bar e]} V^{\mathrm{ce}}(e),
\]
with the usual one-sided incentive conditions at the boundaries. At $e=0$, the condition is $I(x,q)\leq C'(0)$. When $C'(0)=0$, this is the no-effort case with no incentive bonuses. At $e=\bar e$, the condition is $I(x,q)\geq C'(\bar e)$. For every interior optimum, the allocation is a continuous effort shadow-score auction of the form characterized above.

\subsection{Private certification}
\label{app:private_certification}

The certification analysis in the main text assumes that the certified signal is public. This appendix shows that little changes when certification produces a signal observed by the seller but not disclosed to bidders before they bid, provided that the signal is relevant for execution but not informative about values. This case isolates the informational role of certification for execution incentives. If the seller privately observes a signal that is also informative about bidders' values, the standard revenue identity used below need not apply directly, because bidders' beliefs about the seller's signal may vary with their values.

This appendix assumes full commitment to the signal-contingent mechanism. The seller commits before observing the certified signal to the allocation rule, payment rule, and execution reward schedule that will be used after each signal realization. Equivalently, the signal can be interpreted as a hard certified record that can be used by the mechanism. If the seller instead privately observed a soft signal and could choose how to disclose or use it after bids were submitted, an additional incentive problem for the seller would arise. That information design problem is outside the scope of this appendix.

Let a certification technology $\mathcal C$ generate a signal $s_i\in S_i$ for each bidder $i$. The signal is observed by the seller before the auction, but is not observed by bidders when they submit their reports. Bidders know the certification technology and the mechanism. The seller commits to a mechanism that may condition allocations, bidder payments, and execution rewards on the realized signal profile $s=(s_1,\ldots,s_n)$. The signal is assumed to be independent of bidders' values. Thus, bidder values remain independently distributed according to the baseline distributions $F_i$, and the relevant virtual value is still
\[
\varphi_i(v_i) = v_i-\frac{1-F_i(v_i)}{f_i(v_i)}.
\]

The certified signal affects execution parameters. If bidder $i$ receives the object and the seller's certified signal is $s_i$, let $\Delta_i^{PC}(s_i)$, $\sigma_i^{PC}(s_i)$, and $g_i^{PC}(s_i)$ denote the marginal effect of execution effort, the success probability under effort, and the seller's non-bonus execution payoff. The bonus cap $B_i$ is public and does not depend on the private certified signal.

A signal-contingent direct mechanism consists of allocation rules $x_i(v,s)$, bidder payments $p_i(v,s)$, and promised bonus expenditures $q_i(v,s)$, with
\[
\sum_{i=1}^n x_i(v,s)\leq1,
\]
\[
0\leq x_i(v,s)\leq1,
\]
and
\[
0\leq q_i(v,s)\leq B_i x_i(v,s)
\]
for every $i$, $v$, and $s$. The execution incentive generated by $(x,q)$ is
\[
I^{PC}(x,q) \coloneqq  \E\left[ \sum_{i=1}^n \Delta_i^{PC}(s_i)q_i(v,s) \right],
\]
where the expectation is over values and certified signals. The seller private certification problem is strictly feasible if
\begin{equation}
\exists\,(\tilde x,\tilde q) \;\text{ such that }\; \sum_{i=1}^n \tilde x_i(v,s)\leq1,\quad 0\leq \tilde q_i(v,s)\leq B_i\tilde x_i(v,s), \;\text{ and }\; I^{PC}(\tilde x,\tilde q)>\psi.
\label{eq:private_certification_slater}
\end{equation}

For each $\lambda\geq0$, define the private certification shadow value of assigning the object to bidder $i$ when the seller observes signal $s_i$ as
\[
\chi_i^{PC}(s_i\mid\lambda) \coloneqq  B_i\max\{\lambda\Delta_i^{PC}(s_i)-\sigma_i^{PC}(s_i),0\}.
\]
The corresponding signal-contingent shadow score is
\[
S_i^{PC}(v_i,s_i\mid\lambda) \coloneqq  \varphi_i(v_i)+g_i^{PC}(s_i)+\chi_i^{PC}(s_i\mid\lambda).
\]
For each bidder $i$, define
\[
M_i^{PC}\left(v_{-i},s\mid\lambda\right) \coloneqq  \max\left\{ 0, \max_{j\neq i}S_j^{PC}(v_j,s_j\mid\lambda) \right\}.
\]

The next corollary shows that, under regularity and strict feasibility, the optimal effort-inducing mechanism is a signal-contingent shadow-score auction, with critical value payments computed conditional on the seller's certified signal.

\begin{corollary}
\label{prop:private_certification}
Suppose that Assumption~\ref{ass:regularity} and \eqref{eq:private_certification_slater} hold. Then, there exists an optimal effort-inducing mechanism and a multiplier $\lambda^{PC}\geq0$ such that the object is assigned only to bidders with maximal nonnegative private certification shadow score,
\[
x_i^{PC}(v,s)>0 \implies S_i^{PC}(v_i,s_i\mid\lambda^{PC}) = \max\left\{ 0, \max_{j\in N}S_j^{PC}(v_j,s_j\mid\lambda^{PC}) \right\}.
\]
The associated bonus expenditure satisfies
\[
q_i^{PC}(v,s) \in \argmax_{0\leq q\leq B_i x_i^{PC}(v,s)} \left( \lambda^{PC}\Delta_i^{PC}(s_i)-\sigma_i^{PC}(s_i) \right)q.
\]
The allocation is dominant-strategy implementable by a signal-contingent critical value payment rule. Bidder $i$ wins if and only if
\[
S_i^{PC}\left(v_i,s_i\mid\lambda^{PC}\right) \geq M_i^{PC}\left(v_{-i},s\mid\lambda^{PC}\right).
\]
When bidder $i$ wins, its payment is the critical value
\[
p_i(v_{-i},s) = \inf\left\{ z\in[\underline v_i,\overline v_i] \mid S_i^{PC}\left(z,s_i\mid\lambda^{PC}\right) \geq M_i^{PC}\left(v_{-i},s\mid\lambda^{PC}\right) \right\},
\]
where the critical value is kept within bidder $i$'s value support.
\end{corollary}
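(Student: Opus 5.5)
The plan is to reduce the private-certification problem to the baseline argument of Theorem~\ref{thm:execution_score}, applied signal profile by signal profile, with a single ex ante multiplier. The reduction rests on the assumption that $s$ is independent of values. Under that assumption the bidder-side incentive constraints can be written for each fixed $s$, so the seller-private signal plays the same role as a public one.

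\textbf{Revenue identity.} Consider any signal-contingent mechanism $(x,p,q)$ to which the seller commits. Bidder $i$ does not observe $s$, so its interim allocation and payment are
\[
X_i(v_i)=\E_{v_{-i},s}[x_i(v,s)], \qquad P_i(v_i)=\E_{v_{-i},s}[p_i(v,s)].
\]
Because $s$ is independent of $v$, bidder $i$'s beliefs about $(v_{-i},s)$ do not vary with $v_i$. Bayesian incentive compatibility and individual rationality therefore give the standard envelope formula for $X_i$. They also give the bound
\[
\E\Big[\sum_i p_i\Big]\leq \E\Big[\sum_i \varphi_i(v_i)x_i(v,s)\Big],
\]
with the expectation taken over $(v,s)$ jointly. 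This is the one place where independence is used. It is also the step I expect to require the most care: I must check that the usual argument goes through with expectations taken over $(v_{-i},s)$. That is, $\varphi_i$ must be the unconditional virtual value, and the lowest type's utility must be set to zero.

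\textbf{Lagrangian and multiplier.} With this bound, the relaxed effort-inducing problem is the following linear program in $(x,q)$. The objective is
\[
\E\Big[\sum_i x_i(v,s)\big(\varphi_i(v_i)+g_i^{PC}(s_i)\big)-\sum_i \sigma_i^{PC}(s_i)q_i(v,s)\Big].
\]
The constraints are feasibility and $I^{PC}(x,q)\geq\psi$. I then repeat the value-function argument of Lemma~\ref{lem:multiplier} with $V(z)$ defined over signal-contingent pairs. $V$ is concave because randomization mixes payoffs and incentives linearly. Condition \eqref{eq:private_certification_slater} places $\psi$ in the interior of the feasible range of incentive levels. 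Together these yield $\lambda^{PC}\geq0$ supporting an optimal pair, with complementary slackness. For this fixed $\lambda^{PC}$, I maximize pointwise in $(v,s)$ as in the proof of Theorem~\ref{thm:execution_score}. Optimizing $q_i$ subject to $0\leq q_i\leq B_ix_i$ gives the stated argmax and per-unit value $\chi_i^{PC}(s_i\mid\lambda^{PC})$. The allocation therefore goes to a bidder with maximal nonnegative $S_i^{PC}$.

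\textbf{Implementation.} Fix $s$ and $v_{-i}$. By Assumption~\ref{ass:regularity}, $S_i^{PC}(v_i,s_i\mid\lambda^{PC})$ is strictly increasing in $v_i$, and the other terms do not depend on $i$'s report. With a fixed priority tie-break, $x_i(\cdot,v_{-i},s)$ is therefore a monotone threshold rule in $v_i$. I set the payment equal to the stated critical value for each $(v_{-i},s)$. Truthful reporting is then a best response for every realization of $(v_{-i},s)$, which gives dominant-strategy implementation even though bidders never see $s$. For each realization, the lowest type gets zero utility. Taking expectations over $(v_{-i},s)$, expected payments equal expected virtual surplus, so the relaxed bound is attained. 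Ties occur with probability zero because the $F_i$ are atomless, so the tie-breaking rule does not affect this conclusion.
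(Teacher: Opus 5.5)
Your proposal is correct and follows the paper's proof essentially step for step: the revenue bound via independence of $s$ from values, the value-function multiplier argument of Lemma~\ref{lem:multiplier} under \eqref{eq:private_certification_slater}, pointwise maximization of the Lagrangian yielding $\chi_i^{PC}(s_i\mid\lambda^{PC})$ and the bonus argmax, and realization-by-realization monotonicity with critical values giving dominant-strategy implementation even though bidders do not see $s$. Your explicit interim formulation of the revenue bound, with expectations over $(v_{-i},s)$, is a slightly more careful version of the paper's one-line appeal to independence.
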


\begin{proof}
Because the certified signal is independent of bidder values, the bidder side of the problem is the baseline single-dimensional private value problem. The fact that the seller observes $s$, and bidders do not, does not change the virtual value. For any incentive compatible and individually rational mechanism,
\[
\E\left[ \sum_{i=1}^n p_i(v,s) \right] \leq \E\left[ \sum_{i=1}^n \varphi_i(v_i)x_i(v,s) \right],
\]
with equality under the envelope payment formula when the lowest type obtains zero utility.

Conditional on inducing execution effort, the seller's expected virtual payoff is bounded above by
\[
\E\left[ \sum_{i=1}^n x_i(v,s) \left( \varphi_i(v_i)+g_i^{PC}(s_i) \right) - \sum_{i=1}^n \sigma_i^{PC}(s_i)q_i(v,s) \right],
\]
subject to feasibility and the execution incentive constraint
\[
\E\left[ \sum_{i=1}^n \Delta_i^{PC}(s_i)q_i(v,s) \right]\geq\psi.
\]

Attach multiplier $\lambda\geq0$ to the execution incentive constraint. The Lagrangian is
\[
\E\left[ \sum_{i=1}^n x_i(v,s) \left( \varphi_i(v_i)+g_i^{PC}(s_i) \right) + \sum_{i=1}^n \left( \lambda\Delta_i^{PC}(s_i)-\sigma_i^{PC}(s_i) \right)q_i(v,s) \right] -\lambda\psi.
\]
For fixed $x_i(v,s)$, the bonus-expenditure term is maximized by solving
\[
\max_{0\leq q\leq B_i x_i(v,s)} \left( \lambda\Delta_i^{PC}(s_i)-\sigma_i^{PC}(s_i) \right)q.
\]
The maximized value per unit of allocation is
\[
B_i \max\left\{ \lambda\Delta_i^{PC}(s_i)-\sigma_i^{PC}(s_i),0 \right\} = \chi_i^{PC}(s_i\mid\lambda).
\]
Substituting this value into the Lagrangian gives the allocation objective
\[
\sum_{i=1}^n x_i(v,s)S_i^{PC}(v_i,s_i\mid\lambda).
\]
Thus, for a fixed multiplier, the Lagrangian is maximized pointwise by assigning the object to a bidder with the highest nonnegative private certification shadow score.

The strict-feasibility condition \eqref{eq:private_certification_slater} gives a supporting multiplier $\lambda^{PC}\geq0$ by the same value function argument as Lemma~\ref{lem:multiplier}. Hence, the pointwise maximizer at $\lambda^{PC}$, together with the stated bonus-expenditure rule, is optimal conditional on inducing execution effort.

It remains to verify implementability. For fixed $s$, $v_{-i}$, and $\lambda^{PC}$, bidder $i$'s score is
\[
S_i^{PC}\left(v_i,s_i\mid\lambda^{PC}\right) = \varphi_i(v_i) + g_i^{PC}(s_i) + \chi_i^{PC}\left(s_i\mid\lambda^{PC}\right).
\]
By Assumption~\ref{ass:regularity}, $\varphi_i(v_i)$ is strictly increasing in $v_i$, and all other terms are independent of bidder $i$'s report. Therefore, bidder $i$'s allocation is nondecreasing in its report for every realized signal profile $s$ and every $v_{-i}$. The signal-contingent critical value payment rule implements the allocation in dominant strategies. Since truthful reporting is optimal for every realized signal profile, it is also optimal even though bidders do not observe the signal before reporting.
\end{proof}

Corollary~\ref{prop:private_certification} shows that public disclosure of the certified signal is not essential when certification produces execution information rather than value information.\footnote{ Assumption~\ref{ass:regularity} is needed because the certified signal is private to the seller and the result uses a signal-contingent score rule. For each realized signal, monotonicity of $\varphi_i(v_i)$ ensures that bidder $i$'s score is increasing in its own report, so the allocation is implementable by signal-contingent critical values. } The seller can use the private certified signal to adjust scores and execution rewards, and the resulting allocation remains truthful because each signal-contingent score is increasing in the bidder's value. The difference from public certification is informational, as bidders may not know their realized score adjustment when they bid. The mechanism is nevertheless dominant-strategy implementable because the critical value rule is valid for every signal realization.

Let $V^{PC}(\mathcal C)$ denote the seller's gross value from the optimal effort-inducing mechanism under private certification technology $\mathcal C$, before subtracting certification costs. As before, say that $\mathcal C'$ is more informative than $\mathcal C$ if the signal generated by $\mathcal C$ can be obtained by garbling the signal generated by $\mathcal C'$. If $\mathcal C'$ is more informative than $\mathcal C$, then $V^{PC}(\mathcal C')\geq V^{PC}(\mathcal C)$. The seller chooses a private certification technology solving
\[
\max_{\mathcal C} \left\{ V^{PC}(\mathcal C)-K(\mathcal C) \right\}.
\]
If $\mathcal C'$ is more informative than $\mathcal C$, the seller can ignore the additional signal realizations generated by $\mathcal C'$ and replicate the mechanism that is optimal under $\mathcal C$. Therefore, the gross value under $\mathcal C'$ is at least the gross value under $\mathcal C$. The optimal certification choice follows by subtracting the ex ante cost of certification.


\addcontentsline{toc}{section}{References}
\bibliographystyle{aer}
\bibliography{auction_execution_refs}

@article{McAfeeMcMillan1986,
  author  = {McAfee, R. Preston and McMillan, John},
  title   = {Bidding for Contracts: A Principal-Agent Analysis},
  journal = {RAND Journal of Economics},
  year    = {1986},
  volume  = {17},
  number  = {3},
  pages   = {326--338}
}

@article{LaffontTirole1987,
  author  = {Laffont, Jean-Jacques and Tirole, Jean},
  title   = {Auctioning Incentive Contracts},
  journal = {Journal of Political Economy},
  year    = {1987},
  volume  = {95},
  number  = {5},
  pages   = {921--937},
  doi     = {10.1086/261496}
}

@article{Branco1997,
  author  = {Branco, Fernando},
  title   = {The Design of Multidimensional Auctions},
  journal = {RAND Journal of Economics},
  year    = {1997},
  volume  = {28},
  number  = {1},
  pages   = {63--81},
  doi     = {10.2307/2555940}
}

@article{ArozamenaCantillon2004,
  author  = {Arozamena, Leandro and Cantillon, Estelle},
  title   = {Investment Incentives in Procurement Auctions},
  journal = {Review of Economic Studies},
  year    = {2004},
  volume  = {71},
  number  = {1},
  pages   = {1--18},
  doi     = {10.1111/0034-6527.00273}
}

@article{EsoSzentes2007,
  author  = {Es{\H{o}}, P{\'e}ter and Szentes, Bal{\'a}zs},
  title   = {Optimal Information Disclosure in Auctions and the Handicap Auction},
  journal = {Review of Economic Studies},
  year    = {2007},
  volume  = {74},
  number  = {3},
  pages   = {705--731},
  doi     = {10.1111/j.1467-937X.2007.00442.x}
}

@article{ChakrabortyKhalilLawarree2021,
  author  = {Chakraborty, Indranil and Khalil, Fahad and Lawarr{\'e}e, Jacques},
  title   = {Competitive Procurement with Ex Post Moral Hazard},
  journal = {RAND Journal of Economics},
  year    = {2021},
  volume  = {52},
  number  = {1},
  pages   = {179--206},
  doi     = {10.1111/1756-2171.12366}
}

@article{GershkovMoldovanuStrackZhang2021,
  author  = {Gershkov, Alex and Moldovanu, Benny and Strack, Philipp and Zhang, Mengxi},
  title   = {A Theory of Auctions with Endogenous Valuations},
  journal = {Journal of Political Economy},
  year    = {2021},
  volume  = {129},
  number  = {4},
  pages   = {1011--1051},
  doi     = {10.1086/712735}
}

@article{AskerCantillon2008,
  author  = {Asker, John and Cantillon, Estelle},
  title   = {Properties of Scoring Auctions},
  journal = {RAND Journal of Economics},
  year    = {2008},
  volume  = {39},
  number  = {1},
  pages   = {69--85}
}

@article{AskerCantillon2010,
  author  = {Asker, John and Cantillon, Estelle},
  title   = {Procurement When Price and Quality Matter},
  journal = {RAND Journal of Economics},
  year    = {2010},
  volume  = {41},
  number  = {1},
  pages   = {1--34}
}

@article{Che1993,
  author  = {Che, Yeon-Koo},
  title   = {Design Competition through Multidimensional Auctions},
  journal = {RAND Journal of Economics},
  year    = {1993},
  volume  = {24},
  number  = {4},
  pages   = {668--680}
}

@article{Holmstrom1979,
  author  = {Holmstr{\"o}m, Bengt},
  title   = {Moral Hazard and Observability},
  journal = {Bell Journal of Economics},
  year    = {1979},
  volume  = {10},
  number  = {1},
  pages   = {74--91}
}

@article{Innes1990,
  author  = {Innes, Robert D.},
  title   = {Limited Liability and Incentive Contracting with Ex-Ante Action Choices},
  journal = {Journal of Economic Theory},
  year    = {1990},
  volume  = {52},
  number  = {1},
  pages   = {45--67}
}

@article{JehielMoldovanuStacchetti1996,
  author  = {Jehiel, Philippe and Moldovanu, Benny and Stacchetti, Ennio},
  title   = {How (Not) to Sell Nuclear Weapons},
  journal = {American Economic Review},
  year    = {1996},
  volume  = {86},
  number  = {4},
  pages   = {814--829}
}

@article{JehielMoldovanuStacchetti1999,
  author  = {Jehiel, Philippe and Moldovanu, Benny and Stacchetti, Ennio},
  title   = {Multidimensional Mechanism Design for Auctions with Externalities},
  journal = {Journal of Economic Theory},
  year    = {1999},
  volume  = {85},
  number  = {2},
  pages   = {258--293}
}

@article{Myerson1981,
  author  = {Myerson, Roger B.},
  title   = {Optimal Auction Design},
  journal = {Mathematics of Operations Research},
  year    = {1981},
  volume  = {6},
  number  = {1},
  pages   = {58--73},
  doi     = {10.1287/moor.6.1.58}
}

@article{RileySamuelson1981,
  author  = {Riley, John G. and Samuelson, William F.},
  title   = {Optimal Auctions},
  journal = {American Economic Review},
  year    = {1981},
  volume  = {71},
  number  = {3},
  pages   = {381--392}
}

@book{Sundaram1996,
  author    = {Sundaram, Rangarajan K.},
  title     = {A First Course in Optimization Theory},
  publisher = {Cambridge University Press},
  year      = {1996}
}

@book{LaffontTirole1993,
  author    = {Laffont, Jean-Jacques and Tirole, Jean},
  title     = {A Theory of Incentives in Procurement and Regulation},
  publisher = {MIT Press},
  year      = {1993},
  address   = {Cambridge, MA}
}

@misc{FHWA_AB_Bidding,
  author       = {{Federal Highway Administration}},
  title        = {Cost-plus-Time (A+B) Bidding},
  year         = {2023},
  howpublished = {Construction Program Guide, U.S. Department of Transportation},
  note         = {Updated November 7, 2023}
}

@misc{FAR_15305,
  author       = {{Federal Acquisition Regulation}},
  title        = {FAR 15.305: Proposal Evaluation},
  year         = {2026},
  howpublished = {Acquisition.gov}
}

@misc{FAR_421501,
  author       = {{Federal Acquisition Regulation}},
  title        = {FAR 42.1501: Contractor Performance Information},
  year         = {2026},
  howpublished = {Acquisition.gov}
}


\end{document}